\pdfoutput=1
\documentclass[12pt,reqno]{amsart}

\numberwithin{equation}{section}
\usepackage[tt=false]{libertine}
\usepackage{mathtools}

\usepackage{amssymb,mathrsfs}
\usepackage[varbb]{newpxmath}

\let\savedbigtimes\bigtimes
\let\bigtimes\relax
\usepackage{mathabx} 
\let\bigtimes\savedbigtimes

\usepackage[margin=1in]{geometry}
\usepackage{graphicx}
\usepackage{enumerate}
\usepackage{bbm}
\usepackage{hyperref,color}
\usepackage[capitalize,nameinlink]{cleveref}
\usepackage[dvipsnames]{xcolor}
\hypersetup{
	colorlinks=true,
	pdfpagemode=UseNone,
    citecolor=OliveGreen,
    linkcolor=NavyBlue,
    urlcolor=black,
	pdfstartview=FitW
}
\usepackage{appendix}
\crefname{appsec}{Appendix}{Appendices}
\usepackage{tikz}
\usepackage{bm}
\usepackage{mathtools}

\newtheorem{theorem}{Theorem}[section]

\newtheorem{lemma}[theorem]{Lemma}

\theoremstyle{definition}
\newtheorem{definition}[theorem]{Definition}

\newtheorem*{assumption*}{Assumption}
\newtheorem{remark}[theorem]{Remark}

\crefname{lemma}{Lemma}{Lemmas}
\crefname{theorem}{Theorem}{Theorems}
\crefname{definition}{Definition}{Definitions}
\crefname{fact}{Fact}{Facts}
\crefname{claim}{Claim}{Claims}
\crefname{proposition}{Proposition}{Propositions}

\newcommand{\E}{\mathbb{E}}

\newcommand{\ceil}[1]{\left\lceil #1 \right\rceil}
\newcommand{\floor}[1]{\left\lfloor #1 \right\rfloor}

\renewcommand{\epsilon}{\varepsilon}

\newcommand{\N}{\mathbb{N}}

\renewcommand{\P}{\mathbb{P}}

\newcommand{\PP}{\mathbb{P}}

\newcommand{\beq}{\begin{equation}}
\newcommand{\eeq}{\end{equation}}






\newcommand{\bX}{\bm{X}}

\usepackage{appendix}
\crefname{appsec}{Appendix}{Appendices}

\linespread{1.05}

\begin{document}

\title[Sharp Thresholds Imply Circuit Lower Bounds]{Sharp Thresholds Imply Circuit Lower Bounds:\\ from random 2-SAT to Planted Clique.}

\author[D. Gamarnik, E. Mossel, I. Zadik]{David Gamarnik$^\circ,^\dagger$, Elchanan Mossel$^\star$, and Ilias Zadik$^\ddag,^\star$}
\thanks{\raggedright$^\circ$Institute for Data, Systems, and Society, MIT;
$^\dagger$Operations Research Center \& Sloan School of Management, MIT.
$^\star$Department of Mathematics, MIT. $^\ddag$ Department of Statistics and Data Science, Yale University;\\
Email: \texttt{\{gamarnik,elmos\}@mit.edu, ilias.zadik@yale.edu}}

\begin{abstract}%
We show that sharp thresholds for Boolean functions directly imply average-case circuit lower bounds. More formally we show that any Boolean function exhibiting a sharp enough threshold at an \emph{arbitrary} threshold cannot be computed by Boolean circuits of bounded depth and polynomial size. Our results hold in the average case, in the sense that sharp thresholds imply that for any bounded depth polynomial-size circuit that attempts to compute the function the following holds: for most $p$ near the threshold if the input is drawn from the product measure with bias $p$, the circuit disagrees with the function with constant probability. Interestingly, we also prove a \emph{partial converse result} that if a Boolean function corresponding to a monotone graph property does not exhibit a sharp threshold, then it can be computed by a Boolean circuit of bounded depth and polynomial size on average.

Our general result implies new average-case bounded depth circuit lower bounds in a variety of settings.
\begin{itemize}
    \item[(a)]($k$-cliques) For $k=\Theta(n)$, we prove that any circuit of depth $d$ deciding the presence of a size $k$ clique in a random graph requires exponential-in-$n^{\Theta(1/d)}$ size. To the best of our knowledge, this is the first average-case exponential size lower bound for bounded depth circuits solving the fundamental $k$-clique problem (for any $k=k_n$).
    \item[(b)](random 2-SAT) We prove that any circuit of depth $d$ deciding the satisfiability of a random 2-SAT formula requires exponential-in-$n^{\Theta(1/d)}$ size. To the best of our knowledge, this is the first bounded depth circuit lower bound for random $k$-SAT for any value of $k \geq 2.$ Our results also provide the first rigorous lower bound in agreement with a conjectured, but debated, ``computational hardness'' of random $k$-SAT around its satisfiability threshold. 

    \item[(c)](Statistical estimation -- planted $k$-clique) Over the recent years, multiple statistical estimation problems have also been proven to exhibit a ``statistical'' sharp threshold, called the All-or-Nothing (AoN) phenomenon. We show that AoN also implies circuit lower bounds for statistical problems. As a simple corollary of that, we prove that any circuit of depth $d$ that solves to information-theoretic optimality a ``dense'' variant of the celebrated planted $k$-clique problem requires exponential-in-$n^{\Theta(1/d)}$ size. 

\end{itemize}

Finally, we mention that our result is in conceptual agreement with a plethora of papers from statistical physics over the last decades suggesting that the presence of a discontinuous phase transition is connected with a certain form of computational hardness.

\end{abstract}


\maketitle

\date{\today}

\newpage 

\section{Introduction}

Boolean circuits is a formal model of computation which along with Turing machines formalizes the notion of algorithmic decision making. 
Algorithmic complexity is studied in part by analyzing the possible size and depth of a Boolean circuit $\mathcal{C}: \{0,1\}^N \rightarrow \{0,1\}$ (with AND, OR and NOT gates) solving concrete 
decision algorithmic tasks, that is deciding whether the Boolean input vector of the circuit satisfies a property of interest (e.g., the property of a graph containing a clique). This pursuit is greatly motivated by powerful complexity implications of such lower bounds, e.g., if no polynomial-size Boolean circuit can solve some specific NP-hard decision problem then $\mathcal{P} \not = \mathcal{NP}.$ 

Unfortunately, achieving general circuit lower bounds has been proven to be significantly mathematically challenging. For this reason, researchers have focused on establishing lower bounds for circuits that satisfy certain constraints. A notable and very fruitful direction has been the study of monotone circuits (where NOT gates are not allowed); a series of strong size lower bounds have been established for the size of monotone circuits deciding a number of different tasks (e.g., \cite{Razborov-circuit, alon1987monotone}). Similarly, a fruitful direction has been the study of $\mathcal{AC}_0$,
which is the class of bounded depth (i.e, constant depth) polynomial-size circuits. While $\mathcal{AC}_0$ remains less understood than the class of monotone circuits, a series of notable lower bounds for $\mathcal{AC}_0$ has been established, some of which will be discussed below. 

In parallel, a superficially independent line of important work on Boolean functions $f: \{0,1\}^N \rightarrow \{0,1\}$ studied the probability that a $p$-biased random input $\bX \sim \P_p=\mathrm{Bern}(p)^{\otimes N}$  satisfies $f(\bX)=1.$ It has been understood that many Boolean functions exhibit a striking discontinuous phase transition, also known as a \emph{sharp threshold} (see e.g., \cite{Friedgut}): For a threshold $p_c=(p_c)_N \in (0,1)$ and a window size $\epsilon=\epsilon_N \in (0,1), \epsilon_N=o(1)$ it holds 

\[ \lim_N \E_{p}f= \begin{cases}
  0 & \mbox{ if $ p=(1-\epsilon)p_c$}\\
  1 & \mbox{ if $ p=(1+\epsilon)p_c $}
  \end{cases}
  \]Here and everywhere by $\E_p f$ we denote the expectation when $\bX \sim \P_p,$ i.e. $\E_p f=\E_{\bX \sim \P_p} f(\bX).$
  Examples that exhibit such discontinuity span a large set of Boolean functions, including the ones deciding the connectivity of a graph,  the presence of a $k$-clique, and the satisfiability of a $k$-SAT formula.

In this work, we connect the two above lines of work, by establishing a connection between the existence of a sharp threshold for a Boolean function and its circuit complexity. In particular, our main result is the following generic statement.
\begin{theorem}[Informal version of Theorem \ref{thm:main_1}]\label{thm:informal}
    For some small constant $c>0$, any Boolean circuit $\mathcal{C}$ of depth $d \leq c\log N/\log \log N $  which exhibits a sharp threshold with window size $\epsilon=\epsilon_N$ must have size at least $\exp(1/\epsilon_N^{1/d}).$
\end{theorem}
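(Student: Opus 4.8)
The plan is to sandwich the rate of change of the acceptance probability $p\mapsto \E_p[\mathcal C]$ between a lower bound coming from the sharp threshold and an upper bound coming from $\mathcal C$ being small and shallow; the two are only compatible once $\mathcal C$ is large. For the lower bound, write $p_\pm=(1\pm\epsilon)p_c$; by hypothesis $\E_{p_-}[\mathcal C]\to 0$ and $\E_{p_+}[\mathcal C]\to 1$, so for large $N$ we have $\E_{p_+}[\mathcal C]-\E_{p_-}[\mathcal C]\ge\tfrac12$. Since $p\mapsto \E_p[\mathcal C]=\sum_x\mathcal C(x)p^{|x|}(1-p)^{N-|x|}$ ($|x|$ the Hamming weight) is a polynomial, the mean value theorem gives some $p^\star\in[p_-,p_+]$ with $\bigl|\tfrac{d}{dp}\E_p[\mathcal C]\bigr|_{p^\star}\ge\tfrac{1/2}{p_+-p_-}=\tfrac{1}{4\epsilon p_c}$. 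Differentiating this polynomial and conditioning on one coordinate at a time yields the identity $\tfrac{d}{dp}\E_p[\mathcal C]=\sum_{i=1}^N\E_{\bm x_{-i}\sim\P_p}[\mathcal C(\bm x_{-i},1)-\mathcal C(\bm x_{-i},0)]$, whose absolute value is at most the $p$-biased total influence $\MI_p(\mathcal C)=\sum_i\P_p[\mathcal C(\bm x_{-i},0)\neq\mathcal C(\bm x_{-i},1)]$ (an equality when $\mathcal C$ is monotone, by Margulis--Russo, but monotonicity is not needed). Hence $\MI_{p^\star}(\mathcal C)\ge\tfrac{1}{4\epsilon p_c}$; replacing $\mathcal C(x)$ by $1-\mathcal C(\bar x)$ if necessary, which changes neither size nor depth and relocates the threshold to $1-p_c$, we may assume $p_c\le\tfrac12$, so that $p^\star=(1\pm o(1))p_c$ and $\min(p^\star,1-p^\star)\cdot\MI_{p^\star}(\mathcal C)\ge\tfrac{1-o(1)}{4\epsilon}$.

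For the upper bound I would invoke the $p$-biased form of H{\aa}stad's switching lemma and the resulting $p$-biased analogue of Boppana's average-sensitivity theorem: any Boolean function computed by a depth-$d$, size-$S$ circuit satisfies $\MI_p(\mathcal C)\le (C\log S)^{O(d)}/\min(p,1-p)$ for an absolute constant $C$ and every $p\in(0,1)$. The factor $\min(p,1-p)^{-1}$ is genuinely necessary — $\mathrm{OR}_N$ already has $\MI_{1/N}(\mathrm{OR}_N)=\Theta(N)$ — and it enters because one now runs the switching lemma with random restrictions that fix free variables to $p$-biased rather than uniform bits, which inflates the effective term width by a factor of order $\log(1/\min(p,1-p))$. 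The constraint $d\le c\log N/\log\log N$ with $c$ small ensures that for polynomial-size $\mathcal C$ the numerator $(C\log S)^{O(d)}$ is only a tiny power of $N$, so this bound is far from vacuous.

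Combining the two estimates at $p=p^\star$,
\[
\frac{1-o(1)}{4\epsilon}\ \le\ \min(p^\star,1-p^\star)\cdot\MI_{p^\star}(\mathcal C)\ \le\ (C\log S)^{O(d)},
\]
so $\log S\ge(\Omega(1/\epsilon))^{1/O(d)}$ and hence $S\ge\exp\bigl(\epsilon^{-1/O(d)}\bigr)$, which is the claimed $\exp(1/\epsilon^{1/d})$ up to the absolute constant absorbed into $O(d)$. When $\min(p_c,1-p_c)$ is bounded away from $0$ this is exactly the stated bound; in general the complementation step above converts the intrinsic window width $2\epsilon p_c$ into a relative window, and the precise version of the theorem retains the corresponding $p_c$-dependent factor, which is what the applications (a)--(c) use once they substitute their explicit thresholds and window sizes.

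The one step demanding real work is the $p$-biased average-sensitivity bound: its uniform-measure case ($p=\tfrac12$) is classical, but extending it to all biases — in particular down to $p_c$ of order $1/n$ for random $2$-SAT and up to $p_c=1-\Theta(1/n)$ for the dense planted clique — requires re-deriving the switching lemma under $p$-biased restrictions while carefully tracking how the width and restriction parameters scale with $\min(p,1-p)$. The remaining ingredients — the mean value theorem, the coordinatewise differentiation identity bounding $\tfrac{d}{dp}\E_p[\mathcal C]$ by $\MI_p(\mathcal C)$, and the complementation reduction to $p_c\le\tfrac12$ — are routine.
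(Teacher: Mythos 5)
Your overall architecture is exactly the paper's: mean value theorem to turn the sharp threshold into a derivative lower bound of order $1/(\epsilon p_c)$, the one-sided Russo--Margulis inequality to upper-bound that derivative by total influence, complementation to reduce to $p_c\le 1/2$, and an LMN/Boppana-type influence bound for small bounded-depth circuits to close the loop. The gap is that the entire difficulty of the theorem is concentrated in the one step you defer: the claimed $p$-biased average-sensitivity bound $\MI_p(\mathcal{C})\le (C\log S)^{O(d)}/\min(p,1-p)$, uniformly in $p$. This is not an off-the-shelf result. The known $p$-biased extensions of LMN (e.g.\ the one the paper actually invokes, \cite[Lemma 9]{furst1991improved}) give Fourier concentration of the form $S\,2^{-p(1-p)k^{1/(d+2)}/5}$, i.e.\ an influence bound of order $\bigl(\log S/(p(1-p))\bigr)^{O(d)}$ with the bias sitting \emph{inside the base of the $d$-th power}. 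For $p_c=\Theta(1/n)$ (random 2-SAT) that bound is weaker than yours by a factor $n^{\Theta(d)}$ and would not yield the theorem -- this degradation is precisely the obstacle the paper identifies in its introduction as the reason the Kalai--Safra argument had not previously been carried out. So as written, your proof rests on an unproved lemma whose proof is the actual content of the result.

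The paper's resolution is worth contrasting with your proposed one. Rather than re-deriving the switching lemma under $p$-biased restrictions (your suggestion, which is plausible but heavy and left entirely unexecuted), the paper composes $\mathcal{C}$ with a depth-one ``debiasing'' layer $\Phi$ of $N\lceil\log 1/p_c\rceil$ AND-blocks, so that near-uniform inputs to $\mathcal{C}\circ\Phi$ become $p_c$-biased inputs to $\mathcal{C}$. The sharp threshold of $\mathcal{C}$ at $p_c$ transfers to a sharp threshold of $\mathcal{C}\circ\Phi$ at a \emph{constant} bias $p_1\in[1/2,1/\sqrt2)$, where the standard (constant-bias) LMN/Russo--Margulis machinery applies with no degradation; the cost is only one extra layer of depth and $O(N\log 1/p_c)$ extra gates. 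This is a few-line reduction that in fact \emph{implies} (a version of) the biased influence bound you assert, modulo $\log S$ becoming $\log(S+N\log 1/p_c)$ and the $\log(1/p_c)$ bookkeeping that produces the $(1-p_c)/\log(1/\beta)$ correction in the precise statement. Two smaller remarks: after your complementation step you only ever need the biased bound for $p\le 1/2$, so the planted-clique regime $p_c=1-\Theta(1/n)$ does not require a separate treatment; and your $\MI_p$ is the unnormalized pivotality sum, which differs from the paper's $I_p$ by a factor $p(1-p)$ -- you are internally consistent, but keep this in mind if you try to import the cited lemmas verbatim.
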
Note that our Theorem \ref{thm:informal} applies for sharp thresholds exhibited at any threshold $p_c$. Yet we note that, for this informal presentation of it, we ignored a mild dependency our size lower bound has on the exact location of threshold $p_c$. We direct the reader to Theorem \ref{thm:main_1} for the exact statement.

Our main result is inspired by a discussion of Kalai and Safra in their 20-years old survey \cite{KalaiSafra}, where at a high level they describe an argument why circuits with bounded depth and polynomial size cannot have sharp thresholds. This idea is well motivated by the celebrated Linial-Mansour-Nisan (LMN) Theorem~\cite{linial1993constant}, which states that $\mathcal{AC}_0$ with random inputs have a low degree Fourier expansion, in some appropriate sense. At the same time, another celebrated work of Friedgut~\cite{Friedgut} establishes, conversely, that properties with sharp thresholds cannot be mostly supported on the low degree Fourier terms. This appears already to be giving the desired result stating that $\mathcal{AC}_0$ circuits can't detect properties exhibiting sharp thresholds. The trouble with this argument is that Friedgut's theorem primarily applies to the case when the bias $p=p_N$ is close to zero, while bounds implied by the LMN theorem appear to be  strong enough only when $p$
is constant, and degrade rapidly as $p_N$  converges to zero~\cite{furst1991improved}. Indeed, Kalai and Safra themselves end their high-level argument in \cite{KalaiSafra} by commenting on the LMN-based results needed for their connection, saying that ``\emph{it appears that all these results ... apply ... when $p$ is bounded away from $0$ and $1$.}''

 In our work with Theorem \ref{thm:main_1}  we ``exonerate'' their idea by showing that their suggestion is indeed true, and in fact holds \emph{for all} possible thresholds of the sharp threshold $p_c=(p_c)_N$, even when they tend to $0$ or $1$.
 This is achieved by employing an appropriate modification of a ``de-biasing'' technique on Boolean circuits introduced in~\cite{gamarnik2020low} which works as follows (we direct the reader to e.g., \cite{keller2012simple} for earlier similar results). We first construct for any $p=p_N \in (0,1)$ a polynomial-size ``biasing'' layer $\Phi$ which satisfies for $X \sim \mathrm{Bern}(1/2)^{\otimes N}$, $\Phi(X) \sim \mathrm{Bern}(p)^{\otimes N}.$ Then placing $\Phi$ on top of an arbitrary circuit $\mathcal{C}$ (resulting to a modified circuit $\mathcal{C} \circ \Phi$) we can transform an input of unbiased Bernoulli random variables $X \sim \mathrm{Bern}(1/2)^{\otimes N}$ to $\mathcal{C} \circ \Phi,$ to an input $\Phi(X) \sim \mathrm{Bern}(p)^{\otimes N}$ to the original circuit $\mathcal{C}$. Via this technique, we conclude that if a circuit $\mathcal{C}$ exhibits a sharp threshold at some density (potentially close to zero or one), the modified circuit $\mathcal{C} \circ \Phi$ must exhibit the same sharp threshold but now at constant density (bounded away from zero and one). We then consider any circuit $\mathcal{C}$ with a sharp threshold and appropriately materialize the suggested arguments by Kalai and Safra but importantly on the modified circuit $\mathcal{C} \circ \Phi$ instead.
 This debiasing technique is crucial as it is the key that allows us to obtain new average-case circuit lower bounds for many problems of interest that have threshold close to zero or one, e.g., for deciding the satisfiability of a random 2-SAT formula, where thethreshold is close to zero, and for deciding the existence of a $k=\Theta(n)$-clique in random graphs, where the threshold is close to 1.

 The methodology one can use our main result as a black box to conclude average-case circuit lower bounds is simple and generic. Let us assume some Boolean function $f$ exhibits a sharp threshold of window $\epsilon$ at some threshold $p_c$. Then notice that if some depth $d$ Boolean circuit $\mathcal{C}$ with $d \leq c \log n/\log \log n$ agrees with $f$, with high probability, around the interval $[(1-\epsilon)p_c,(1+\epsilon)p_c]$, then $\mathcal{C}$ must also exhibit a sharp threshold of window $\epsilon$ around $p_c$. In particular, using Theorem \ref{thm:informal}, we can then directly conclude that $\mathcal{C}$ must have size $\exp(\epsilon^{-\Theta(1/d)}).$

As an instructive example, let us consider the case where the Boolean function is the majority function $\mathrm{MAJ}_N$ on $N$ bits. Standard central limit theorem considerations imply that $\mathrm{MAJ}$ has a sharp threshold at $1/2$ with a window size of order $1/\sqrt{N}$. Using Theorem \ref{thm:informal}, we directly conclude (via a new proof) the well-known lower bound, that any depth $d$ Boolean circuit $\mathcal{C}$ with $d \leq c \log N/\log \log N$ agreeing with $\mathrm{MAJ}_N$ on a random input with bias $p \approx 1/2$ must have size $\exp(N^{\Theta(1/d)})$ (see e.g., \cite{Smolensky} for a folklore proof of this fact).

We finally note that in our result we make no attempt to fully optimize the specific constants in front of the depth-dependent term $d$ in our size lower bounds. Moreover, we remind the reader that recent celebrated works (e.g., \cite{rossman2015average}) have been able to obtain optimal depth-size trade-offs by showing that for any constant $d,$ some function computed by a linear-size depth $d$ circuit requires $\exp(n^{\Omega(1/d)})$-size to be computed on average by any $d-1$-depth circuit. We note that the circuit lower bounds derived in this work are not able to capture such a trade-off; if a Boolean function has a sharp threshold then for all constant $d$ all depth $d$ polynomial-size circuits 
 fail to compute the Boolean function on average. In particular, no separation can be implied with polynomial yet we consider the appeal of our result to be that it offer a general and simple framework to derive new circuit lower bounds.

\subsection{Partial converse: monotone graph properties}

Notice that using Theorem \ref{thm:informal} one can directly establish that all Boolean functions exhibiting a sharp threshold with a window size which is scaling super-logarithmically, i.e., $1/\epsilon_n=(\log n)^{\omega(1)}$, cannot be computed, even on average, by a polynomial-size and constant-depth circuit -- more specifically, they cannot be computed by an $\mathcal{AC}_0$ circuit with high probability under the $p$-biased measure when $p$ is in the threshold interval of $f$.  It is natural to wonder whether our condition on $f$ exhibiting (1) a sharp threshold and moreover, (2) having a ``sufficiently small'' window size,  are necessary or of a purely technical nature.

Our second main result is that, perhaps surprisingly, (1) is not a technical condition, and a partial converse holds, as long as the Boolean function $f$ corresponds to a monotone graph property (i.e., it is invariant under graph automorphisms, see e.g., \cite{Friedgut}). We prove that any such $f$ that does not have a sharp threshold, in the sense that the window size satisfies $\epsilon=\Omega(1)$, then it is in fact computable by an $\mathcal{AC}_0$ circuit (in fact, depth four suffices) on average. 

\begin{theorem}[Informal version of Theorem \ref{thm:conv}]\label{thm:informal_conv}
    Any monotone Boolean function $f$ corresponding to a graph property that does not have a sharp threshold, i.e., it's window size satisfies $\epsilon=\Omega(1),$ is computable in $\mathcal{AC}_0$ on average. More specifically, there exists a polynomial-size and 4-depth circuit that agrees with $f$ on average.
\end{theorem}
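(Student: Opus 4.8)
The plan is to derive the circuit from Friedgut's structure theorem for monotone graph properties with a coarse threshold. Fix $n$ large, write $N=\binom n2$ for the number of potential edges, and let $p_c=p_c(n)$ denote, say, the smallest $p$ with $\E_p f\ge 1/2$. The first step is to argue that the hypothesis ``window size $\epsilon=\Omega(1)$'' forces a bounded logarithmic derivative $p_c\,\tfrac{d}{dp}\E_p f\big|_{p=p_c}=O(1)$ — the quantitative form of a coarse threshold — and, moreover, that this bound persists throughout an interval $I=[(1-c_*)p_c,(1+c_*)p_c]$ of constant multiplicative width: were $p\mapsto\E_p f$ ``locally sharp'' anywhere in $I$, its $0$-to-$1$ passage would complete within a multiplicative window $o(1)$, contradicting $\epsilon=\Omega(1)$. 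Friedgut's sharp-threshold theorem then yields, for this coarse regime, a constant $k$, a finite list of graphs $H_1,\dots,H_m$ each on at most $k$ vertices, and a monotone Boolean function $\psi\colon\{0,1\}^m\to\{0,1\}$ — all with parameters bounded in terms of the log-derivative bound alone, hence uniformly in $n$ — such that the ``local'' property
\[
g(G):=\psi\big(\one\{H_1\subseteq G\},\dots,\one\{H_m\subseteq G\}\big)
\]
satisfies $\P_p\big(f(G)\ne g(G)\big)\le\delta$ for every $p\in I$, where $\delta$ may be taken an arbitrarily small constant (at the price of a larger, still constant, $k$).

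The second step is the routine observation that any such $g$ lies in $\mathcal{AC}_0$, indeed in depth four with polynomial size. For a fixed $H_j$ on at most $k$ vertices, the event $\{H_j\subseteq G\}$ is computed \emph{exactly} by the width-$\binom k2$ DNF $\bigvee_{\varphi}\bigwedge_{e\in E(H_j)}x_{\varphi(e)}$, where $\varphi$ ranges over the at most $n^k$ injections $V(H_j)\hookrightarrow[n]$; since $k=O(1)$ this has size $\poly(n)$ and depth $2$. Writing the monotone $\psi$ in monotone disjunctive normal form $\psi(b)=\bigvee_{t}\bigwedge_{j\in S_t}b_j$ with $2^m=O(1)$ terms and substituting gives
\[
g(G)=\bigvee_{t}\ \bigwedge_{j\in S_t}\ \bigvee_{\varphi}\ \bigwedge_{e\in E(H_j)}x_{\varphi(e)},
\]
an OR--AND--OR--AND circuit of depth $4$ and size $\poly(n)$; only positive literals occur (as $\psi$ and the indicators are monotone), so there is no de Morgan blow-up.

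It remains to check that this fixed circuit agrees with $f$ ``on average'' in the sense of the theorem. By construction it agrees with $f$ up to the arbitrarily small constant probability $\delta$ under $\P_p$ for every $p\in I$; for $p$ below $I$, where $\E_p f$ is a small constant and $g$ is already small (being monotone and bounded at the left endpoint of $I$), the two agree up to a small constant as well, and symmetrically above $I$ — so, after choosing $\delta$ small, one obtains agreement to any prescribed constant error throughout $I$, which — since $\epsilon=\Omega(1)$ — is, up to its endpoints, the entire threshold range of $f$; this is the asserted average-case agreement. I expect the first step to be the crux: translating the global hypothesis $\epsilon=\Omega(1)$ — a statement about the multiplicative width of the transition for the correct choice of $p_c$ — into the pointwise bounded-log-derivative hypothesis required to invoke Friedgut's theorem, extending it from $p_c$ to all of $I$, and extracting from Friedgut's theorem an honest small-error \emph{approximation} by a bounded-size local property rather than a mere correlation. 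The depth-four realization and the monotonicity comparison at the boundary are routine.
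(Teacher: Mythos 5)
Your overall strategy --- invoke Friedgut's coarse-threshold theorem to replace $f$ by a local property determined by constant-size subgraphs, then realize that local property as a depth-four polynomial-size monotone circuit --- is the same as the paper's, and your depth-four DNF realization of a single local property $g$ is essentially the paper's observation too. But there is a genuine gap at the step you treat as given: you assert that Friedgut's theorem yields a \emph{single} local property $g$ with $\P_p(f\ne g)\le\delta$ \emph{uniformly for every $p$} in a constant-multiplicative-width interval $I$. Friedgut's theorem is a statement at a fixed $p$: the approximating family of subgraphs (and hence $g$) depends on $p$, and over an interval in which $\E_p f$ sweeps from $\alpha$ to $1-\alpha$ there is no guarantee that one fixed $g$ tracks $f$ throughout. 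This is precisely the difficulty the paper's proof is organized around: it applies Friedgut at $M=O(1)$ grid points $q_i=p_{\alpha+i\zeta}$ to get $M$ different approximators $g_i$, proves a perturbation lemma (Lemma \ref{lem:max}, using the balanced-subgraph structure of the minimal elements and a monotone coupling) showing each $g_i$ remains accurate only on an $O(\zeta p_c)$-neighborhood of $q_i$, and then must build into the circuit a mechanism for \emph{selecting} the right $g_i$ when $p$ is unknown.

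That selection step is the non-routine content missing from your proposal. Since exact majority is not in $\mathcal{AC}_0$, the paper estimates the bias by OR-ing $S=\omega(1)$, $S=o(\log N)$ disjoint blocks of $K=\lfloor 1/p_c\rfloor$ input bits each (producing $S$ independent bits of constant bias $b(p)=1-(1-p)^K$), computes their exact count with a brute-force $2^S=N^{o(1)}$-size subcircuit, and outputs the OR of those $g_i$ whose index is consistent with the estimated bias (Lemma \ref{lem:counting}); fitting all of this into depth four requires the parallel layering described in Section \ref{sec:circ}. Without either this selection device or a proof that a single Friedgut approximator works uniformly on $I$ (which does not follow from the theorem as stated), your construction only yields agreement with $f$ in a small neighborhood of one value of $p$, not across the threshold interval as the statement requires. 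Your first step --- converting ``window $\Omega(1)$'' into a pointwise bound $p\,\frac{d}{dp}\E_p f=O(1)$ on all of $I$ --- is also looser than you suggest (a large derivative at one interior point does not contradict a constant overall window), but the paper's formal Theorem \ref{thm:conv} simply takes the uniform derivative bound as its hypothesis, so this is a defect shared with the informal statement rather than with the paper's proof.
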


\begin{remark} Our proof leverages a celebrated characterization of monotone graph properties not exhibiting a sharp threshold by Friedgut \cite{Friedgut}. Importantly, subject to a well-known conjecture from the same work \cite[Conjecture 1.5]{Friedgut}, it directly generalizes to arbitrary monotone Boolean functions that do not exhibit a sharp threshold.
\end{remark}


\subsection{Circuit lower bound for $k=\Theta(n)$-clique in Erdos-Renyi graphs}

We now formally start the applications of our main result with the $k$-clique problem, a very well-studied $\mathcal{NP}$-hard problem. This setting corresponds to a Boolean function $f: \{0,1\}^{\binom{n}{2}} \rightarrow \{0,1\}$ where the input is the adjacency matrix of an undirected graph on $n$ vertices, and output is $1$ if and only if the graph has a clique of size $k$.

The circuit complexity of this problem has been extensively studied. Strong (exponential in $n^{\Theta(1)}$) lower bounds on the size of a circuit for $k$-clique exist for some appropriate value of $k=k_n$,
when the circuit is restricted to be monotone, namely not containing negation gates \cite{Razborov-circuit, alon1987monotone, arora2009computational}. However, exponential lower bounds for \emph{not necessarily} monotone Boolean circuits appear harder to achieve. As explained above, a natural step in this direction is to consider (potentially non-monotone) circuits but of bounded/controlled depth. One of the earliest such results was by Lynch~\cite{lynch1986depth},
who showed that when the cardinality $k$ of the clique is \emph{assumed to be} most $\log n$, the depth-$d$  circuit which detects the presence of such cliques 
must have size at least $n^{\Omega\left(\sqrt{k/d^3}\right)}$. 
The current state of the art is a celebrated result by Rossman~\cite{rossman2008constant} and applies to the case
when $k=o(\sqrt{\log n})$. He  established a depth-independent average-case lower bound 
$\omega\left(n^{2(k+1)/9}\right)$ when the circuit depth $d$ is constant.

Using our Theorem~\ref{thm:informal} we are able to obtain the first to our knowledge average-case bounded depth circuit lower bound for the $k$-clique problem when $k=\Theta(n)$. 

\begin{theorem}[Informal, $k$-clique (see Theorem \ref{thm:clique})]\label{thm:informal_clique}
    Suppose $k=\Theta(n)$. Any Boolean circuit $\mathcal{C}$ of depth $d \leq  \log n/(2\log \log n)$ which solves the $k=\Theta(n)$-clique problem on any Erdos-Renyi graph $G(n,p)$, must have size at least $\exp(n^{\frac{1}{2 d}})$.
    In particular, for bounded depth circuits, their size needs to be exponential in $n^{\Theta(1)}.$
\end{theorem}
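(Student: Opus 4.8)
The plan is to invoke Theorem~\ref{thm:main_1} (the formal version of Theorem~\ref{thm:informal}) as a black box. Let $f=f_{n,k}\colon\{0,1\}^{\binom n2}\to\{0,1\}$ be the monotone graph property ``the input graph contains a $k$-clique'', so its number of input bits is $N=\binom n2=\Theta(n^2)$ and $\log N=\Theta(\log n)$; in particular the hypothesised depth range $d\le\log n/(2\log\log n)$ lies inside the range $d\le c\log N/\log\log N$ permitted by Theorem~\ref{thm:main_1} (after adjusting absolute constants). A depth-$d$ circuit $\mathcal C$ that decides $k$-clique on $G(n,p)$ for all $p$ near the threshold satisfies $|\E_p\mathcal C-\E_p f|=o(1)$ there, hence exhibits the same sharp threshold as $f$, and Theorem~\ref{thm:main_1} applied to $\mathcal C$ then delivers the size bound. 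So everything reduces to exhibiting a threshold $p_c=p_c(n)$ and as small a window $\epsilon_n$ as possible with $\E_{(1-\epsilon_n)p_c}f\to0$ and $\E_{(1+\epsilon_n)p_c}f\to1$, and to tracking the (mild) dependence of Theorem~\ref{thm:main_1}'s bound on the location $p_c$, which here tends to $1$.

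\textbf{Locating the threshold.} Assume $k=\gamma n$ with $\gamma\in(0,1)$ fixed (the general $k=\Theta(n)$ case is identical, uniformly for $k_n/n$ in a compact subinterval of $(0,1)$). With $q=1-p$ the relevant scale is $q=\Theta(1/n)$: the expected number of $k$-cliques is $\binom nk p^{\binom k2}=\exp\!\big(nH(\gamma)-\binom k2q\,(1+o(1))\big)$, with $H$ the binary entropy in nats, and it crosses a constant exactly when $c:=nq$ crosses $c_\star(\gamma):=2H(\gamma)/\gamma^2$, so I set $p_c:=1-c_\star(\gamma)/n$. The lower end of the threshold is then immediate from the first moment: for $p\le(1-\epsilon_n)p_c$ one has $nq\ge c_\star+\Omega(n\epsilon_n)$, whence $\E_p f\le\binom nk p^{\binom k2}=\exp(-\Omega(n^2\epsilon_n)+O(\log n))\to0$ whenever $\epsilon_n\gg n^{-2}\log n$.

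\textbf{Appearance of the clique (the crux).} I would route the upper end through the complement: $\{f=1\}=\{\alpha(\overline G)\ge k\}$ with $\overline G=G(n,q)$, so at $p=(1+\epsilon_n)p_c$ we seek an independent set of size $\gamma n$ in the near-critical sparse random graph $G(n,c/n)$ with $c<c_\star$. Two facts combine: (i) $\alpha(G(n,q))$ (equivalently $\omega(G(n,p))$) is concentrated within $n^{1/2+o(1)}$ of its mean $m(q)$, by the bounded-differences/Azuma martingale that reveals one vertex's neighbourhood at a time; and (ii) an independent set of size $\gamma n$ exists with high probability for all $c$ below the true threshold, which one must show lies within $n^{-\Theta(1)}$ of the first-moment value $c_\star$, together with a lower bound $|\tfrac d{dq}m(q)|=n^{2-o(1)}$ on the rate at which the mean independence number moves near criticality (equivalently, $(\alpha^\star)'(c_\star)$ is bounded away from $0$, where $\alpha^\star(c)=\lim_n m(c/n)/n$). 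Combining (i)--(ii) with the first-moment bound of the previous paragraph gives that $f$ has a sharp threshold at $p_c=1-\Theta(1/n)$ with window $\epsilon_n=n^{-3/2+o(1)}$ (the $n^{1/2+o(1)}$ fluctuation scale divided by the rate $n^{2-o(1)}$, the true-vs-first-moment gap contributing a lower-order correction).

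\textbf{Conclusion and main obstacle.} Feeding $p_c=1-\Theta(1/n)$ and $\epsilon_n=n^{-3/2+o(1)}$ into Theorem~\ref{thm:main_1}: the de-biasing layer that converts a threshold at $p_c$ to one at constant density unavoidably loses a factor $\Theta((1-p_c)^{-1})=\widetilde{\Theta}(n)$ in the window (this is exactly the ``mild $p_c$-dependence'' of the formal statement), so the effective constant-density window is $n^{-1/2+o(1)}$ and the theorem returns a size lower bound $\exp(n^{1/(2d)})$ for every $d\le\log n/(2\log\log n)$, which is $\exp(n^{\Theta(1)})$ for constant $d$. The main obstacle is item (ii): showing that a $\gamma n$-clique appears with high probability essentially at its first-moment threshold, and that the (mean) clique number varies at rate $n^{2-o(1)}$ there. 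This is precisely the regime of the notorious first-moment/second-moment gap for cliques — the plain second moment for $\gamma n$-cliques overshoots by an exponential factor — so one must run a Frieze-type truncated-second-moment / small-subgraph-conditioning argument (or cite a sufficiently sharp result on the independence number of $G(n,c/n)$) and separately pin down the behaviour of $\alpha^\star$ near $c_\star$. The first-moment bound for the lower end, the Azuma concentration, and the reduction through Theorem~\ref{thm:main_1} (modulo the $p_c$-dependent constants) are routine by comparison.
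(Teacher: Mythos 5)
Your reduction to Theorem \ref{thm:main_1}, the passage to the complement (independence number of $G(n,c/n)$), the Azuma concentration at scale $n^{1/2+o(1)}$, and the final numerology ($\epsilon_n = n^{-3/2+o(1)}$, key quantity $\delta(1-p_c)/(\epsilon\log 1/\beta) = n^{1/2-o(1)}$, hence size $\exp(n^{(1/2-o(1))/(d+O(1))})$) all match the paper. The genuine gap is exactly the step you flag as ``the crux'': you anchor the threshold at the first-moment value $p_c = 1-c_\star(\gamma)/n$ with $c_\star(\gamma)=2H(\gamma)/\gamma^2$ and then need a $\gamma n$-independent set to exist w.h.p.\ for all $c$ within $n^{-\Theta(1)}$ of $c_\star$. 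That claim is not merely hard --- it is believed false. For linear-size independent sets in $G(n,c/n)$ with $c=\Theta(1)$, the true appearance threshold $c_{\mathrm{true}}(\gamma)$ is generically separated from the first-moment threshold $c_\star(\gamma)$ by a \emph{constant}, not by $n^{-\Theta(1)}$ (this is the same first-/second-moment gap you mention, and locating $c_{\mathrm{true}}$ exactly is open for general constant $c$; only the existence of the limit $\alpha^\star(c)$ is known). So no truncated second moment or small-subgraph conditioning will close your step (ii) as stated, and your claimed two-sided transition at $1-c_\star/n$ does not hold.

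The paper avoids needing to locate the threshold at all. Definition \ref{dfn:sharp} only requires a jump of size $\delta=\Theta(1)$ across a window $\epsilon$ with $\delta/\epsilon=\omega(1)$ (not a $0$-to-$1$ transition), and Definition \ref{dfn:approx} anchors ``computing on average'' at quantiles $p_1,p_2$ with $\E_{p_1}f=\xi$, $\E_{p_2}f=1-\xi$. Accordingly, Lemma \ref{lem:sharp_transition} \emph{defines} $d$ implicitly by $\P_{1-d/n}(G\in\mathcal{P}_k)=1-\xi$ (only $d=\Theta(1)$ is needed, which follows from standard bounds on the independence number), and then proves only the downward jump by sprinkling: couple $G(n,q+n^{-3/2+\gamma})$ as $G_1\cup G_2$ with $G_1\sim G(n,q)$, note that every maximum independent set of $G_1$ contains all of its $\Theta(n)$ isolated vertices, and show that $G_2$ w.h.p.\ places $\Omega(n^{1/2+2\gamma/3})$ vertex-disjoint edges inside that isolated set, forcing $\alpha$ to drop by more than its Azuma fluctuation scale $O(n^{1/2+\gamma/2})$. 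This quantile-plus-sprinkling argument is the piece your proposal is missing; your derivative heuristic $|\mathrm{d}m/\mathrm{d}q|=n^{2-o(1)}$ is morally what the sprinkling computation establishes, but routed through the implicit quantile rather than the (inaccessible) first-moment location.
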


 We highlight that our lower bound holds as long as the depth of the circuit scales less than $\log n/(2\log\log n)$, while all previous circuit lower bounds for $k$-clique we are aware of require the depth to be $o(\log n/\log \log n)$ ~\cite{rossman2008constant} to be applicable.

We also note that in light of Theorem \ref{thm:informal}, the proof of Theorem~\ref{thm:informal_clique} follows from a rather straightforward argument. Based on classical results on the independence number of random graphs \cite{frieze1990independence} and standard concentration inequalities, we prove that  $k=\Theta(n)$-clique Boolean function exhibits a sharp threshold with window size $1/n^{1/2}.$ From this, using Theorem \ref{thm:informal} the result readily follows. \footnote{Formally, this argument goes through as described for the ``complement'' of the $k$-clique Boolean function which decides if a graph has an independent set of size $k$. We omit these details, and direct the reader to Section \ref{sec:apps_cliques} for the exact statement and proof.} 

Finally it is important to mention that similar lower bounds for the $k$-clique problem in random graphs have been recently established in~\cite{gamarnik2020low}, but importantly for the \emph{search version} of the problem. Namely, the search problem asks for a circuit which outputs the position of the vertices of a $k$-clique (if it exists). We remind the reader that the (weaker) decision problem we study asks to solely decide the presence of a $k$-clique in the graph, and not necessarily to output the location of one. The search aspect is an important limitation of the result in~\cite{gamarnik2020low} as the search problem is naturally harder to solve, making the proof of a lower bound easier.

\subsection{Circuit lower bound for random 2-SAT}

We now turn to the fundamental problem of $k$-SAT. The $k$-SAT decision setting can be modeled using a Boolean function $f: \{0,1\}^{m} \rightarrow \{0,1\}$ with input a formula which is the conjunction of $m$ clauses, each of which is a disjunction of at most $k$ out of $n$ literals. The Boolean function is 1 if and only the formula is satisfiable. The $k$-SAT decision problem asks to decide if a given $k$-SAT formula is satisfiable or not. It is known that the decision problem for $2$-SAT is in $\mathcal{P}$, while the decision problem for $k$-SAT for $k \geq 3$ is $\mathcal{NP}$-hard \cite{Cook71}.

Random $k$-SAT naturally corresponds to the case where the $k$-SAT formula consists of $m$ clauses chosen uniformly at random. Over the recent years, a series of celebrated works have identified the ``satisfiability'' sharp threshold for random $k$-SAT (for certain constant values of $k$) that is the critical ratio of number of clauses to number of variables $m/n$ for which random $k$-SAT becomes unsatisfiable from satisfiable. More specifically, we direct the reader to \cite{goerdt1996threshold, Chvatal92} for the case $k=2$ and \cite{ding2015proof} for a celebrated proof when $k$ is a large constant, alongside with many references (see also \cite{Friedgut} for a slightly weaker notion of a sharp threshold but for all $k$). Interestingly, when $k=2$ the seminal work of \cite{bollobas2001scaling} established also the window of the sharp threshold to be of order $\epsilon=\Theta(n^{-1/3})$. To the best of our knowledge, the window size when $k\geq 3$ remains unknown.

It is a very natural question whether computationally efficient algorithms can solve the decision problem on random $k$-SAT. Clearly there is a polynomial-time when $k=2$, but even if $k \geq 3$, in case we are promised to be ``away'' from the sharp threshold, there is always a trivial polynomial-time algorithm that succeeds with high probability: if the ratio $m/n$ is above the sharp satisfiability threshold the algorithm should always say ``UNSAT'', and if we are below the algorithm should always say ``SAT''. This trivial algorithm suggests that if there is a ``hard'' region for random $k$-SAT it should be \emph{around the sharp threshold.}

We are not aware of any rigorous result for the complexity of random $k$-SAT around the threshold, but some almost 30-years old simulations are suggestive of a non-trivial landscape. By employing several SAT solvers authors have predicted that the decision problem of random k-SAT, for various small values $k$, will be significantly computationally ``harder'' around the threshold, and significantly computationally ``easier'' away from it (see \cite{selman1996generating, selman1996critical, mitchell1996some} and in particular \cite[Figure 2]{mitchell1996some} for $k=2,3,4,5.$) It is worth noting that the increase in computational difficulty at the threshold applies also to the case where $k=2$. The authors of \cite{selman1996generating} explain this predicted ``easy to hard(er) to easy'' computational transition by arguing that for $m$ much smaller than the critical threshold there are many satisfiable assignments, and for $m$ much larger it is easy to find a contradiction to satisfiability. In recent works, it appears that some authors have though perceived these findings to be dependent on the choice of SAT solver, and whether there is a hard(er) phase around the threshold or not, has been a topic of significant interest from both computer science and statistical physics communities  (see \cite{monasson1999determining, coarfa2000random} and references therein). From a mathematical standpoint, the following question is naturally emergent from the aforementioned works in the 90's and remarkably open:
\begin{quote}
    \centering{\emph{Is there a rigorous computational complexity lower bound for the decision problem of random $k$-SAT around the satisfiability threshold?}}
\end{quote}

 Using Theorem \ref{thm:informal} we provide a partial answer to the question above by studying the case $k=2$ and considering the class $\mathcal{AC}_0$ of polynomial-size bounded depth circuits. We prove that $\mathcal{AC}_0$ cannot solve random $2$-SAT around the satisfiability threshold. Our result follows straightforwardly from the $\epsilon=\Theta(n^{-1/3})$ small-window of the established random $2$-SAT phase transition \cite{bollobas2001scaling}. We prove the following result.

\begin{theorem}[Informal, $2$-SAT (see Theorem \ref{thm:sat})]\label{thm:informal_SAT}
    Any Boolean circuit $\mathcal{C}$ of depth \\$d \leq  \log n/(3\log \log n)$ which decides the satisfiability of a random $2$-SAT formula, must have size at least $\exp(n^{\frac{1}{3 d}}).$ In particular, $\mathcal{AC}_0$ cannot solve random 2-SAT around the threshold.
\end{theorem}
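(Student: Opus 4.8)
The plan is to instantiate our general bound (Theorem \ref{thm:main_1}, the formal version of Theorem \ref{thm:informal}) at the known scaling window of the random $2$-SAT satisfiability transition. First I would fix the Boolean-function encoding demanded by Theorem \ref{thm:main_1}: let $M=M(n)=\Theta(n^2)$ be the number of possible $2$-clauses on $n$ variables, and let $f=f_n\colon\{0,1\}^M\to\{0,1\}$ take as input the indicator vector of which clauses are present and output $1$ iff the resulting formula is \emph{unsatisfiable}. Adding a clause can only destroy satisfiability, so $f$ is monotone nondecreasing, matching the orientation in the definition of a sharp threshold; moreover a depth-$d$ size-$S$ circuit deciding satisfiability yields, after negating the output (one extra layer, size $O(S)$, still $\mathcal{AC}_0$), a circuit $\mathcal C$ that agrees with $f$ with high probability throughout the threshold interval. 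So it suffices to lower bound the size of any bounded-depth circuit agreeing with $f$ near its threshold.

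Next I would transfer the classical $2$-SAT facts to the product measure $\P_p=\Bern(p)^{\otimes M}$. Under $\P_p$ the number of present clauses is $\mathrm{Bin}(M,p)$, so choosing $p=m/M$ matches the uniform $m$-clause ensemble up to fluctuations of order $\sqrt{m}$. The satisfiability transition sits at $m=\Theta(n)$ clauses, i.e.\ at $p_c=\Theta(1/n)$, and the seminal result of \cite{bollobas2001scaling} pins the width of the transition in the clause-to-variable ratio at $\Theta(n^{-1/3})$; converting to the bias parameter, $p_c\cdot\epsilon=\Theta(n^{2/3}/M)=\Theta(n^{-4/3})$, hence the relative window is $\epsilon=\Theta(n^{-1/3})$ (up to an arbitrarily slowly growing factor, so that $\E_p f\to 0$ strictly below and $\E_p f\to 1$ strictly above the window). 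Here I use that the $\mathrm{Bin}(M,p)$ fluctuations $\sqrt{m}=\Theta(\sqrt n)$ are far smaller than the window width $\Theta(n^{2/3})$, together with monotonicity of $f$, to sandwich $\E_p f$ between uniform-model satisfiability probabilities at slightly shifted clause counts; this is the standard (but needed) comparison between the $F(n,m)$ and $F(n,p)$ ensembles, with repeated clauses irrelevant since they do not change satisfiability.

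Finally, if $\mathcal C$ has depth $d$ and agrees with $f$ with high probability at every $p\in[(1-\epsilon)p_c,(1+\epsilon)p_c]$, then $\mathcal C$ itself exhibits a sharp threshold at $p_c$ with window $\epsilon=\Theta(n^{-1/3})$, so Theorem \ref{thm:main_1} applies. Crucially, since $p_c=\Theta(1/n)\to 0$ is not bounded away from $0$, this is exactly the regime left uncovered by the Kalai--Safra sketch, and here one invokes the de-biasing step built into Theorem \ref{thm:main_1}. It yields that $\mathcal C$ has size at least $\exp\!\big(\epsilon^{-1/d}\big)=\exp\!\big(n^{\Theta(1/d)}\big)$, valid for $d$ in the regime permitted by that theorem, which for $N=\Theta(n^2)$ input bits and $\epsilon=\Theta(n^{-1/3})$ specializes to $d\le\log n/(3\log\log n)$; tracking constants gives the stated $\exp(n^{1/(3d)})$. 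For constant $d$ this is super-polynomial, so no polynomial-size constant-depth circuit decides random $2$-SAT around the threshold.

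I do not expect a genuine obstacle: given Theorem \ref{thm:main_1} the argument is short, and the only care needed is bookkeeping — aligning the product-measure encoding of Theorem \ref{thm:main_1} with the uniform random $2$-SAT ensemble, and correctly reading off $\epsilon$ in the $p$-biased parametrization from the ratio-scaled window of \cite{bollobas2001scaling}. I would also remark that the identical recipe produces an $\mathcal{AC}_0$ lower bound for random $k$-SAT for every $k\ge 3$ as soon as a nontrivial upper bound on its scaling window is proved, with a size bound that improves the sharper that window is; the $k=2$ case is singled out only because its window is known \cite{bollobas2001scaling}.
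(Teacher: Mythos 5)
Your proposal is correct and follows essentially the same route as the paper: read off the threshold $p_c=\Theta(1/n)$ and window $\epsilon=\Theta(n^{-1/3})$ from \cite{bollobas2001scaling}, observe that a circuit agreeing with the $2$-SAT function near the threshold inherits the sharp threshold, and plug into Theorem \ref{thm:main_1}. The only cosmetic differences are that the paper states the scaling-window result directly for the product measure $\P_p$ (so your explicit binomial-to-uniform ensemble comparison is absorbed into the citation of Lemma \ref{lem:sharp_transition_SAT}), and your negation to make the function monotone is unnecessary since Definition \ref{dfn:sharp} and Theorem \ref{thm:main_1} never require monotonicity.
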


While our result does not of course answer the question for the class of polynomial-time algorithms, it is an unconditional lower bound that applies even for random $2$-SAT (which is trivially in $\mathcal{P}$). Naturally, if random $k$-SAT for $k \geq 3$ is proven to exhibit an inverse polynomially small window, our lower bound would generalize in this case as well.

\subsection{Circuit lower bounds from statistical sharp thresholds}

We now make a twist in our applications, and turn our focus from decision settings of random structures to statistical estimation tasks. Our motivation is that over the recent years a notable amount of work has studied the so-called ``computational-statistical trade-offs'', that is computational challenging regimes appearing in a plethora of high dimensional statistical estimation tasks. Such settings include the celebrated planted clique problem \cite{jerrum1992large}, sparse regression \cite{gamarnik2022sparse}, tensor PCA \cite{richard2014statistical}, and more. Interestingly, a (worst-case) complexity-theory explanation of most of these trade-offs remains eluding. For this reason, a significant line of work attempts to explain these trade-offs by proving restricted lower bounds against subclasses of polynomial-time estimators. These have so-far included (but are not limited to) the class of low-degree polynomial estimators \cite{schramm2022computational}, estimators using statistical queries \cite{feldman2017statistical}, MCMC methods \cite{chen2023almost} and Langevin dynamics \cite{arous2018algorithmic}.

The connection with the present work lies on an, again superficially disjoint, series of recent works in high dimensional statistics which has proved that many statistical estimation settings exhibit a sharp threshold in terms now of the possible estimation/recovery of a hidden signal, known as the All-or-Nothing (AoN) phenomenon \cite{gamarnik2022sparse, reeves2021all}. More specifically, the AoN phenomenon appears when at some critical noise level $\sigma=\sigma_N>0$ (also known as the information-theoretic threshold and critical window size $\epsilon=\epsilon_N \in (0,1)$ the following holds: for noise level $(1-\epsilon)\sigma$ almost perfect estimation of the hidden signal is possible (the ``all'' phase), while for noise level $(1+\epsilon)\sigma$ even weak estimation of the hidden signal is not possible (the ``nothing'' phase). Interestingly, AoN has been proven to appear in many models also exhibiting computational-statistical trade-offs, such as planted clique \cite{mossel2023sharp}, sparse tensor PCA \cite{niles2020all} and sparse regression \cite{reeves2021all}. For this reason, the recent work \cite{mossel2023sharp} asked whether this is a mere coincidence or one can obtain a result connecting the presence of AoN with computational challenges.

In this work, we provide an answer to the question of \cite{mossel2023sharp} and show that our main result Theorem \ref{thm:informal} can also be adapted to obtain \emph{circuit lower bounds directly from the AoN phenomenon}. We prove that (a slight variant of) the existence of AoN implies circuit lower bounds for a large class of Boolean high dimensional statistical estimation problems called ``Hidden Subset Problems'' (see Definition \ref{dfn:hidden}) introduced in \cite{mossel2023sharp}. 

\begin{theorem}[Informal version of Theorem \ref{thm:circuit_aon}]\label{thm:informal_aon}
For some small $c>0$ the following holds. For a large class of Boolean statistical estimation settings, if they exhibit the All-or-Nothing phenomenon with window size $\epsilon=\epsilon_N>0,$ then any circuit $\mathcal{C}$ of depth $d \leq c \log N/\log \log N$ correctly estimating the hidden signal to information-theoretic optimality must have size at least $\exp(1/\epsilon_N^{1/d}).$
\end{theorem}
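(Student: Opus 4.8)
The plan is to reduce the statistical estimation setting to the Boolean decision framework of Theorem~\ref{thm:informal} (formally Theorem~\ref{thm:main_1}): from an alleged bounded-depth estimator I will manufacture a single Boolean function whose $p$-biased acceptance probability inherits a sharp threshold directly from the All-or-Nothing window, and then invoke Theorem~\ref{thm:main_1} on that function.

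First I would arrange the Hidden Subset Problem so that everything lives over one product Boolean space. A Hidden Subset Problem draws a hidden set $S$ from a fixed prior and reveals observations through a channel whose per-coordinate law depends monotonically on the noise level $\sigma$; since the problem is ``Boolean'', this channel can be realized so that, after augmenting the observation vector with the ``generating seed'' (the randomness producing $S$ together with the per-coordinate noise), the augmented input $\bX$ is distributed as a $p$-biased product measure for a parameter $p=p(\sigma)$ that is an explicit monotone function of $\sigma$ --- possibly after the mild reparametrization of the AoN window alluded to by the ``slight variant'' hedge in the statement. The payoff of the augmentation is that both the hidden set $S$ and the observation vector $\bY$ become $O(1)$-depth, $\poly(N)$-size functions of $\bX$, i.e.\ can be decoded from $\bX$ inside $\mathcal{AC}_0$.

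Second, given a depth-$d$ estimator $\mathcal C$ which on input $\bY$ outputs $\hat S=\mathcal C(\bY)$ achieving the information-theoretically optimal overlap with $S$, I define the Boolean ``success indicator''
\[
  F(\bX)\;=\;\one\!\left\{\;\frac{|\mathcal C(\bY)\cap S|}{|S|}\;\ge\;\tfrac12+\delta\;\right\},
\]
for a small constant $\delta>0$, with $\bY$ and $S$ decoded from $\bX$ as above. In the ``all'' phase $\sigma=(1-\eps)\sigma_c$ the Bayes-optimal overlap is $1-o(1)$, so optimality of $\mathcal C$ forces $F=1$ with high probability; in the ``nothing'' phase $\sigma=(1+\eps)\sigma_c$ even weak recovery is impossible, so no estimator --- in particular not $\mathcal C$ --- can beat the trivial overlap, and $F=0$ with high probability. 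Through the change of variables $\sigma\mapsto p(\sigma)$ this says precisely that $F$ exhibits a sharp threshold at $p_c=p(\sigma_c)$ with window $\Theta(\eps)$. Moreover, on the $p$-biased input $\bX$, $F$ is computed by $\mathcal C$ post-composed with the fixed decoding and thresholded-overlap gadgets, which add only $O(1)$ depth and $\poly(N)$ size; if $p_c$ is near $0$ or $1$ --- the regime relevant, e.g., for dense planted clique --- I first put the de-biasing layer $\Phi$ of Theorem~\ref{thm:main_1} on top to move the threshold to a constant, exactly as in the decision applications, at the cost of another $O(1)$ depth. So $F$ is computed (on average over the window) by a depth-$(d+O(1))$ circuit of size $\mathrm{size}(\mathcal C)+\poly(N)$, and Theorem~\ref{thm:main_1} applied to $F$ yields $\mathrm{size}(\mathcal C)\ge\exp(\eps^{-\Theta(1/d)})$ (the additive $\poly(N)$ being negligible in the regime where $\eps$ is small relative to $N$).

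The step I expect to be the main obstacle is the first one: faithfully realizing the one-parameter family of observation laws as genuine $p$-biased product measures over a single Boolean input, while keeping both $S$ and $\bY$ decodable in bounded depth. For a generic channel the noise parameter does not map onto a Bernoulli bias, and the prior on $S$ contributes a block of the input whose marginal law does not move with the threshold; reconciling these forces either a restriction to channels of the right product-Bernoulli form (which is exactly what the Hidden Subset Problem abstraction of \cite{mossel2023sharp} buys us) or the ``slight variant'' reparametrization of the AoN window, and one must verify that this alters $\eps$ only by a constant factor and that the decoding gadgets genuinely lie in $\mathcal{AC}_0$. A secondary, more routine subtlety is upgrading ``$\mathcal C$ achieves the Bayes-optimal overlap'' to the clean indicator $F$: the ``nothing'' side must truly preclude $F=1$, which relies on AoN forbidding even weak recovery rather than merely exact recovery, and the overlap level $\tfrac12+\delta$ must be chosen compatibly with the trivial baseline of the particular estimation problem.
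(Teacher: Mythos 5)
Your high-level plan (convert the estimator into a single Boolean success indicator with a sharp threshold in $p$, then invoke Theorem~\ref{thm:main_1}) matches the paper, but the two issues you flag as ``obstacles'' are not side issues --- they are the actual content of the proof, and your proposed routes around them do not work. First, the paper never tries to realize the prior on $S$ as part of a $p$-biased product measure, and for good reason: the prior $\mathcal{P}$ in Definition~\ref{dfn:hidden} is arbitrary, so there is no $\mathcal{AC}_0$-decodable Bernoulli encoding of it, and any bits encoding $S$ would have a marginal law that does not (and must not) move with $p$. The paper's resolution is different and simpler: it forms $\mathcal{C}'(X,S)=\one\{\mathcal{C}$ recovers $S\}$ on the \emph{pair} $(X,S)$, shows that the expectation over $S\sim\mathcal{P}$ of $\E_{p_{\mathrm{IT}}}\mathcal{C}'-\E_{(1+\epsilon)p_{\mathrm{IT}}}\mathcal{C}'$ is at least $0.8$, and then \emph{fixes a single favorable $S$} and hardwires it into the circuit. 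The resulting $\mathcal{C}_S(X)=\mathcal{C}'(X,S)$ is a Boolean function of the genuinely $p$-biased input $X$ alone, with jump $\delta=\Theta(1)$ and window $\epsilon$ at $p_c=p_{\mathrm{IT}}$, and only $O(1)$ extra depth and $O(s+N)$ extra size for the equality-checking layers. This averaging-then-fixing step is the missing idea in your write-up.

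Second, your success indicator thresholds the \emph{overlap} at $\tfrac12+\delta$ and leans on AoN ``forbidding even weak recovery.'' But the AoN variant actually used here (and stated in the paper) is for \emph{exact} recovery under $0$--$1$ loss: the ``nothing'' phase only asserts $\P(A(Y)=S)\le 0.1$, which says nothing about partial overlap. Under that definition your $F$ is not forced to $0$ above the threshold, so the jump size of $F$ is not controlled. The correct indicator is the exact-equality event $\one\{\mathcal{C}(Y)=S\}$, which is an AND of $N$ coordinate-equality bits and is precisely matched to the exact-recovery AoN; this is what the paper uses. With those two corrections the rest of your argument (de-biasing layer when $p_{\mathrm{IT}}\to 0$ or $1$, absorbing the additive $\poly(N)$ size, applying Theorem~\ref{thm:main_1}) goes through as in the paper.
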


The proof proceeds at a high level as follows. Assume the existence of an estimator, computed by a bounded depth circuit, which is successful all the way to the information-theoretic noise level for the task of interest. By definition of the information-theoretic noise level, the same circuit needs to fail for values a bit above this critical noise level (as all estimators fail). Hence, this circuit is exhibiting a certain ``statistical'' sharp threshold. With a little bit of work, we modify the circuit to exhibit the classic notion of a sharp threshold from Theorem \ref{thm:informal}, from which we conclude the result.

We note previous results on the average-case complexity of inferring the root label in the broadcast problem~\cite{moitra2020parallels}. The results of~\cite{moitra2020parallels} provide circuit lower bounds for a natural and well-studied inference problem of inferring the root in the broadcast model. Similar to the 2-SAT problem, there is a linear time algorithm for solving this task as well. In this setting the results of~\cite{moitra2020parallels} prove circuit lower bounds for this problem. Compared to our results, the results of~\cite{moitra2020parallels} provide less explicit bounds on the required size in terms of depth for $\mathcal{AC}_0$ circuit. On the other hand, some of the results of~\cite{moitra2020parallels} provide average-case $\mathcal{NC}_1$ hardness results. Since $\mathcal{TC}_0$ contains the majority function (which exhibits a sharp threshold) and $\mathcal{TC}_0$ is contained in $\mathcal{NC}_1$, it seems like it is impossible to prove $\mathcal{NC}_1$ hardness results based on the sharp threshold technique of the present work. 

It is perhaps natural to ask whether the class of bounded depth polynomial-size circuits is a reasonable class of estimators for statistical problems. Indeed, given the folklore result that a bounded-depth polynomial-size circuits cannot compute the Majority function, multiple common statistical estimators (e.g. based on thresholding empirical moments) can also be proven to not be computable by bounded-depth circuits. We note though that one may not wish to compute the Majority function because of \emph{space constraints} -- the Majority needs $\log n$ bits of memory to be computed. In fact, motivated by such concerns, a large number of results have studied the complexity of testing the bias of a single coin by focusing only on classes of estimators with restricted memory (that do not contain Majority). For example, see  \cite{amano2009bounds} for $\mathcal{AC}_0$ upper bounds, \cite{aaronson2010bqp} for $\mathcal{AC}_0$ lower bounds, and \cite{brody2010coin} for lower bounds against read-one branching programs. Interestingly, the cited $\mathcal{AC}_0$ lower bound for the bias problem is also based on an argument using the sharp threshold behavior of Majority. Now, from this point of view, Theorem \ref{thm:informal_aon} offers a general method for how one can obtain $\mathcal{AC}_0$ lower bounds for a variety of high dimensional statistical settings based on statistical sharp thresholds. Such settings imply multiple new circuit lower bounds well beyond the single coin estimation setting 
 (e.g., see the next section for the planted clique model.)

Before proceeding with applications, we note a different motivation for Theorem \ref{thm:informal_aon} from the study of computational-statistical trade-offs. The ``standard'' restricted lower bounds for computational-statistical trade-offs focus (with few exceptions) on the following classes: low-degree polynomials, convex programs, MCMC and local-search methods, SQ methods, and message passing procedures. These are very powerful and statistically natural classes to consider, but during the recent years it has become increasingly clear that the class of polynomial-time estimators can be significantly stronger than the above classes (e.g. lattice-based methods from cryptography have been proven to surpass all of these lower bounds in clustering mixtures of Gaussians \cite{zadik2022lattice}). On the other hand, for Boolean problems, the framework of polynomial-size circuits is known to capture all polynomial-time algorithms. It is interesting to wonder whether studying estimators as Boolean circuits can help us improve our understanding of the computational complexity of statistical problems. For example, it is well-known in circuit complexity that the class we consider in this work, of bounded depth polynomial-size circuits, includes the logarithmic time hierarchy (see e.g., \cite{vollmer1999introduction}). This makes it therefore a class of polynomial-time estimators that is not captured from the above-mentioned well-studied classes.  Our Theorem  \ref{thm:informal_aon} offers a generic method about how one can argue the failure of bounded depth circuits using the AoN phenomenon from the high dimensional statistics literature. Our hope is that our result will generate more works using the technology of circuit lower bounds but adapted to statistical tasks, hopefully shedding more light on computational-statistical trade-offs and connecting it more with more traditional computational complexity theory techniques.

\subsection{Circuit lower bounds for Planted Clique}
As a concrete application of Theorem \ref{thm:informal_aon}, we prove circuit lower bounds for a ``dense'' variant of the celebrated planted $k$-clique problem.

In this setting, a statistician observes an $n$-vertex undirected graph which is the union of a (randomly placed) $k=k_n$-clique $\mathcal{PC}$ with a sample from the Erdos-Renyi graph $G(n,p).$ Here $1/p$ corresponds to the noise level. The goal of the statistician is to exactly recover the vertices of $\mathcal{PC}$, given access to the graph, with probability at least $0.99.$ We explore a strong notion of algorithmic success: we fix $k$ and say that an algorithm \emph{solves (to information-theoretic optimality) the $k$-planted clique problem}, if it succeeds for any inverse noise level $p=p_n>0$ for which it is information-theoretically possible. This is a non-trivial notion of algorithmic success as for example the optimal brute force search for a $k$-clique estimator always enjoys such guarantees.

Observe that our perspective is that we fix $k=k_n$ and vary $p=p_n$. The planted clique model has been extensively studied in the slightly dual setting where one fixes the inverse noise level $p$ (mostly setting $p=1/2$) and varies $k=k_n$. The reason we flip these roles is to be able to take advantage of the AoN literature where varying the noise level is more natural. Nevertheless, in the case $p=1/2$ it is a folklore result that the information-theory threshold for recovery is $k=2\log_2 n$ and in particular if  $k >2 \log_2 n$ recovery of $\mathcal{PC}$ is possible. Interestingly though all known polynomial-time algorithms require $k=\Omega(\sqrt{n})$, i.e., an exponentially larger clique to succeed \cite{alon1998finding}.  Multiple restricted lower bounds have been established in the regime $2 \log_2 n <k=o(\sqrt{n})$, including failure of low-degree methods \cite{schramm2022computational}, of MCMC methods \cite{jerrum1992large, chen2023almost} and more, to support the computational-statistical trade-off of the model could be of fundamental nature.

Now, back in our setting where we fix $k$ and vary $p$, Theorem \ref{thm:informal_aon} is able to refute bounded depth circuits from achieving the information-theoretic threshold $p_{\mathrm{IT}}$. Hence, to consider the case where $p=1/2$ we should choose $k=2\log_2 n$ which is the unique value of $k=k_n$ with $p_{\mathrm{IT}}=1/2$. In this case planted clique indeed does exhibit AoN \cite{mossel2023sharp}, but the window size can be easily derived from existing work to be $(\mathrm{poly}(\log n))^{-1}$. This small window size unfortunately leads to a degenerate size lower bound for bounded-depth circuits based on Theorem \ref{thm:informal_aon}. 

For this reason, we consider instead a ``denser'' case where we plant a $k=(\log n)^{\omega(1)}$ clique in $G(n,p)$ where now the information-theory threshold is $p_{\mathrm{IT}}=1-\Theta(\log n/k),$ and in particular converges to one (hence the ``denser'' name). In that case, we prove that AoN appears with a window size $\epsilon=\Theta(1/k)$, which now is large enough for us to conclude non-trivial circuit lower bounds.

We prove the following.
\begin{theorem}[Informal version of Theorem \ref{thm:planted-clique}]\label{thm:informal_aon_pc}
For some small $c>0$ the following holds. Assume $(\log n)^{\omega(1)}=k \leq n^{1/3-\epsilon}$. Then, any Boolean circuit $\mathcal{C}$ of depth $d \leq  c\log n/\log \log n$ solving the planted $k$-clique problem to information-theoretic optimality, must have size at least $\exp(k^{1/d}).$
\end{theorem}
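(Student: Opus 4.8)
The plan is to obtain the statement as a direct corollary of Theorem~\ref{thm:circuit_aon} (the formal version of Theorem~\ref{thm:informal_aon}); the content is to verify that the dense planted $k$-clique problem (i) is an instance of the Hidden Subset Problem template of Definition~\ref{dfn:hidden}, and (ii) exhibits the All-or-Nothing phenomenon with window size $\epsilon=\Theta(1/k)$. Point (i) is routine: the hidden signal is the $\{0,1\}$-indicator of the vertex set of $\mathcal{PC}$, the observation is the edge-union of that clique with an independent $G(n,p)$, and the number of input bits is $N=\binom n2=\Theta(n^2)$, so $\log N=\Theta(\log n)$ and $\log\log N=\Theta(\log\log n)$; hence the depth hypothesis $d\le c\log N/\log\log N$ of Theorem~\ref{thm:circuit_aon} becomes exactly $d\le c'\log n/\log\log n$, as in the statement.

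For point (ii) I would first locate the information-theoretic threshold and simultaneously read off the window from a first-moment computation. Exact recovery of $\mathcal{PC}$ is information-theoretically possible essentially iff $\mathcal{PC}$ is with high probability the unique maximum clique of the observed graph, which is governed by the expected number of $k$-cliques in $G(n,p)$, namely $\binom nk p^{\binom k2}$. As $p$ is close to $1$ and $1-p=o(1)$ (since $k=(\log n)^{\omega(1)}\gg\log n$), one has $\log\big(\binom nk p^{\binom k2}\big)=k\log(n/k)-(1+o(1))\tfrac{k^2}{2}(1-p)$, so this expectation is $e^{+\Theta(k)}$ once $p$ exceeds $p_{\mathrm{IT}}:=1-\Theta(\log n/k)$ by a multiplicative $(1+\Theta(1/k))$ factor, and $e^{-\Theta(k)}$ once $p$ lies a $(1-\Theta(1/k))$ factor below it. This pins $p_{\mathrm{IT}}=1-\Theta(\log n/k)$ and shows the crossover happens while $p$ sweeps a multiplicative window of width $\Theta(1/k)$, which is the source of $\epsilon=\Theta(1/k)$.

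It then remains to turn these heuristics into genuine ``all'' and ``nothing'' statements. For the ``all'' side (just on the recoverable side of the window), a moment computation over all $k$-subsets, stratified by the overlap size $j\in\{0,1,\dots,k\}$ with $\mathcal{PC}$, shows that with high probability no $k$-subset other than $\mathcal{PC}$ is a clique --- not even one overlapping $\mathcal{PC}$ in many vertices --- so outputting the unique maximum clique recovers $\mathcal{PC}$ exactly. For the ``nothing'' side (just on the hard side), one must show that even weak recovery fails, i.e.\ the posterior of $\mathcal{PC}$ given the graph is spread so thinly over the $e^{\Theta(k)}$ competing $k$-cliques that no estimator attains constant correlation with $\mathcal{PC}$; here I would invoke and adapt the planted-clique All-or-Nothing analysis of \cite{mossel2023sharp} to the dense regime $k=(\log n)^{\omega(1)}$, $p\to1$, the hypothesis $k\le n^{1/3-\epsilon}$ being the technical condition under which the required (conditional) second-moment estimates remain valid. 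This ``nothing'' direction --- and with it pinning the window at $\Theta(1/k)$ rather than at a weaker polylogarithmic scale --- is where I expect the main obstacle.

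With AoN established at window $\epsilon=\Theta(1/k)$, Theorem~\ref{thm:circuit_aon} applies directly: any depth-$d$ circuit with $d\le c'\log n/\log\log n$ that recovers $\mathcal{PC}$ at every noise level for which exact recovery is information-theoretically possible (i.e.\ solves the problem to information-theoretic optimality) must have size at least $\exp\big((1/\epsilon)^{1/d}\big)=\exp\big(\Theta(k^{1/d})\big)$, which is the asserted $\exp(k^{1/d})$ bound up to the usual absorption of constants, and is super-polynomial for every constant depth precisely because $k=(\log n)^{\omega(1)}$. This proves the theorem.
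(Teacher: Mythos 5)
Your overall architecture is the same as the paper's: reduce to Theorem~\ref{thm:circuit_aon} after casting dense planted clique as a Hidden Subset Problem and establishing an AoN statement, with the ``all'' side done by an overlap-stratified first-moment count. However, there is a genuine quantitative gap in the window size, and it is exactly the place where the stated bound $\exp(k^{1/d})$ would be lost. Your first-moment heuristic identifies the (multiplicative) window over which $\binom{n}{k}p^{\binom{k}{2}}$ swings from $e^{+\Theta(k)}$ to $e^{-\Theta(k)}$, which indeed has width $\Theta(1/k)$. But the AoN definition only requires the success probability to drop from $0.9$ to $0.1$, and for that it suffices (and is necessary for the theorem) that the expected number of competing cliques cross a \emph{constant}, which happens over the much narrower window $\epsilon=\Theta(1/k^2)$ around $p_{\mathrm{IT}}=\binom{n}{k}^{-1/\binom{k}{2}}$: a multiplicative perturbation $(1\pm\epsilon)$ of $p$ changes $p^{\binom{k}{2}}$ by $e^{\pm\Theta(\epsilon k^2)}$, so $\epsilon=\Theta(1/k^2)$ already moves the count from $\le 0.01$ to $\ge 100$. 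This is what Lemma~\ref{lem:sharp_transition_pc} proves.

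The reason this matters is that you apply the \emph{informal} bound $\exp((1/\epsilon)^{1/d})$, whereas the formal Theorem~\ref{thm:circuit_aon} gives $\exp\bigl(c_3(\tfrac{1-p_{\mathrm{IT}}}{\epsilon\log 1/\beta})^{1/(d+6)}\bigr)$, and in this application the ``mild'' correction is not mild: $1-p_{\mathrm{IT}}=\Theta(\log n/k)$ and $\log(1/\beta)=\Theta(\log(k/\log n))$, so the numerator contributes a factor of order $1/k$ (up to logs). With your $\epsilon=\Theta(1/k)$ the key quantity $\tfrac{1-p_{\mathrm{IT}}}{\epsilon\log 1/\beta}$ is only $O(\log n/\log k)$, giving a vacuous (subpolynomial) size bound; with the paper's $\epsilon=\Theta(1/k^2)$ it is $\Omega(k)$, which is what yields $\exp(k^{1/(d+6)})$. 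So you must prove the $\Theta(1/k^2)$ window, not $\Theta(1/k)$. As a secondary point, the paper's ``nothing'' direction does not need the conditional second-moment machinery you anticipate borrowing from \cite{mossel2023sharp}: it is a short change-of-measure (planting-trick) argument using the identity $\frac{d\P_{\mathrm{planted}}}{d\P_{\mathrm{null}}}(Y)=Z(Y)/(\binom{n}{k}p^{\binom{k}{2}})$ together with the fact that exact recovery is impossible whenever $Y$ contains more than one $k$-clique; the constraint $k\le n^{1/3-\Omega(1)}$ is used on the ``all'' side to control the overlap-stratified counts, not on the ``nothing'' side.
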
The proof of the requires AoN result follows from an application of the so-called planting trick, initially introduced in the context of random constraint satisfaction problems \cite{achlioptas2008algorithmic} alongside a careful application of the second-moment method.

\subsection{Notation}\label{sec:notation}
$\P_p$ corresponds to the product measure $\mathrm{Bernoulli}(p)^{\otimes N}$ and $\E_p$ the expectation with respect to $\P_p$. For $x \in \{0,1\}^N$ we denote by $x^{-i} \in \{0,1\}^{N-1},$ the vector where $i$-th bit is deleted, and by $(x^{-i},1) \in \{0,1\}^N$ (respectively $(x^{-i},0)\in \{0,1\}^N$) the modified binary vector $x$ where the $i$-bit is set to be $1$ (respectively $0$).  We define the influence of the bit $x_i$ to $f$ by $(I_i)_p(f):=p(1-p)\E_p (f(x^{-i},1)-f(x^{-i},0))^2,$ and the total influence $I_p(f)=\sum_{i=1}^N (I_i)_p(f).$
 All logarithms  are  base 2. With high probability (w.h.p.) means with probability going to one as $N$ grows. We consider circuits with AND, OR, and NOT gates of unbounded fan-in. As customary, we assume that the NOT gates are only at the input level, which is an assumption that does not increase the depth of a circuit and at most doubles the size of the circuit.

\section{Main Result}

We start with defining the sharp threshold property of a Boolean function.
\begin{definition}\label{dfn:sharp}
     We say $f: \{0,1\}^N \rightarrow \{0,1\}$ exhibits a \emph{sharp threshold} with window size $\epsilon=\epsilon_N \in (0,1)$ if for some threshold $0<p_{c}=(p_c)_N\in (0,1)$ and jump size $\delta=\delta_N \in (0,1)$ with $\delta/\epsilon=\omega(1)$ the following holds for large enough $N$:
 \begin{align*}
       | \E_{(1+\epsilon)p_c}f -      \E_{(1-\epsilon)p_c}f | \geq \delta.
    \end{align*}
\end{definition}

Note that this definition generalizes the one we gave in the introduction as the ``jump size'' $\delta$ does not need to be close to one, and in fact can even shrink to zero, i.e., $\delta=\delta_N=o(1).$

Our main theorem is the following.

\begin{theorem}\label{thm:main_1}
For some universal constants $c_1,c_2>0$ the following holds.
    Let $C: \{0,1\}^N \rightarrow \{0,1\}$ be a circuit of size $s=s_N$ and depth $d=d_N.$ Suppose that $\mathcal{C}$ exhibits a sharp threshold with window size $\epsilon=\epsilon_N$ and jump size $\delta=\delta_N$ at the threshold $p_c=(p_c)_N.$

    Let $\beta:=\min\{p_c,1-p_c\}$ and assume $\beta=N^{-\Omega(1)}$ and $\epsilon=o((1-p_c)/\log 1/\beta)$. Then for some universal constants $c_1,c_2,c_3>0,$ the following holds for large enough values of $N$.
\begin{itemize}
    \item[(1)] Either,

    \begin{align}\label{eq:final_0000}
      d \geq \frac{1}{2\log \log N }\log \left[\frac{\delta(1-p_c)}{\epsilon \log 1/\beta}\right] -3,
    \end{align}i.e., the ``depth is large''.

    \item[(2)] Or
    \begin{align}\label{eq:final_001}
     s \geq c_1\exp \left( c_2\left(\frac{\delta(1-p_c)}{\epsilon \log 1/\beta} \right)^{1/(d+3)}\right),
    \end{align}i.e., the ``size is large''.
\end{itemize}

\end{theorem}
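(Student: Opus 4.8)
The plan is to ``transport'' the sharp threshold of $\mathcal{C}$ from the possibly–extreme location $p_c$ to a location bounded away from $0$ and $1$, by pre-composing $\mathcal{C}$ with a small, low-depth \emph{de-biasing layer}, and then to run the Kalai--Safra argument (Margulis--Russo plus a $p$-biased Linial--Mansour--Nisan bound) at this new constant bias, where the LMN-type estimates do not degrade. Concretely, for a suitable integer $k=k_N=O(\log N)$ (using $\beta=N^{-\Omega(1)}$), I would build a depth-$2$, $\mathrm{poly}(N)$-size circuit $\Phi\colon\{0,1\}^{Nk}\to\{0,1\}^N$ such that when its $Nk$ input bits are drawn from $\P_q$, the output $\Phi(U)$ is \emph{exactly} distributed as $\P_{g(q)}$ on $\{0,1\}^N$, where $g=g_\Phi$ depends only on the fixed internal wiring of $\Phi$. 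Concretely, allot a disjoint block of $k$ fresh bits $U^{(i)}_1,\dots,U^{(i)}_k$ to each output coordinate $i$ and set $\Phi(U)_i=\Ind{(U^{(i)}_1\cdots U^{(i)}_k)_2<a}$, an $O(k)$-size DNF of literals, where $a=\lfloor p_c 2^k\rfloor$ if $p_c\le\tfrac12$ and for $p_c>\tfrac12$ one runs the analogous gadget for bias $1-p_c$ and negates every output bit. Then $g(q)=\P_q\big[(U_1\cdots U_k)_2<a\big]$ (resp.\ one minus it), a polynomial in $q$ that I would verify is: (i) a strictly monotone bijection $[0,1]\to[0,1]$; (ii) equal to $p_c(1+O(\epsilon))$ at $q=\tfrac12$, once $k\gtrsim\log\tfrac1{\epsilon p_c}$; and, crucially, (iii) satisfies $|g'(q)|=\Theta\!\big(\beta\log\tfrac1\beta\big)$ for all $q$ in a fixed neighborhood of $\tfrac12$, the $\log\tfrac1\beta$ stemming from the roughly $\log\tfrac1\beta$ ``forced'' bits at the top of the comparison gadget.

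\textbf{Transporting the threshold.} Set $\mathcal{C}':=\mathcal{C}\circ\Phi$ and $q_\pm:=g^{-1}\big((1\pm\epsilon)p_c\big)$. Since $\Phi(U)\sim\P_{g(q)}$ \emph{exactly}, $\E_{q_\pm}[\mathcal{C}']=\E_{(1\pm\epsilon)p_c}[\mathcal{C}]$, so $|\E_{q_+}\mathcal{C}'-\E_{q_-}\mathcal{C}'|\ge\delta$ with no loss. By (iii) and the mean value theorem, $w:=|q_+-q_-|=\Theta\big(\epsilon p_c/(\beta\log(1/\beta))\big)$ and $q_-,q_+$ lie within $O(w)$ of $\tfrac12$; the hypothesis $\epsilon=o\big((1-p_c)/\log\tfrac1\beta\big)$ forces $w=o(1)$, so $q_\pm$ and the whole interval between them sit in $[0.4,0.6]$ say. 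Moreover $\mathcal{C}'$ has depth $\le d+O(1)$ (at most $2$ more than $\mathcal{C}$), size $\le s+N^{O(1)}$, and negations only at inputs.

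\textbf{From the jump to large influence, and the circuit bound.} For any $f\colon\{0,1\}^M\to\{0,1\}$, Margulis--Russo gives $\big|\tfrac{d}{dp}\E_p f\big|=\big|\sum_i\E_p\big(f(x^{-i},1)-f(x^{-i},0)\big)\big|\le\sum_i\E_p\big(f(x^{-i},1)-f(x^{-i},0)\big)^2=I_p(f)/(p(1-p))$. Integrating over $[q_-,q_+]$, on which $p(1-p)=\Theta(1)$, yields $\delta\le|\E_{q_+}\mathcal{C}'-\E_{q_-}\mathcal{C}'|\le O(w)\cdot\max_{p\in[q_-,q_+]}I_p(\mathcal{C}')$, so there is a constant bias $p^\star\in[q_-,q_+]$ with
\[ I_{p^\star}(\mathcal{C}')\;\ge\;L\;:=\;\Omega(\delta/w)\;=\;\Omega\big(\delta\,\beta\log(1/\beta)/(\epsilon\,p_c)\big)\;\ge\;\Omega\big(\delta\,(1-p_c)/(\epsilon\,\log(1/\beta))\big). \]
Now invoke the $p$-biased Linial--Mansour--Nisan theorem at $p^\star$ (its constants are universal for $p^\star\in[0.4,0.6]$, the regime we landed in): for depth $d'=d+O(1)$ and size $s'$ one has $\sum_{|S|>t}\widehat{\mathcal{C}'}^{(p^\star)}(S)^2\le A\,s'\,2^{-t^{1/d'}/B}$. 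Since $I_{p^\star}(\mathcal{C}')=\sum_S|S|\,\widehat{\mathcal{C}'}^{(p^\star)}(S)^2\le t+N^{O(1)}\cdot A s'2^{-t^{1/d'}/B}$, the choice $t=\lfloor L/2\rfloor$ forces $L/2\le N^{O(1)}s'\,2^{-(L/2)^{1/d'}/B}$. Rearranging, either $(L/2)^{1/d'}$ is dominated by $O(\log(Ns'))$, which (as $s'$ is otherwise unconstrained) is possible only if $d'\gtrsim\log L/\log\log N$ --- case (1), \eqref{eq:final_0000}; or $s'\ge\exp\big(\Omega(L^{1/d'})\big)$, whence, absorbing the $N^{O(1)}$ overhead and the $O(1)$ depth increase into the constants and the exponent $d+3$, $s\ge c_1\exp(c_2 L^{1/(d+3)})$ --- case (2), \eqref{eq:final_001}. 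Plugging in the displayed lower bound on $L$ gives the stated inequalities.

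\textbf{Main obstacle.} The heart of the matter is item (iii) above: determining precisely how the multiplicative window $[(1-\epsilon)p_c,(1+\epsilon)p_c]$ is stretched when pulled back through $g^{-1}$, i.e.\ controlling $|g'|$ in a neighborhood of $\tfrac12$ for the de-biasing gadget, since this is exactly where every $p_c$- and $\beta$-dependent factor of the final bound is born; one must also check that $g$ is well approximated by its linearization over a window of width $w$, which needs $k\,w=o(1)$. The secondary technical point is to cite a version of the LMN / average-sensitivity estimate whose constants do \emph{not} blow up as the bias approaches $\tfrac12$ --- this is the opposite of the $p\to 0$ degradation flagged in the introduction, and is what the de-biasing step is designed to make available, but it should be stated with care.
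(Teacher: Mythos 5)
Your proposal is correct and follows essentially the same route as the paper's proof: a polynomial-size de-biasing pre-layer that transports the threshold to a constant bias, then the one-sided Russo--Margulis inequality combined with a $p$-biased LMN total-influence bound at that constant bias, and the same depth-versus-size dichotomy at the end (the paper also handles $p_c>1/2$ by negation). The only real difference is the gadget: the paper uses a depth-$1$ AND of $\lceil\log(1/p_c)\rceil$ bits with input bias tuned to $p_c^{1/\lceil\log(1/p_c)\rceil}$, so the transfer function is the explicit monomial $q\mapsto q^{k}$ and your flagged ``main obstacle'' (iii) --- the estimate $|g'(q)|=\Theta\left(\beta\log(1/\beta)\right)$ near $q=1/2$, which is true for your comparison gadget but requires a block-decomposition computation over the binary expansion of $a$ --- reduces to a one-line calculus fact.
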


\begin{remark}\label{rem:quantity}
The crucial quantity controlling the depth and size lower bounds in Theorem \ref{thm:main_1} is \begin{align}\label{eq:key_term}
\frac{\delta(1-p_c)}{\epsilon \log 1/\beta}=\frac{\delta}{\epsilon} \times \frac{1-p_c}{\log 1/\beta}.
\end{align}The first term $\delta/\epsilon$ is rather intuitive and reads as the \emph{ratio of the jump size to the window size}. One can interpret this naturally as a measure of the ``sharpness'' of the threshold; naturally, the sharper the transition is, the stronger becomes our lower bound. 

The second term $(1-p_c)/\log 1/\beta,$ vanishes when $p_c$ is sufficiently close to zero or one. While the fact that this term vanishes does not affect any of the applications we consider in Section \ref{sec:apps} (where in all cases, $p_c$ tends to either zero or one) it is possible that this term appears solely for technical reasons.

\end{remark}

\begin{remark}
   The Theorem \ref{thm:main_1} requires two conditions for the sharp threshold, that $\beta=N^{-\Omega(1)}$ and $\epsilon=o((1-p_c)/\log 1/\beta)$.

   The first condition $\beta=N^{-\Omega(1)}$ is purely done to ease a bit notation in \eqref{eq:final_0000} and \eqref{eq:final_001}, but it is very natural and we are not aware of any ``natural'' sharp threshold that violates it. In fact, if the condition does not hold, a sharp threshold is impossible unless the jump size is super polynomially small, i.e. $\delta=N^{-\omega(1)}.$ To see this, notice that if $\beta=N^{-\Omega(1)}$ is violated, then, without loss of generality we have $p_c=N^{-\omega(1)}.$Therefore,
\begin{align*}
        \P_{(1+\epsilon)p_c}(\bX =0) = \P_{(1-\epsilon)p_c}(\bX =0)=1-N^{-\omega(1)},
    \end{align*}and hence for any Boolean $f,$
    \begin{align*}
        |\E_{(1+\epsilon)p_c}f - \E_{(1-\epsilon)p_c}f|=|\P_{(1+\epsilon)p_c}(f(0)=1) -\P_{(1-\epsilon)p_c}(f(0)=1) +N^{-\omega(1)}|=N^{-\omega(1)}.
    \end{align*}

   For the first condition, we first highlight that it is in fact necessarily true for our bounds to be meaningful; e.g., the size lower bound \eqref{eq:final_001} trivializes otherwise. That being said, we are unsure if the condition is of fundamental importance and it could be appearing solely of technical nature.

\end{remark}

\begin{remark}\label{rem:assym}

Note that there is a particular type of asymmetry in our circuit lower bounds, and in particular in \eqref{eq:key_term} with respect to the point $p_c=1/2$. That is, if the threshold $p_c$ is closer to one our depth and size lower bounds appear to be getting worse by a factor $(1-p_c)$.

We highlight that this asymmetry is superficial. To see this, define for any circuit $\mathcal{C},$ the circuit $\mathcal{C}'(X):=\mathcal{C}(\neg X)$. Notice $\mathcal{C}'$ has the same depth and size with $\mathcal{C}$ and satisfies for all $p \in (0,1)$ by the obvious coupling, 
$$ \E_p\mathcal{C}'=\E_{1-p}\mathcal{C}.$$Hence, when a circuit $\mathcal{C}$ exhibits a sharp threshold at $p_c$ with window $\epsilon,$ then the circuit $\mathcal{C}'$ exhibits a threshold around now $1-p_c$ with window $\epsilon'=\epsilon p_c/(1-p_c)$\footnote{This follows since $1-(1+\epsilon)p_c=(1-\epsilon')(1-p_c)$ and $1-(1-\epsilon)p_c=(1+\epsilon')(1-p_c).$}. In particular, any threshold for $\mathcal{C}$ around $p_c=1-o(1)$ of window $\epsilon$ corresponds to a sharp threshold for $\mathcal{C}'$ of the same depth and size at $1-p_c=o(1)$ of a \emph{larger window size} $\Theta(\epsilon/(1-p_c))$. Yet, naturally, this should be accounted as the same sharp threshold, just appearing in the ``equivalent'' $\mathcal{C}$ and $\mathcal{C'}.$ In particular, a ``good'' circuit lower bound should be implying the same depth and size lower bounds on both the ``equivalent'' $\mathcal{C}$ and $\mathcal{C'}$. This is exactly the source of asymmetry in our bounds in Theorem \ref{thm:main_1} where the window of the transition is corrected by $(1-p_c)$ when $p_c$ tends to one.

\end{remark}

\begin{remark}
We finally remark that recent celebrated works (e.g., \cite{rossman2015average}) have been able to obtain optimal depth-size trade-offs for bounded depth circuits. In more detail, they have shown that for any constant $d,$ there exists some function computed by a polynomial-size depth $d$ circuit which requires $\exp(n^{\Omega(1/d)})$-size to be computed on average by any $\leq d-1$-depth circuit. We note that while our approach implies a number of circuit lower bounds as direct implications, none of these lower bounds would be able to capture such a strong depth separation.

Indeed, if a Boolean function has a sharp threshold with parameters $\epsilon, \delta, p_c$ then there are two cases. Either the ``crucial'' term $\delta (1-p_c)/[\epsilon \log (1/\beta)]$ in \eqref{eq:key_term} is $\exp(\Theta(N))$ for infinitely many $N$ or $\exp(o(N))$ for infinitely many $N$. In the first case Theorem \ref{thm:main_1} implies a $\exp(N^{\Theta(1/d)})$ size lower bound for all constant $d$. In the latter case, by checking \eqref{eq:final_001} we conclude no $\exp(N^{\Theta(1/d)})$ size lower bound can be implied by Theorem \ref{thm:main_1} for any constant $d$. 
\end{remark}

\begin{remark}\label{rem:log}
Recall that $\mathcal{AC}_0$ is the class of $d=O(1)$-depth and polynomial size circuits. A direct application of Theorem \ref{thm:main_1} is that for a sufficiently large constant $C>0,$ no $\mathcal{AC}_0$ circuit exhibits a sharp threshold with $\delta=\Omega(1)$ and $1/\epsilon \geq (\log N)^{d+C}$.

We now argue that the required poly-logarithmic window for any $\mathcal{AC}_0$ circuit -- i.e., that whenever they exhibit a sharp threshold, they must have window size $\epsilon \geq 1/(\log N)^{d+C}$ --  is tight for any $2 \leq d=O(1)$ up to multiplicative constants in the exponent of $\log N$. In particular, there are $\mathcal{AC}_0$ circuits of depth $d$ exhibiting a sharp threshold with a window size  $\epsilon=O(1/\log^{d/2} N)$.

Let us start with the case $d=2$. Recall that the tribes function on $N$ variables is the OR of $N/w$ disjoint ANDs on $w$ variables each, where we choose $w = \log_2 N - \log_2 \ln N + O(1)$. This choice is so that, under the unbiased measure, the probability that each AND is $1$ is $2 \ln N /N (1+o(1))$ and hence the overall expected value of the function equals $(1-2 \ln N /N(1+o(1)))^{N/w}=\Theta(1)$. Notice that clearly the tribes function is in $\mathcal{AC}_0$ with a circuit of depth $2.$ Also, if for some $\delta>0$ we use the measure with bias $1/2 \pm \delta$ for $|\delta| = O(1/ \log N)$ the probability that each AND clause in the tribes function is $1$ is now $ e^{2  \delta \ln N(1+o(1))} \ln N /N $ and the overall expected value of the function equals $(1-e^{2 \delta \ln N (1+o(1))} 2 \ln N /N)^{N/w}=e^{\Theta(\delta \ln N )}$. As a direct outcome, the window size of tribes is of length $\Theta(1 / \log N)$.

One can generalize the above construction to any even $d \geq 2$. For $d=4$, let $f$ be a tribe function on $\floor{\sqrt{N}}$ bits then we can define $g$ to be the tribes on the $\floor{\sqrt{N}}$ bits obtained by applying on each of the $\floor{\sqrt{N}}$ consecutive blocks of $\Theta(\sqrt{N})$ size, i.e.,
\[
g(x_1,\ldots,x_N) = f(f(x_1,\ldots,x_{\floor{\sqrt{N}}}),\ldots,f(...,x_N))
\]
Then, a direct calculation similar to the one above, implies that the window size for $g$ is of size of order $1/\log^2 n$.
Similarly, for arbitrary even $d \in \mathbb{N}, d \geq 2$ by using the tribes function of size $\floor{N^{2/d}}$
and decomposing $d/2$ times we get a function that is an $\mathcal{AC}_0$ circuit of depth $ d$ and of size $O(N)$ with window size is of order $\Theta(1/\log^{d/2} N)$. 

\end{remark}
\subsection{Converse: monotone graph properties}

Here we prove a partial converse to our main result Theorem \ref{thm:main_1}. We first introduce some notation, specifically used only at this section. 

Let $f: \{0,1\}^N \rightarrow \{0,1\}$ be a Boolean function. We assume $N=\binom{n}{2}$ for some $n \geq 1$ and in particular this allows us to interpret each input $X \in \{0,1\}^N$ as the indicator vector of the edges of an undirected graph on $n$ vertices. We say $f$ corresponds to a monotone graph property if the following two conditions hold: \begin{itemize}
    \item[(1)] (Monotonicity) It holds $f(X) \leq f(Y)$ whenever $X_i \leq Y_i$ for $i=1,\ldots,N.$ 
    \item[(2)] (Invariance) For any automorphism $\phi$ of (the graph) $X$ it holds $f(\phi(X))=f(X).$
\end{itemize}

Now, for any $\alpha \in (0,1)$, we define $p_{\alpha} \in (0,1)$ to be the unique value such that $\E_{p_{\alpha}}f=\alpha.$ A folklore result by Bollobas-Thomason \cite{bollobas1987threshold} implies that for any fixed (independent of $N$) $\alpha \in (0,1)$ and any monotone Boolean function $f$ it holds $p_{\alpha}=\Theta(p_{1/2}).$ 

 We will need an equivalent way to define a sharp threshold compared to Definition \ref{dfn:sharp}. Notice that Definition \ref{dfn:sharp} implies by the mean value theorem that for some $p=p_N \in (0,1)$ it holds
\begin{align}\label{eq:deriv_motiv} p \frac{d}{dp} \E_p f =\Omega(\delta/\epsilon)=\omega(1).
    \end{align} Moreover, by simply using the definition of the derivative, \eqref{eq:deriv_motiv} is also implying the existence of sharp threshold per Definition \ref{dfn:sharp}, making \eqref{eq:deriv_motiv} equivalent with Definition \ref{dfn:sharp}.


We prove the following partial converse to Theorem \ref{thm:main_1}. Suppose that for some arbitrary $\alpha \in (0,1/2)$, \eqref{eq:deriv_motiv} does not hold for any $p \in [p_{\alpha},p_{1-\alpha}]$ (and hence $f$ does not exhibit a sharp threshold at $[p_{\alpha},p_{1-\alpha}]$). We then prove that the Boolean function $f$ can in fact be computed by an $\mathcal{AC}_0$ circuit with high probability over the $p$-biased measure in a uniform way over all  $p \in [p_{\alpha},p_{1-\alpha}].$

\begin{theorem}\label{thm:conv}
    Assume a monotone graph property $f: \{0,1\}^N \rightarrow \{0,1\}, N=\binom{n}{2}$ with $\omega(1/n^2)=p_c=1-\Omega(1)$ such that for some constants $\alpha \in (0,1)$ and $C>0,$  \begin{align}\label{eq:deriv}\sup_{p \in [p_{\alpha},p_{1-\alpha}]} p \frac{d}{dp} \E_p f \leq C.
    \end{align}
    Then for any $\gamma>0,$ there exists a constant $k(\gamma,\alpha,C)>0$ and  an $\mathcal{AC}_0$ circuit $\mathcal{C}: \{0,1\}^N \rightarrow \{0,1\}$ of depth four and size at most $N^{k(\gamma,\alpha,C)}$, such that 
    $$\sup_{p \in [p_{\alpha},p_{1-\alpha}]} | \E_p f-\E_p \mathcal{C}| \leq \gamma.$$
\end{theorem}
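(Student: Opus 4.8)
The plan is to invoke Friedgut's celebrated characterization of monotone graph properties without a sharp threshold. Friedgut~\cite{Friedgut} proved that if a monotone graph property $f$ fails to have a sharp threshold at its critical point — quantitatively, if $\sup_{p \in [p_\alpha, p_{1-\alpha}]} p\,\frac{d}{dp}\E_p f \leq C$ as in \eqref{eq:deriv} — then $f$ is ``approximately determined by the appearance of small subgraphs''. More precisely, for every $\gamma > 0$ there is a finite list of graphs $G_1, \dots, G_m$, each on at most $k = k(\gamma, \alpha, C)$ edges, and an $\varepsilon_0 = \varepsilon_0(\gamma, \alpha, C)$, such that for all $p$ in the threshold window, the event $\{f(X) = 1\}$ agrees, up to $\P_p$-probability $\gamma$, with the event that $X$ contains at least one of $G_1, \dots, G_m$ as a subgraph. (The number $m$ of graphs and the maximum size $k$ are bounded in terms of $\gamma, \alpha, C$ only, independent of $N$.) This is exactly the input I need: ``$X$ contains a copy of some $G_j$'' is a monotone property computable by a very shallow circuit.

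First I would fix $\gamma > 0$ and extract from Friedgut's theorem the finite family $\mathcal{G} = \{G_1, \dots, G_m\}$ and the approximation guarantee. Next I would build the depth-four circuit explicitly. For a fixed target graph $G_j$ on vertex set $\{1, \dots, t_j\}$ with $t_j = O(1)$ vertices, the property ``the input graph on $[n]$ contains $G_j$ as a (not necessarily induced) subgraph'' can be written as
\[
\bigvee_{\iota} \ \bigwedge_{\{a,b\} \in E(G_j)} X_{\{\iota(a), \iota(b)\}},
\]
where $\iota$ ranges over the at most $n^{t_j}$ injections $[t_j] \hookrightarrow [n]$. This is an OR of ANDs, each AND of fan-in $|E(G_j)| \leq k$, so it is a depth-two circuit of size at most $n^{t_j} \cdot (k+1) = N^{O(1)}$. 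Taking the OR over $j = 1, \dots, m$ keeps depth two (collapse the two OR layers) and size $\sum_j N^{O(1)} = N^{O(1)}$; call this circuit $\mathcal{C}$. Since $\mathcal{C}$ has depth two and polynomial size it is certainly an $\mathcal{AC}_0$ circuit, and in fact fits within the claimed depth-four, size-$N^{k(\gamma,\alpha,C)}$ bound with room to spare (the depth-four slack is harmless). Finally, by construction $\mathcal{C}(X) = \one\{X \text{ contains some } G_j\}$, so the Friedgut approximation gives $\P_p(\mathcal{C}(X) \neq f(X)) \leq \gamma$ for every $p \in [p_\alpha, p_{1-\alpha}]$, and since $\mathcal{C}, f$ are $\{0,1\}$-valued, $|\E_p f - \E_p \mathcal{C}| \leq \P_p(\mathcal{C}(X) \neq f(X)) \leq \gamma$, uniformly in $p$.

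The main obstacle is purely one of \emph{citation hygiene and bookkeeping} rather than of genuine mathematical difficulty: one must state Friedgut's theorem in precisely the quantitative form needed — uniform over the whole threshold window $[p_\alpha, p_{1-\alpha}]$, with the size bound $k$ and the count $m$ depending only on $\gamma, \alpha, C$ and not on $N$ — and check that the hypotheses in \eqref{eq:deriv} (together with $\omega(1/n^2) = p_c = 1 - \Omega(1)$, which keeps $p_\alpha$ bounded away from $0$ and $1$ in the relevant scaling and ensures the Bollob\'as--Thomason equivalence $p_\alpha = \Theta(p_{1/2})$ applies) match the hypotheses of Friedgut's statement. There is a small subtlety that Friedgut's original theorem is phrased for a fixed value of $p$ near the threshold; one must observe that the approximating family $\mathcal{G}$ can be chosen uniformly over the $O(1)$-length (in multiplicative scale) interval $[p_\alpha, p_{1-\alpha}]$, either by a compactness/continuity argument or by simply noting that Friedgut's proof already produces such a uniform family. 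I would also remark, as the paper does, that the whole argument goes through verbatim for general monotone Boolean functions conditional on Friedgut's Conjecture~1.5.
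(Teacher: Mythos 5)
Your starting point (Friedgut's characterization) and your depth-two circuit for ``contains one of a fixed finite list of constant-size subgraphs'' are both correct and match the paper's first step. But the issue you flag as a ``small subtlety'' of citation hygiene --- that Friedgut's theorem is stated for a single fixed $p$ and one must argue the approximating family can be chosen uniformly over all of $[p_{\alpha},p_{1-\alpha}]$ --- is in fact the central difficulty of the theorem, and neither of your two suggested fixes works as stated. Friedgut's theorem produces, for each $p$, a monotone small-subgraph property $g_p$ with $|\E_p f-\E_p g_p|\leq\gamma$ \emph{at that $p$ only}; since $[p_{\alpha},p_{1-\alpha}]$ spans a constant multiplicative factor over which $\E_p f$ climbs from $\alpha$ to $1-\alpha$, a single $g$ calibrated at one endpoint has no reason to track $\E_p f$ across the whole window (both are monotone in $p$, but their rates of increase need not match), and a compactness argument is unavailable because the interval and the functions depend on $N$. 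What one can prove is only \emph{local} robustness: the paper partitions the window into $M=\Theta(1/\zeta)$ subintervals $[q_i,q_{i+1}]$ with $q_i=p_{\alpha+i\zeta}$, applies Friedgut at each $q_i$ to get $g_i$, and proves a perturbation lemma (using that the minimal elements are balanced graphs at criticality, a monotone coupling, and union bounds) showing $g_i$ remains $\gamma$-accurate only on a $\Theta(\zeta p_c)$-neighborhood of $q_i$.

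This forces a second ingredient your proposal omits entirely: the circuit does not know $p$, so it cannot simply output the right $g_i$. The paper's circuit first \emph{estimates} $p$ from the input in $\mathcal{AC}_0$ --- it ORs the first $S=\omega(1)$, $S=o(\log N)$ blocks of $K=\lfloor 1/p_c\rfloor$ bits each to produce $S$ i.i.d.\ bits of constant bias $b(p)=1-(1-p)^K$, counts them exactly with $2^{S}=N^{o(1)}$ gates (exact counting of $o(\log N)$ bits is feasible even though majority on $N$ bits is not), and then in a fourth layer ORs together only those $g_m$ whose index is consistent with the estimated bias. This dispatching is precisely why the theorem states depth four rather than the depth two your construction would give; the ``room to spare'' you mention is actually fully used. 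To repair your proof you would need to either justify a genuinely uniform version of Friedgut's theorem over the whole constant-factor window (which is not available) or add the partition, the perturbation lemma, and the bias-estimation-and-dispatch layers.
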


\begin{remark}
 Consider any Boolean function $f$ corresponding to a monotone graph property. Theorem \ref{thm:main_1} implies that if $f$ exhibits a sharp threshold with jump size $\delta=\Theta(1)$ and window size $\epsilon$ satisfying $1/\epsilon=(\log N)^{\omega(1)}$ then $f$ cannot be computed in $\mathcal{AC}_0$ on average, while Theorem \ref{thm:conv} implies that if for any jump size of $f$, say $\delta=\Omega(1)$, the minimal window size satisfies $\epsilon=\Omega(1)$ (i.e., $f$ does not exhibit a sharp threshold) then $f$ can be computed in $\mathcal{AC}_0$ on average.

    Notice that the above characterization leaves open the case where $f$ exhibits a sharp threshold with jump size $\delta=\Theta(1),$ but with window size satisfying $\omega(1)=1/\epsilon=(\log N)^{O(1)}$.  It is natural to wonder whether one of the two directions can be extended to this case. Based on Remark \ref{rem:log} we know that using the tribes boolean function, there are $\mathcal{AC}_0$ circuits with sharp thresholds of such order of window sizes. Hence, the only possibility would be if our positive result, Theorem \ref{thm:conv}, could be extended to such window sizes. 

    We are unsure about the answer to this interesting question, but we remark that a standard first and second moment method implies that for $k=O(\log n)$ the Boolean function $f$ detecting the existence of a $k$-clique in a graph (see also Definition \ref{dfn:clique}) exhibits a sharp threshold with a window size of order $1/k^2$. In particular, if $k=\Theta((\log n)^x)$ for any $x>0$ then the window size of the $k$-clique function is of order $1/(\log n)^{O(1)}$ and therefore any desired extension of Theorem \ref{thm:conv} would imply that the $k$-clique $f$ is computable in $\mathcal{AC}_0$ on average for these values of $k$. Notably, computing $f$ for $\omega(1)=k=O(\log n)$ in $\mathcal{AC}_0$ for a worst-case graph is well-known to be impossible, and various average-case lower bounds are also known. Yet, to the best of our knowledge no such average-case lower bounds is applicable for the Erdos-Renyi measure (arguably, the simplest of all such measures on inputs) which is what we mean by ``on average'' in the present work.

\end{remark}

\section{Applications}\label{sec:apps}

In this section, we use Theorem \ref{thm:main_1} to obtain new ``average-case'' circuit lower bounds for two fundamental problems: the $k$-clique problem and 2-SAT. We need first a definition.
\begin{definition}\label{dfn:approx}
    Fix $f: \{0,1\}^N \rightarrow \{0,1\}$. We say that a circuit $\mathcal{C}: \{0,1\}^N \rightarrow \{0,1\}$ computes $f$ \emph{on average}, if for some constant $\xi \in (0,1/2)$ and $p_1=(p_1)_N, p_2=(p_2)_N \in (0,1)$ with $\E_{p_1}f = \xi$ and $\E_{p_2}f = 1-\xi,$ it holds that for any constant $\epsilon>0$ if $N$ is large enough, 
    $$  \max \{ \E_{p_1}|\mathcal{C}-f|,\E_{p_2}|\mathcal{C}-f|\} \leq \epsilon.$$
\end{definition}

\begin{remark}
Informally, we say that a circuit computes a Boolean function on average if it is equal to the Boolean function with high probability, for at least \emph{two values of} $p$ at the function's window. This way of defining the average-case guarantees for circuits based on their performance around criticality has already appeared before in the literature (see e.g., \cite{rossman2008constant,amano2010k,li2017ac} for subgraph isomorphism/containment problems). We note that the definition is also motivated by the random $k$-SAT literature where ``hard'' instances have been proposed to appear around the critical threshold \cite{selman1996generating, selman1996critical, mitchell1996some}.
\end{remark}

\subsection{Circuit lower bound for linear-sized cliques}\label{sec:apps_cliques}

Our first application is establishing new lower bounds for bounded depth circuits for the $k$-clique problem. 
\begin{definition}[The $k$-\textit{clique} problem]\label{dfn:clique}
    Let $n,k \in \N$ with $k \leq n.$ The $k$-clique Boolean function $f: \{0,1\}^{\binom{n}{2}} \rightarrow \{0,1\}$ equals to 1 if and only if the $n$-vertex undirected input graph $G \in \{0,1\}^{\binom{n}{2}}$ contains a $k$-clique. 
\end{definition}

We establish the following result.

\begin{theorem}\label{thm:clique}
    Let $n \in \mathbb{N}$ and $k \in \mathbb{N}$ with $k=\Theta(n)$. Any circuit $\mathcal{C}: \{0,1\}^{\binom{n}{2}} \rightarrow \{0,1\}$ with depth $d$ and size $s$ which computes $k$-clique on average  must satisfy either
        \begin{align}\label{eq:final_cl_0}
      d\geq (1/4-o(1))(\log \log n)^{-1}  \log n,
    \end{align}

 or
    \begin{align}\label{eq:final_cl1_0}
     s\geq \exp \left( (n^{(1/2-o(1))/(d+3)}\right).
    \end{align}

    In particular, $\mathcal{AC}_0$ cannot solve the $k$-clique problem for $k=\Theta(n)$ in $G(n,p)$ around the critical threshold.
\end{theorem}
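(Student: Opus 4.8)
\textbf{Proof plan for Theorem \ref{thm:clique}.}

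The plan is to derive this as an essentially direct corollary of Theorem \ref{thm:main_1}, once we establish that the $k$-clique Boolean function (for $k = \Theta(n)$) exhibits a sharp threshold with the right parameters. The one subtlety, as flagged in the footnote of Theorem \ref{thm:informal_clique}, is that it is cleaner to work with the complementary Boolean function $g$ which decides whether a graph contains an \emph{independent set} of size $k$; since $G(n,p)$ has an independent set of size $k$ iff $G(n,1-p)$ has a $k$-clique, any circuit $\mathcal{C}$ computing $k$-clique on average yields a circuit $\mathcal{C}'(X) = \mathcal{C}(\neg X)$ of the same depth and size computing $g$ on average, and by Remark \ref{rem:assym} the sharp-threshold parameters translate between the two. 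So it suffices to pin down the threshold location and window of the independent-set function, whose threshold is close to $0$ (equivalently, the clique function's threshold is close to $1$, which is why the bounds come out as they do).

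First I would invoke classical results on the independence number $\alpha(G(n,q))$ of sparse random graphs — the Frieze-type two-point concentration results \cite{frieze1990independence} — to identify the critical edge-density $q_c = (q_c)_n$ at which $\P_q(\alpha(G(n,q)) \ge k)$ jumps from near $0$ to near $1$, for $k = \Theta(n)$. For $k$ linear in $n$ this forces $q_c = \Theta(1/n)$ (so $\beta = \min\{q_c, 1-q_c\} = n^{-\Omega(1)}$, as required), and the relevant calculation is a first/second moment method on the number of independent sets of size $k$: the first moment $\binom{n}{k}(1-q)^{\binom{k}{2}}$ transitions sharply as $q$ varies, and standard concentration (Azuma/Talagrand on the vertex-exposure martingale, or the second moment bound in \cite{frieze1990independence}) upgrades this to concentration of $\alpha(G(n,q))$. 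Carrying through the expansion $\log\bigl[\binom{n}{k}(1-q)^{\binom{k}{2}}\bigr]$ around $q_c$ shows that changing $q$ multiplicatively by $1 \pm \epsilon$ with $\epsilon = \Theta(n^{-1/2})$ changes this log-expectation by $\Theta(k^2 \epsilon q_c) = \Theta(n \cdot n^{-1/2}) = \Theta(\sqrt n) = \omega(1)$, which is exactly what is needed to push the expected value of $g$ from $\xi$-level across to $(1-\xi)$-level; hence $g$ (and so $k$-clique) exhibits a sharp threshold with window $\epsilon = \Theta(n^{-1/2})$ and jump size $\delta = \Theta(1)$. I would also check the side condition $\epsilon = o((1-q_c)/\log(1/\beta))$ of Theorem \ref{thm:main_1}: here $1 - q_c = 1 - o(1)$ and $\log(1/\beta) = \Theta(\log n)$, so this reads $n^{-1/2} = o(1/\log n)$, which holds.

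Second, I would feed these parameters into Theorem \ref{thm:main_1}. The crucial quantity \eqref{eq:key_term} is $\delta(1-p_c)/(\epsilon \log(1/\beta)) = \Theta(1)\cdot\Theta(1)/(\Theta(n^{-1/2})\cdot\Theta(\log n)) = \Theta(n^{1/2}/\log n) = n^{1/2 - o(1)}$. Plugging into \eqref{eq:final_0000} gives the depth alternative $d \ge (1/(2\log\log n))\log(n^{1/2-o(1)}) - 3 = (1/4 - o(1))(\log\log n)^{-1}\log n$, matching \eqref{eq:final_cl_0}; and plugging into \eqref{eq:final_001} gives $s \ge c_1 \exp(c_2 (n^{1/2-o(1)})^{1/(d+3)}) = \exp(n^{(1/2-o(1))/(d+3)})$, matching \eqref{eq:final_cl1_0}. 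The final sentence about $\mathcal{AC}_0$ is then immediate: for constant depth $d = O(1)$ the depth alternative fails for large $n$, so the size alternative must hold, forcing superpolynomial (indeed $\exp(n^{\Omega(1)})$) size. The main obstacle — really the only place requiring care — is the sharp-threshold estimate in the first step: getting the window down to the optimal $\Theta(n^{-1/2})$ rather than something weaker requires the two-point concentration of $\alpha(G(n,q))$ and a careful second-moment computation, since a cruder concentration bound (e.g. window $\Theta(\sqrt{\log n / n})$ from naive Azuma on a log-scale) would only cost logarithmic factors here but it is worth being precise; everything after that is bookkeeping with the inequalities of Theorem \ref{thm:main_1}.
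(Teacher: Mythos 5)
Your overall architecture is the same as the paper's: pass to the independent-set complement, establish a sharp threshold at $q_c=\Theta(1/n)$ with absolute window $n^{-3/2+o(1)}$ (equivalently multiplicative window $\Theta(n^{-1/2})$ around $q_c$), and feed the resulting value $\delta(1-p_c)/(\epsilon\log 1/\beta)=n^{1/2-o(1)}$ into Theorem \ref{thm:main_1}; your bookkeeping of the depth and size alternatives is correct, as is the verification of the side conditions. The gap is in the one step you yourself flag as the crux: the sharp-threshold estimate. A first/second moment method on the number of independent sets of size $k=\Theta(n)$ does not work in this regime. For linear-size independent sets in $G(n,\Theta(1)/n)$ the true threshold for $\{\alpha\ge k\}$ does \emph{not} sit at the first-moment threshold (the independence number of $G(n,d/n)$ is strictly below the first-moment bound for fixed $d$, and the second moment of the count is not $O((\E Z)^2)$ near the true threshold), so ``the first moment transitions sharply'' does not locate $q_c$. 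More importantly, the inference ``the log of $\binom{n}{k}(1-q)^{\binom{k}{2}}$ changes by $\omega(1)$, which is exactly what is needed to push $\E_q g$ from $\xi$ to $1-\xi$'' is invalid: at the true $q_c$ the expected count of size-$k$ independent sets is $e^{\Theta(n)}$, so multiplying it by $e^{\pm\Theta(\sqrt n)}$ leaves it exponentially large on both sides of the window and says nothing about the probability $\P(\alpha\ge k)$ crossing from $1-\xi$ to $\xi$.

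What is actually needed (and what the paper's Lemma \ref{lem:sharp_transition} supplies) is a statement about the \emph{median} of $\alpha$, not about moments of the count: Azuma--Hoeffding on the vertex-exposure martingale gives concentration of $\alpha(G(n,d/n))$ within $\pm n^{1/2+\epsilon/2}$, and one must then show that increasing the edge density by $n^{-3/2+\epsilon}$ shifts $\alpha$ \emph{downward by more than this fluctuation scale}. The paper does this by a sprinkling/coupling argument: write $G(n,d/n+n^{-3/2+\epsilon})$ as $G_1\cup G_2$ with $G_1\sim G(n,d/n)$ and $G_2$ an independent sparse sprinkle; $G_1$ has $\Theta(n)$ isolated vertices, all of which belong to every maximum independent set of $G_1$, and w.h.p.\ $G_2$ places $\Omega(n^{1/2+2\epsilon/3})$ vertex-disjoint isolated edges among them, each of which removes at least one such vertex from every independent set of $G_1\cup G_2$. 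Hence $\alpha$ drops by $\Omega(n^{1/2+2\epsilon/3})\gg n^{1/2+\epsilon/2}$, which combined with the concentration yields part (b) of the lemma. Replacing your moment computation with an argument of this type (or an equivalent control of how the median of $\alpha$ moves with $q$) is necessary to close the proof; everything downstream of the lemma in your write-up is fine.
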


\subsection{Circuit lower bound for 2-SAT}\label{sec:apps_sat}

We focus now on our lower bound for circuits computing 2-SAT.

  \begin{definition}[A 2-\textit{SAT} formula]
    Fix $n \in \N$. Let $n$ Boolean variables $x_1,\ldots,x_n$, which correspond to $2n$ literals $x_1,\ldots,x_n, \neg x_1,\ldots, \neg x_n.$ We call two literals $x,y$ strictly distinct if neither $x \not = y$, nor $\neg x \not = y.$ A 2-SAT formula is a conjunction of $m \in N$ distinct clauses, each of which is the disjunction of at most two strictly distinct literals.
 \end{definition}  

 \begin{definition}[The 2-\textit{SAT} Boolean function]
    Let $C_1,\ldots,C_N, i=1,\ldots,N=2n(n-1)$ be an ordering of the entire set of clauses (with strictly distinct literals) of a 2-SAT formula. Note that any 2-SAT formula can be encoded by $X \in \{0,1\}^N$ where $X_i=1$ if and only if clause $C_i$ is used in the formula. The 2-SAT Boolean function $f: \{0,1\}^N \rightarrow \{0,1\}$ equals 1 if and only if $X$ is satisfiable by some assignment on $x_i \in \{0,1\}, i=1,\ldots,n.$ 
 \end{definition}

We establish the following result.

\begin{theorem}\label{thm:sat}
    Let $n \in \mathbb{N}$. Any circuit $\mathcal{C}: \{0,1\}^{2n(n-1)} \rightarrow \{0,1\}$ with depth $d$ and size $s$ which computes 2-SAT on average must satisfy,
   either \begin{align}\label{eq:final_SAT_0}
      d \geq (1/6-o(1))(\log \log n)^{-1}  \log n.
    \end{align} or 
    \begin{align}\label{eq:final_SAT1_0}
     s \geq \exp \left( (n^{1/(3d+9)-o(1)}\right).
    \end{align}
    In particular, $\mathcal{AC}_0$ cannot solve random 2-SAT around the critical threshold.
\end{theorem}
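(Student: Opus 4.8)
The plan is to deduce Theorem~\ref{thm:sat} from the black-box Theorem~\ref{thm:main_1} along exactly the route sketched in the introduction. First I would show that the 2-SAT Boolean function $f$ has a sharp threshold (Definition~\ref{dfn:sharp}) with an inverse-polynomially small window under the $p$-biased measure; then, since a circuit computing $f$ on average agrees with $f$ with high probability at two biases inside that window, the circuit itself must exhibit a sharp threshold; and finally I would feed the resulting parameters into Theorem~\ref{thm:main_1}.

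\emph{Step 1 (input hardness).} The one ingredient that is not a black box is the scaling-window theorem of \cite{bollobas2001scaling}: writing $F(n,m)$ for the conjunction of $m$ uniformly random distinct 2-clauses, $\P(F(n,m)\text{ satisfiable})$ passes from nearly $1$ to nearly $0$ as $m$ sweeps an interval of width $\Theta(n^{2/3})$ about $m=n$, and more precisely converges, along $m=n+\lambda n^{2/3}$, to a continuous strictly decreasing bijection $\phi:\mathbb{R}\to(0,1)$. I would transfer this to the product measure $\P_p$ on $\{0,1\}^N$ with $N=2n(n-1)$ (each clause kept independently with probability $p$) by taking $p_c:=n/N=\Theta(1/n)$. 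Under $\P_{(1+\lambda n^{-1/3})p_c}$ the number $M$ of kept clauses is $\mathrm{Binomial}\bigl(N,(1+\lambda n^{-1/3})p_c\bigr)$, so by a Chernoff bound it concentrates within $\sqrt n\log n=o(n^{2/3})$ of its mean $n+\lambda n^{2/3}$; since satisfiability is monotone non-increasing in the clause set — conditioning on $M=m$ recovers the uniform $m$-clause model and $m\mapsto\P(\mathrm{SAT}\mid M=m)$ is monotone — one sandwiches $\E_{(1+\lambda n^{-1/3})p_c}f$ between two fixed-$m$ SAT probabilities and gets $\E_{(1+\lambda n^{-1/3})p_c}f=\phi(\lambda)+o(1)$. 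Choosing a large enough constant $C$ and $\epsilon:=Cn^{-1/3}$, the jump $\delta:=\bigl|\E_{(1-\epsilon)p_c}f-\E_{(1+\epsilon)p_c}f\bigr|=\phi(-C)-\phi(C)-o(1)=\Omega(1)$, so $f$ has a sharp threshold with $p_c=\Theta(1/n)$, window $\epsilon=\Theta(n^{-1/3})$, jump $\delta=\Omega(1)$, and $\delta/\epsilon=\Omega(n^{1/3})=\omega(1)$. (If the cited model allows repeated clauses, the standard negligible-collision comparison at $m=\Theta(n)=o(\sqrt N)$ removes the discrepancy.)

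\emph{Step 2 (from $f$ to the circuit and the parameter bookkeeping).} Suppose $\mathcal{C}$ computes 2-SAT on average per Definition~\ref{dfn:approx}: there are a constant $\xi\in(0,1/2)$ and biases $p_1,p_2$ with $\E_{p_1}f=\xi$, $\E_{p_2}f=1-\xi$ and $\P_{p_i}(\mathcal{C}\neq f)=o(1)$. As $p\mapsto\E_p f$ is a strictly decreasing polynomial (the 2-SAT function is monotone and non-constant), $p_1,p_2$ are unique, and the convergence $\E_{(1+\lambda n^{-1/3})p_c}f\to\phi(\lambda)$ from Step~1 forces $p_1=p_c\bigl(1+\phi^{-1}(\xi)n^{-1/3}+o(n^{-1/3})\bigr)$ and $p_2=p_c\bigl(1+\phi^{-1}(1-\xi)n^{-1/3}+o(n^{-1/3})\bigr)$, where $\phi^{-1}(\xi),\phi^{-1}(1-\xi)$ are finite constants; hence $p_1>p_2$ and $|p_1-p_2|=\Theta(p_c n^{-1/3})$. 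From $|\E_{p_i}\mathcal{C}-\E_{p_i}f|\le\P_{p_i}(\mathcal{C}\neq f)=o(1)$ we get $\E_{p_1}\mathcal{C}=\xi+o(1)$ and $\E_{p_2}\mathcal{C}=1-\xi+o(1)$. Putting $p_c':=(p_1+p_2)/2=\Theta(1/n)$ and $\epsilon':=(p_1-p_2)/(p_1+p_2)=\Theta(n^{-1/3})$ we have $\{(1-\epsilon')p_c',(1+\epsilon')p_c'\}=\{p_2,p_1\}$ and $|\E_{(1+\epsilon')p_c'}\mathcal{C}-\E_{(1-\epsilon')p_c'}\mathcal{C}|\ge(1-2\xi)-o(1)=:\delta'=\Omega(1)$, so $\mathcal{C}$ itself obeys Definition~\ref{dfn:sharp} with $(\epsilon',\delta',p_c')$. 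Now apply Theorem~\ref{thm:main_1}: $\beta=\min(p_c',1-p_c')=p_c'=\Theta(1/n)=N^{-\Omega(1)}$, and $(1-p_c')/\log(1/\beta)=\Theta(1/\log n)$, so the hypothesis $\epsilon'=o\bigl((1-p_c')/\log(1/\beta)\bigr)$ holds; the governing quantity \eqref{eq:key_term} equals
\[
\frac{\delta'(1-p_c')}{\epsilon'\log(1/\beta)}=\Theta\!\left(\frac{n^{1/3}}{\log n}\right)=n^{1/3-o(1)}.
\]
Substituting into \eqref{eq:final_0000}, using $\log\log N=(1+o(1))\log\log n$ since $N=\Theta(n^2)$, gives $d\ge(1/6-o(1))(\log\log n)^{-1}\log n$, i.e.\ \eqref{eq:final_SAT_0}; substituting into \eqref{eq:final_001} and simplifying $\bigl(n^{1/3-o(1)}\bigr)^{1/(d+3)}=n^{1/(3d+9)-o(1)}$ gives \eqref{eq:final_SAT1_0}. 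The closing $\mathcal{AC}_0$ remark then follows because a constant-depth polynomial-size circuit violates both branches of the dichotomy, and the biases $p_1,p_2$ at which agreement is required lie within a multiplicative $1\pm\Theta(n^{-1/3})$ of $p_c$, i.e.\ around the satisfiability threshold $m/n=1$.

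\emph{Main obstacle.} I expect the only real work to be Step~1 — specifically the passage from the fixed-$m$ uniform random 2-SAT model, in which \cite{bollobas2001scaling} state the $\Theta(n^{2/3})$ scaling window, to the $p$-biased product model demanded by Definition~\ref{dfn:sharp} and Theorem~\ref{thm:main_1}. The whole transfer rests on the numerical coincidence that the binomial fluctuation $\sqrt{N p_c}=\Theta(\sqrt n)$ in the number of sampled clauses is of strictly smaller order than the scaling window $\Theta(n^{2/3})$, so passing to the product measure does not blur the transition, and on monotonicity of satisfiability to sandwich $\E_p f$ between two fixed-$m$ SAT probabilities; the same input pins down $p_1,p_2$ in Step~2. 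Everything afterwards is bookkeeping with the parameters $\delta',\epsilon',p_c',\beta$ of Theorem~\ref{thm:main_1}, and I would not attempt to optimize the constants $1/6$ and $1/(3d+9)$.
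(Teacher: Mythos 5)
Your proposal is correct and follows essentially the same route as the paper: cite the $\Theta(n^{-4/3})$ scaling window of Bollob\'as et al.\ to conclude that the two biases $p_1,p_2$ from Definition~\ref{dfn:approx} satisfy $|p_1-p_2|=O(n^{-4/3})$ with $p_1,p_2=\Theta(1/n)$, deduce that $\mathcal{C}$ itself has a sharp threshold with $p_c=\Theta(1/n)$, $\epsilon=\Theta(n^{-1/3})$, $\delta=\Omega(1)$, and plug $\delta(1-p_c)/(\epsilon\log(1/\beta))=n^{1/3-o(1)}$ into Theorem~\ref{thm:main_1}. The only difference is that you explicitly carry out the transfer from the fixed-$m$ model of \cite{bollobas2001scaling} to the $p$-biased product model (via binomial concentration plus monotonicity), a step the paper absorbs into its citation of Lemma~\ref{lem:sharp_transition_SAT}; your version of that step is sound.
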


\section{Circuit Lower bounds for statistical problems}

In this section, we use our main result to establish circuit lower bounds against statistical estimation problems. Specifically, we study the following natural family of Boolean statistical problems, where a statistician wishes to recover a hidden subset of $[N]$ from observing its union with a random subset of $[N].$

\begin{definition}[Hidden Subset Problem]\label{dfn:hidden}
Fix $N \in \N$ and an arbitrary known prior distribution $\mathcal{P}=\mathcal{P}_N$ over $\{0,1\}^{N}$. The \emph{Hidden Subset Problem}, parameterized by the prior $\mathcal{P},$ is the following task. For some noise level $p \in (0,1),$ and for a sample $S \sim \mathcal{P},$ the statistician observes $Y=S \lor \bX$ where $\bX \sim \P_p.$ The goal of the statistician is to construct an estimator $A=A(N,\mathcal{P},p): \{0,1\}^N \rightarrow \{0,1\}^{N}$ which achieves \emph{exact recovery}, i.e., for large enough $N$, $$ \P_{S \sim \mathcal{P}, \bX \sim \P_p} (A(Y=S \lor \bX)=S)  \geq 0.9.$$
\end{definition}We note that, of course, the choice of $0.9$ is arbitrary and can be replaced by any constant less to one. As we explain in Section \ref{sec:apps_pc}, the planted clique model can be modelled as a Hidden Subset Problem and as explained in \cite{mossel2023sharp} much more generality is possible.

Notice that the larger $p$ is, the harder is to recover $S$ in the Hidden Subset Problem. Hence, we interpret $1/p$ as the noise level in this problem. We now proceed with the so-called, information-theoretic threshold of the problem.

\begin{definition}[Information-theoretic threshold]\label{dfn:it_threshold}
    For the Hidden Subset Problem $\mathcal{P}$, we call $p_{\mathrm{IT}}=p_{\mathrm{IT}}(\mathcal{P})$ the supremum over all $p$ for which some estimator $A: \{0,1\}^N \rightarrow \{0,1\}^{N}$ achieves \emph{exact} recovery of $S$.
\end{definition}Note that in statistical terms, this can be phrased as matching the performance of the optimal maximum a posteriori estimator. Also observe that by definition $p_{\mathrm{IT}}$ may be depending on $N.$

As we explained in the introduction, sometimes estimation settings exhibit a sharp threshold as the noise level changes, known as the All-or-Nothing (AoN) phenomenon. As we mentioned, in this context the role of the noise level is played by $1/p$ and a recent work \cite{mossel2023sharp} has extensively studied under what conditions AoN for the Hidden Subset Problem appears as $p$ increases.

We proceed with defining the following slight variant of the AoN phenomenon adapted to the exact recovery task. The variation is that AoN has been traditionally defined for the mean-squared-error metric, but in our context, the exact recovery criterion (i.e., 0-1 loss) is more technically appropriate. 

Recall that by Definition \ref{dfn:it_threshold}, for any $\epsilon=\epsilon_N>0$ if $p <(1-\epsilon)p_{\mathrm{IT}},$ then \emph{some} algorithm $A: \{0,1\}^N \rightarrow \{0,1\}^{N}$ achieves for large enough $N$, \begin{align}\label{eq:all} \P_{S \sim \mathcal{P}, \bX \sim \P_p} (A(Y=S \lor \bX)=S)  \geq 0.9.\end{align} We informally call \eqref{eq:all} as $A$ achieving ``all''. The AoN phenomenon is the following.
\begin{definition}[The All-or-Nothing (AoN) Phenomenon for exact recovery]
We say that the Hidden Subset Problem $\mathcal{P}=\mathcal{P}_N$ exhibits the All-or-Nothing (AoN) Phenomenon with window size $\epsilon=\epsilon(\mathcal{P}_N) \in (0,1)$ if the following holds.

If $p>(1+\epsilon)p_{\mathrm{IT}},$ then \emph{all} algorithms $A: \{0,1\}^N \rightarrow \{0,1\}^{N}$ satisfy for large enough $N$, \begin{align}\label{eq:nothing} \P_{S \sim \mathcal{P}, \bX \sim \P_p} (A(Y=S \lor \bX)=S)  \leq 0.1.\end{align}
\end{definition}Notice that ``nothing'' (informally referring to \eqref{eq:nothing}) here consists of a sudden drop in the probability of exact recovery by any algorithm. The choice of $0.1$ is arbitrary and can be replaced by any constant bigger than zero. 

Before proceeding with the connection between the AoN phenomenon and circuit complexity, we clarify what we mean for an algorithm to \emph{solve} (to information-theoretic optimality) the Hidden Subset Sum problem of interest.
\begin{definition}(A successful algorithm)
    We say that an algorithm $A: \{0,1\}^N \rightarrow \{0,1\}^{N}$ \emph{solves} the Hidden Subset Problem $\mathcal{P},$ if and only if it achieves \emph{exact} recovery of $S$ whenever $p < p_{\mathrm{IT}}.$ 
\end{definition}

Using Theorem \ref{thm:main_1}, we obtain the following size lower bounds for bounded depth circuits solving an estimation task exhibiting the AoN phenomenon.

\begin{theorem}\label{thm:circuit_aon}
    Let $N \in \mathbb{N},$ and prior $\mathcal{P}=\mathcal{P}_N$ on $\{0,1\}^N.$ We assume the Hidden Subset Problem parametrized by $\mathcal{P}$ exhibits the AoN phenomenon for some window size $\epsilon=\epsilon(\mathcal{P})$ where we assume $\epsilon=o((1-p_{\mathrm{IT}})/\log 1/\beta)$ and $\beta=N^{-\Omega(1)}$ for $\beta=\min\{p_{IT},1-p_{IT}\}$. Let $\mathcal{C}: \{0,1\}^{N} \rightarrow \{0,1\}^{N}$ be a circuit of depth $d$ and size $s$ which solves the Hidden Subset problem. Then for some universal constants $c_1,c_2,c_3>0,$ it must hold either, 
 \begin{align}\label{eq:final_000_aon}
      d \geq \frac{\log \frac{1-p_{\mathrm{IT}}}{\epsilon\log 1/\beta}}{2\log \log N }-6.
    \end{align} or
    \begin{align}\label{eq:final_01_aon}
     s \geq c_2\exp \left( c_3\left(\frac{1-p_{\mathrm{IT}}}{\epsilon \log 1/\beta}\right)^{1/(d+6)}\right)
    \end{align}
\end{theorem}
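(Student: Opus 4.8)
The plan is to reduce Theorem~\ref{thm:circuit_aon} to Theorem~\ref{thm:main_1} by constructing, from a depth-$d$ size-$s$ circuit $\mathcal{C}$ that solves the Hidden Subset Problem, a nearby Boolean \emph{decision} circuit that provably exhibits a sharp threshold in the sense of Definition~\ref{dfn:sharp}, with the same parameters (up to constants in the depth and additive constants in exponents). The key observation is that ``$\mathcal{C}$ solves the problem'' means $\mathcal{C}$ achieves exact recovery for all $p<p_{\mathrm{IT}}$, in particular for $p=(1-\epsilon)p_{\mathrm{IT}}$ (the ``all'' phase), while the AoN assumption forces \emph{every} algorithm — hence $\mathcal{C}$ in particular — to fail (probability of exact recovery $\leq 0.1$) for $p=(1+\epsilon)p_{\mathrm{IT}}$ (the ``nothing'' phase). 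So $\mathcal{C}$ viewed as a map $\{0,1\}^N\to\{0,1\}^N$ already ``changes behavior'' sharply across the window $[(1-\epsilon)p_{\mathrm{IT}},(1+\epsilon)p_{\mathrm{IT}}]$; the work is to convert this vector-valued success/failure dichotomy into a scalar Boolean function whose \emph{expectation} jumps.

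First I would define the success indicator. Given the circuit $\mathcal{C}$, consider the Boolean function $g(S, \bX) := \mathbf{1}\{\mathcal{C}(S\lor\bX)=S\}$, computed on the pair $(S,\bX)$. This has $\E g \geq 0.9$ at $p=(1-\epsilon)p_{\mathrm{IT}}$ and $\E g \leq 0.1$ at $p=(1+\epsilon)p_{\mathrm{IT}}$ over the product measure $S\sim\mathcal{P},\bX\sim\P_p$, so it has a jump of size $\delta\geq 0.8=\Omega(1)$. Two issues must be handled. (i) The input distribution is not a $p$-biased product measure: $S$ is drawn from an arbitrary known prior $\mathcal{P}$, and $Y=S\lor\bX$ is a monotone-OR of the two. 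I would address this by \emph{fixing} a good realization of $S$ (or rather, averaging over $S$ and using that for \emph{most} $S$ the conditional jump is still $\Omega(1)$ by Markov/averaging), so that conditioned on $S=s$ the only randomness is $\bX\sim\P_p$ and the input to $\mathcal{C}$ is $s\lor\bX$. Since $s$ is a fixed constant vector, $x\mapsto s\lor x$ is computed by a trivial depth-$0$ layer (ORing in constants, i.e. hardwiring the coordinates in $\mathrm{supp}(s)$ to $1$), so $\mathcal{C}(s\lor\cdot)$ is still a circuit of depth $\leq d$ and size $\leq s$. (ii) The output is a vector, not a bit. I would compose $\mathcal{C}$ with an equality-check layer: the function $h_s(x) = \bigwedge_{i\in\mathrm{supp}(s)}\mathcal{C}(s\lor x)_i \wedge \bigwedge_{i\notin\mathrm{supp}(s)} \neg \mathcal{C}(s\lor x)_i$, which outputs $1$ iff $\mathcal{C}(s\lor x)=s$. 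This adds one AND layer of fan-in $N$ (and NOT gates at the level just below the output of $\mathcal{C}$, which we push down, costing a constant blowup as noted in Section~\ref{sec:notation}), increasing the depth by a small additive constant (this is the source of the $+6$ rather than $+3$) and multiplying the size by at most $O(N)$. The resulting Boolean circuit $\tilde{\mathcal{C}} := h_s$ has depth $d' \leq d+3$ (roughly), size $s' = s\cdot\mathrm{poly}(N)$, and satisfies $\E_{(1-\epsilon)p_{\mathrm{IT}}}\tilde{\mathcal{C}}\geq 0.9$, $\E_{(1+\epsilon)p_{\mathrm{IT}}}\tilde{\mathcal{C}}\leq 0.1$ — a sharp threshold at $p_c=p_{\mathrm{IT}}$ with window $\epsilon$ and jump $\delta=\Omega(1)$, provided we picked $s$ in the ``most $S$'' good set, which exists because the unconditional gap is $\geq 0.8$.

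Having produced $\tilde{\mathcal{C}}$, I would simply invoke Theorem~\ref{thm:main_1} with parameters $(\epsilon,\delta=\Omega(1),p_c=p_{\mathrm{IT}},\beta=\min\{p_{\mathrm{IT}},1-p_{\mathrm{IT}}\})$; the hypotheses $\beta=N^{-\Omega(1)}$ and $\epsilon=o((1-p_{\mathrm{IT}})/\log(1/\beta))$ are exactly the stated assumptions of Theorem~\ref{thm:circuit_aon}, so they transfer verbatim. Theorem~\ref{thm:main_1} then gives that either $d' \geq \frac{1}{2\log\log N}\log\!\big[\frac{\delta(1-p_{\mathrm{IT}})}{\epsilon\log(1/\beta)}\big]-3$ or $s'\geq c_1\exp(c_2(\frac{\delta(1-p_{\mathrm{IT}})}{\epsilon\log(1/\beta)})^{1/(d'+3)})$. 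Substituting $\delta=\Omega(1)$, $d'\leq d+3$, $s'=s\cdot\mathrm{poly}(N)$, and absorbing the polynomial factor and the constant $\delta$ into the constants $c_i$ and into the $-6$ / $d+6$ adjustments (the polynomial $\mathrm{poly}(N)$ factor is dominated since $\exp$ of a polynomial-in-$N$-like quantity swamps it under the regime where the bound is non-trivial — here one uses $\beta=N^{-\Omega(1)}$ so that $1/(\epsilon\log(1/\beta))$ is at most polynomial-controlled and the exponent manipulation is clean), we obtain precisely \eqref{eq:final_000_aon} and \eqref{eq:final_01_aon}.

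\textbf{Main obstacle.} The one genuinely delicate point is item (i): handling the arbitrary prior $\mathcal{P}$ on $S$. Conditioning on a single fixed $s$ is clean for the circuit-complexity bookkeeping, but one must verify that after conditioning the sharp-threshold \emph{window location} $p_{\mathrm{IT}}$ is unchanged — which it is, since $p_{\mathrm{IT}}$ is a property of the problem, and the ``all''/``nothing'' statements are about the algorithm's averaged success, so an averaging argument over $S$ (a good $s$ exists with the conditional success probabilities still $\geq 0.8$ and $\leq 0.2$, say, by Markov on the complement) suffices without moving $p_c$. A secondary subtlety is that $\tilde{\mathcal{C}}$ depends on the choice of $s$, so technically we get a \emph{different} small-depth circuit for the contradiction; but since we only need \emph{existence} of one depth-$d$ circuit violating the bound to derive a contradiction, this causes no problem — we fix the good $s$ once and for all at the start. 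Finally, one should double-check that pushing the NOT gates (introduced by the equality check) down to the inputs of $\tilde{\mathcal{C}}$ respects the convention in Section~\ref{sec:notation} and only costs a constant factor in size and no depth; this is standard (De Morgan) but worth stating explicitly since the whole reduction lives or dies on the depth increase being a bounded additive constant.
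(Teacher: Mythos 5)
Your proposal is correct and follows essentially the same route as the paper: convert the recovery circuit into a scalar success-indicator circuit $\mathbf{1}\{\mathcal{C}(Y)=S\}$ via an equality check costing $O(1)$ extra depth and $O(N)$ extra gates, use the ``all''/``nothing'' guarantees to get a jump $\delta\geq 0.8$ across the window, fix a single good $S$ by averaging, and invoke Theorem~\ref{thm:main_1} at $p_c=p_{\mathrm{IT}}$. The only (cosmetic) difference is that you hardwire the fixed $s$ into both the input OR and the equality test, whereas the paper adds $S$ as fresh input bits, builds a generic equality gadget, and only then fixes $S$ — your version is, if anything, slightly more careful about the fact that the circuit's input is $S\lor\bX$ rather than $\bX$.
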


The proof of the theorem is in Section \ref{sec:omitted}.

\subsection{Application: the Planted $k$-Clique Problem}\label{sec:apps_pc}

To show the applicability of our result Theorem \ref{thm:circuit_aon}, we now show how it implies circuit lower bounds for a variant of the well-studied estimation task called Planted Clique. We first define it as a special case of a Hidden Subset Problem.

  \begin{definition}[The $k$-planted clique problem]
    Fix $n$ vertices and some $k=k_n \in \N$. Let $N=\binom{n}{2}$ be indexing all the possible undirected edges between the vertices. The $k$-clique problem is the Hidden Subset Problem where $S \sim \mathcal{P}$ is the prior on $\{0,1\}^N$ which first chooses $k$ vertices out of $n$ at random and then sets $S \in \{0,1\}^{\binom{n}{2}}$ to be the indicator of the edges defined by the $k$ vertices. Note that $S$ corresponds now to the edges of $k$-clique and $Y=S\lor \bX \in \{0,1\}^{\binom{n}{2}}, \bX \sim \P_p,$ correspond to a graph where $S$ is a $k$-planted clique and each other edge appears independently with probability $p.$
 \end{definition}

With this interpretation, we establish the following result on the ability of circuits to estimate the planted clique. As we explained in the Introduction we focus on the case where $k$ is super logarithmic in $n$ in which case the information-theoretic threshold $p_{\mathrm{IT}}$ is converting to 1 with $n.$ This assumption is crucial as it allows to establish an AoN phenomenon with a sufficiently small window. Moreover, we remind the reader that by solving the planted clique we mean recovering its vertices with probability at least $0.9$ for all $p<p_{\mathrm{IT}}.$

\begin{theorem}\label{thm:planted-clique}
    Let $n,k=k_n \in \mathbb{N}$ with $ (\log n)^{\omega(1)}=k \leq n^{1/3-\Omega(1)}$. Any circuit $\mathcal{C}: \{0,1\}^{\binom{n}{2}} \rightarrow \{0,1\}^{\binom{n}{2}}$ of depth $d$ and size $s$ which solves the $k$-clique planted problem must satisfy for some constants $c_1,c_2,c_3>0$
 \begin{align}\label{eq:final_0_pc}
      d \geq (1/2-o(1))(\log \log n)^{-1}  \log k.
    \end{align} or
    \begin{align}\label{eq:final_1_pc}
     s \geq c_1\exp \left( c_2k^{1/(d+6)}\right)
    \end{align}
    In particular, for $ (\log n)^{\omega(1)}=k \leq n^{1/3-\Omega(1)}$, $\mathcal{AC}_0$ cannot solve the $k$-planted clique problem.
\end{theorem}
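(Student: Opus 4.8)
The plan is to deduce the theorem from Theorem~\ref{thm:circuit_aon}. The $k$-planted clique problem is already presented as a Hidden Subset Problem, with prior $\mathcal P$ equal to the law of the indicator vector of the edge set of a uniformly random $k$-subset of $[n]$; so it suffices to (a) identify the information-theoretic threshold as $p_{\mathrm{IT}} = 1 - \Theta(\log n/k)$, (b) establish the All-or-Nothing phenomenon with window $\epsilon = \Theta(1/k)$, and then (c) verify the two side conditions of Theorem~\ref{thm:circuit_aon} and simplify its conclusion. The substance of the argument is entirely in (b).

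For (a), the direction $p_{\mathrm{IT}} \ge 1 - O(\log n/k)$ (recovery is possible) is handled by the brute-force maximum-likelihood estimator: output any $k$-subset that induces a clique in $Y = S \lor \bX$. Since the edges of $Y$ with an endpoint outside $S$ are i.i.d.\ $\mathrm{Ber}(p)$, a first-moment bound over all $k$-subsets $T \ne S$, organized by the overlap $j = |T \cap S|$ (with dominant single-swap term $j = k-1$ contributing $\approx k n p^{k}$), shows that the estimator errs with probability $o(1)$ as soon as $k n p^{k} = o(1)$, i.e.\ $1 - p \ge (1+o(1))\tfrac{\log(kn)}{k}$; using $k \le n^{1/3}$ this is $\Theta(\log n/k)$. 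The matching impossibility direction is subsumed in step (b).

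Step (b) is the heart of the proof. For $p > (1+\epsilon)p_{\mathrm{IT}}$ with $\epsilon = \Theta(1/k)$, so that $k n p^{k} \to \infty$, I would show that every estimator recovers $S$ with probability at most $0.1$. The method is the planting trick of~\cite{achlioptas2008algorithmic} together with a careful second-moment computation: one compares the planted law of $Y$ with the ``null'' law obtained by conditioning $G(n,p)$ to contain a clique on a \emph{uniformly random} $k$-set, and then shows that, conditionally on $Y$, the number of $k$-cliques of $Y$ concentrates around its (diverging) mean; consequently the posterior of $S$ given $Y$ is essentially uniform over $\omega(1)$ indistinguishable candidates, which precludes exact recovery (equivalently, one bounds $I(S;Y) = o(\log\binom nk)$). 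The technical core is controlling the second moment of the clique count: the terms $\binom kj\binom{n-k}{k-j}p^{\binom k2 - \binom j2}$ must be shown to be dominated by the nearly disjoint range (small $j$), which in the dense regime $p \to 1$ needs a precise estimate of the exponent $\binom k2 - \binom j2 = \tfrac{(k-j)(k+j-1)}{2}$ and crucially uses $k \le n^{1/3-\Omega(1)}$ to suppress the $j \approx k$ overlap terms. The window $\epsilon = \Theta(1/k)$ arises because $k n p^{k}$ passes from $o(1)$ to $\omega(1)$ over exactly a relative $p$-window of width $\Theta(1/k)$ about $p_{\mathrm{IT}}$.

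Finally, for (c): $\beta = \min\{p_{\mathrm{IT}}, 1-p_{\mathrm{IT}}\} = 1 - p_{\mathrm{IT}} = \Theta(\log n/k) = N^{-\Omega(1)}$ since $k \le n^{1/3-\Omega(1)}$ and $N = \binom n2$, while $\log(1/\beta) = \Theta(\log k)$ because $k = (\log n)^{\omega(1)}$; the side condition $\epsilon = o\big((1-p_{\mathrm{IT}})/\log(1/\beta)\big)$ then reduces to routine bookkeeping in the stated range. Plugging $p_{\mathrm{IT}}, \epsilon, \beta$ into Theorem~\ref{thm:circuit_aon} and simplifying $\tfrac{1-p_{\mathrm{IT}}}{\epsilon\log(1/\beta)}$ gives the two stated alternatives (constants absorbed into $c_1,c_2,c_3$); and since $\log k = \omega(\log\log n)$, for constant $d$ the depth alternative is eventually false, leaving the size bound, which is super-polynomial as $k = (\log n)^{\omega(1)}$, so $\mathcal{AC}_0$ fails. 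I expect the main obstacle to be the second-moment estimate of step (b) in the very dense regime, together with making the window as small as $\Theta(1/k)$: both the concentration of the clique count and the planting-trick comparison must be quantitatively tight, and the hypothesis $k \le n^{1/3-\Omega(1)}$ is precisely what keeps the overlap terms under control.
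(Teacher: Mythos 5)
Your high-level plan is the same as the paper's: reduce to Theorem~\ref{thm:circuit_aon} via an All-or-Nothing lemma for planted clique, proving ``all'' by a first-moment count over overlaps and ``nothing'' by the planting trick. But there is a genuine quantitative error at the heart of step (b) that breaks the reduction. You locate the threshold and the window via the single-vertex-swap term $k n p^{k}$, concluding $\epsilon=\Theta(1/k)$. In the regime $k\le n^{1/3-\Omega(1)}$ this is the wrong governing term: at $p=\binom{n}{k}^{-1/\binom{k}{2}}$ one has $\E Z_{k-1}\approx kn\,p^{k-1}\approx k^3/n=o(1)$, so the swap terms are already negligible there, and it is the \emph{disjoint}-clique term $\binom{n}{k}p^{\binom{k}{2}}$ (overlap $\ell=0$) that passes through $\Theta(1)$ first and defines $p_{\mathrm{IT}}$. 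Because its exponent is $\binom{k}{2}$, a relative perturbation of $p$ by $\Theta(1/k^2)$ -- not $\Theta(1/k)$ -- changes it by a constant factor, and this is exactly the window the paper proves (Lemma~\ref{lem:sharp_transition_pc}). The hypothesis $k\le n^{1/3-\Omega(1)}$, which you correctly flag as essential, is what suppresses the high-overlap terms at this disjoint-term threshold; it is not there to make the swap term critical.

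The factor-of-$k$ discrepancy is not cosmetic. With $\epsilon=\Theta(1/k^2)$ the key ratio is $\frac{1-p_{\mathrm{IT}}}{\epsilon\log(1/\beta)}=\Theta\bigl(\frac{k\log n}{\log k}\bigr)=\Omega(k)$, which yields the stated bound $s\ge c_1\exp(c_2 k^{1/(d+6)})$. With your $\epsilon=\Theta(1/k)$ the same ratio is only $\Theta(\log n/\log k)$, which is $O(1)$ when $k=n^{\Theta(1)}$, so Theorem~\ref{thm:circuit_aon} would give nothing close to the theorem; moreover the side condition $\epsilon=o\bigl((1-p_{\mathrm{IT}})/\log(1/\beta)\bigr)$ itself fails for $\epsilon=\Theta(1/k)$ once $\log k=\Theta(\log n)$. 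A secondary remark: your ``nothing'' argument (concentration of the clique count plus a mutual-information bound) is heavier than necessary; the paper's change-of-measure computation shows directly that $\P(Z(Y)=1)\le\P\bigl(Z(Y)\le 0.01\binom{n}{k}p^{\binom{k}{2}}\bigr)\le 0.01$ once $(1+\epsilon)^{\binom{k}{2}}\ge 100$, again with $\epsilon=\Theta(1/k^2)$, and needs no second moment under the null.
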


The proof naturally combines Theorem \ref{thm:circuit_aon} and that planted $k$-clique exhibits the AoN phenomenon for exact recovery, when $ (\log n)^{\omega(1)}=k \leq n^{1/3-\Omega(1)}$ (see Lemma \ref{lem:sharp_transition_pc}).

\section{Proof of our main result, Theorem \ref{thm:main_1}}

We will approach the cases $p_c \leq 1/2$ and $p_c>1/2$ in a slightly different manner. For this reason, we establish the following lemma from which Theorem \ref{thm:main_1} follows.

\begin{lemma}\label{lem:interm}
We make the assumptions of Theorem \ref{thm:main_1} with the additional assumption that $p_c \leq 1/2$.  

Then for some constants $c_1,c_2>0$
        either,
 \begin{align}\label{eq:final_00_lem}
      d \geq \frac{1}{2\log \log N}\log \frac{\delta}{\epsilon \log 1/p_c} -3.
    \end{align} or
    \begin{align}\label{eq:final_01_lem}
     s \geq c_1\exp \left( c_2\left(\frac{\delta}{\epsilon \log 1/p_c}\right)^{1/(d+3)}\right).
    \end{align}
    
\end{lemma}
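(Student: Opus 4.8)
The plan is to formalize the Kalai--Safra heuristic by combining a de-biasing reduction with the quantitative Linial--Mansour--Nisan (LMN) and Friedgut machinery, applied to the \emph{de-biased} circuit rather than the original one. First I would construct, for the relevant bias $p$ near $p_c$, a polynomial-size ``biasing layer'' $\Phi\colon\{0,1\}^M\to\{0,1\}^N$ (with $M=\poly(N)$) such that if $X\sim\Bern(1/2)^{\otimes M}$ then $\Phi(X)\sim\P_p$ on $\{0,1\}^N$; the standard construction writes each output bit as a small DNF/CNF in fresh unbiased bits, and because $p_c\le 1/2$ and $\beta=p_c=N^{-\Omega(1)}$, one needs roughly $O(\log 1/p_c)=O(\log N)$ auxiliary bits per output coordinate, which contributes a factor $\log 1/p_c$ (this is the source of that term in \eqref{eq:final_00_lem}--\eqref{eq:final_01_lem}) and adds only $O(1)$ to the depth. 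Stacking $\Phi$ under $\CC$ gives a circuit $\CC'=\CC\circ\Phi$ of depth $d+O(1)$ and size $\poly(N)\cdot s$ on $M=\poly(N)$ bits, and the sharp threshold of $\CC$ at $p_c$ with window $\epsilon$, jump $\delta$ becomes a statement about $\E_{1/2}$-type quantities for $\CC'$: evaluating $\CC'$ at the two nearby unbiased-coordinate counts corresponding to $p=(1\pm\epsilon)p_c$ shows that the derivative of $q\mapsto \E_q \CC'$ near $q=1/2$ is $\Omega(\delta/(\epsilon\log 1/p_c))$, i.e.\ $\CC'$ has large total influence at the uniform measure, $I_{1/2}(\CC')=\Omega(\delta/(\epsilon\log 1/p_c))$, via the Margulis--Russo identity.

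Next I would invoke the LMN theorem for $\CC'$: a depth-$d'$ size-$s'$ circuit, under the uniform measure, has Fourier mass at most $\eta$ above degree $k_0 = O\big((\log(s'/\eta))^{d'}\big)$; equivalently its Fourier spectrum is $\eta$-concentrated on degree $\le k_0$. The contrapositive of (the uniform-measure, unbiased version of) Friedgut's theorem — or more directly a direct Fourier estimate — says that a Boolean function with total influence $t$ cannot have its spectrum concentrated below degree $o(t)$ without incurring error: concretely $\sum_{|T|\le k}\hat f(T)^2 |T| \le k$, so if $\sum_T \hat f(T)^2|T| = I_{1/2}(f) = t$ then the mass strictly above degree $k$ is at least $(t-k)/M'$-ish — more usefully, one uses that $f$ being $\eta$-concentrated on degree $\le k$ forces $I_{1/2}(f)\le k + \eta M'$, which is too weak; instead I would use the sharper route: $\CC'$ having a sharp threshold means $|\E_{q_1}\CC' - \E_{q_2}\CC'|\ge\delta$ for $q_1,q_2$ within $O(\epsilon\log 1/p_c)$ of $1/2$, and a low-degree polynomial (degree $k_0$) in $q$ cannot change by $\delta$ over an interval of length $\approx \epsilon\log 1/p_c$ around $1/2$ unless its degree is $\gtrsim \log(\delta/(\epsilon\log 1/p_c))/\log\log(\cdots)$ or its coefficients are huge — this is a Bernstein/Markov-inequality argument for polynomials on $[0,1]$. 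Combining: $\E_q\CC'$ is approximated to error $\eta<\delta/4$ by its low-degree truncation, which is a degree-$k_0$ polynomial in $q$; the Markov brothers' inequality then forces $k_0 \gtrsim \sqrt{\delta/(\epsilon\log 1/p_c)}$ (or a logarithm thereof depending on the exact window geometry), hence $(\log(s'))^{d'}\gtrsim (\delta/(\epsilon\log 1/p_c))^{c}$, which rearranges into the dichotomy \eqref{eq:final_00_lem}--\eqref{eq:final_01_lem} after taking logs and absorbing the $d+O(1)$ versus $d+3$ bookkeeping and the $\poly(N)$ size blowup into the universal constants.

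Concretely the steps in order: (i) build $\Phi$ and verify its size/depth and the pushforward-measure property; (ii) transfer the sharp threshold of $\CC$ to a derivative/influence lower bound for $\CC'$ at the uniform measure, tracking how $\epsilon,\delta,p_c$ enter; (iii) apply LMN to get $\eta$-concentration of $\CC'$ on degree $\le k_0=(\log(s'/\eta))^{O(d')}$; (iv) observe $q\mapsto\E_q\CC'$ agrees up to $O(\eta)$ with a degree-$k_0$ univariate polynomial and apply a Markov/Bernstein-type inequality on the relevant subinterval of $[0,1]$ around $1/2$ to lower bound $k_0$ in terms of $\delta/(\epsilon\log 1/p_c)$; (v) rearrange into the stated depth-or-size dichotomy, and derive Theorem~\ref{thm:main_1} from Lemma~\ref{lem:interm} by applying it to $\CC$ when $p_c\le 1/2$ and to $\CC'(X)=\CC(\neg X)$ (as in Remark~\ref{rem:assym}) when $p_c>1/2$, which swaps $p_c\leftrightarrow 1-p_c$ and rescales the window by $p_c/(1-p_c)$, producing the $(1-p_c)$ factor in \eqref{eq:key_term}.

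The main obstacle I expect is step (iv) combined with getting the \emph{right} exponent: one must be careful that the de-biasing does not merely shift the threshold but genuinely keeps it \emph{sharp relative to the scale of the uniform measure}, and that the polynomial-approximation argument is applied on an interval whose length is correctly $\Theta(\epsilon\log 1/p_c)$ rather than $\Theta(\epsilon)$ — this is exactly where the $\log 1/\beta$ loss is unavoidable, because $\Phi$ uses $\Theta(\log 1/p_c)$ unbiased bits per coordinate so a window of multiplicative size $\epsilon$ in $p$ around $p_c$ corresponds to an \emph{additive} window of size $\sim \epsilon\, p_c \log 1/p_c / (\text{something})$ in the uniform-bit count, normalized. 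A secondary subtlety is that LMN is usually quoted for $\{0,1\}$-valued functions with $L_\infty$ control, and one must make sure the $\eta$-concentration statement is strong enough (inverse-polynomial $\eta$ suffices since the jump $\delta$ may itself be $o(1)$, but $\delta/\epsilon=\omega(1)$ saves us) — so I would state the LMN consequence as: Fourier mass above degree $k_0=(C\log(s'/\delta))^{d'}$ is at most $\delta/10$, and check $\delta$ being subconstant does not break the bound since it only costs a $\log(1/\delta)=O(\log N)$ factor inside the already-$d'$-th-power term, which is harmless.
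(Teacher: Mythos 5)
Your steps (i)--(iii) match the paper's route: build the biasing layer $\Phi$ (the paper's construction uses an AND of $\lceil\log 1/p_0\rceil$ bits each at bias $p_1=p_0^{1/\lceil\log 1/p_0\rceil}\in[1/2,1/\sqrt2)$ rather than exactly $1/2$, but that is immaterial), transfer the jump of size $\delta$ to a jump of $\mathcal{C}\circ\Phi$ over an interval of length $r_N=\Theta(\epsilon\log 1/p_c)$ near constant bias, and conclude by the mean value theorem that $\max_{p\in[1/3,2/3]}|\tfrac{d}{dp}\E_p(\mathcal{C}\circ\Phi)|=\Omega(\delta/(\epsilon\log 1/p_c))$. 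The gap is in your step (iv). The claim that ``$\E_q\mathcal{C}'$ is approximated to error $\eta<\delta/4$ by its low-degree truncation'' does not follow from $\ell_2$ Fourier concentration at the uniform measure: writing $\E_q f=\sum_S\hat f(S)(1-2q)^{|S|}$, the tail $\sum_{|S|>k_0}\hat f(S)(1-2q)^{|S|}$ is a pointwise value of the noise-smoothed function at a constant noise rate, and Cauchy--Schwarz against the tail mass picks up a factor $\bigl(\sum_{j>k_0}\binom{N}{j}(1-2q)^{2j}\bigr)^{1/2}$, which is exponentially large in $N$ for $q$ ranging over a constant-length interval. So the Markov-brothers step has no degree-$k_0$ polynomial to act on. Moreover, even granting the approximation, Markov's inequality would only yield $k_0\gtrsim\sqrt{\delta/(\epsilon\log 1/p_c)}$, i.e.\ an exponent $1/(2(d+3))$ rather than the $1/(d+3)$ claimed in \eqref{eq:final_01_lem}.

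The repair is to delete step (iv) entirely: you already have everything you need. Convert the derivative lower bound into a total-influence lower bound via the one-sided Russo--Margulis inequality $|\tfrac{d}{dp}\E_pf|\le(p(1-p))^{-1}I_p(f)$ (valid for arbitrary, not necessarily monotone, Boolean $f$ -- this is the direction you need), and contrast it with the $p$-biased LMN influence bound $I_p(\mathcal{C}\circ\Phi)\le c_0\bigl(10\log s'/(p(1-p))\bigr)^{d+3}$ for $p\in[1/3,2/3]$, where $s'=s+O(N\log 1/p_c)$. This gives $(c\log s')^{d+3}\ge c'\delta/(\epsilon\log 1/p_c)$ directly with the first power, and the dichotomy \eqref{eq:final_00_lem}--\eqref{eq:final_01_lem} follows by splitting according to whether $s$ or $N\log 1/p_c$ dominates $s'$. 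Your step (v), reducing $p_c>1/2$ to $p_c\le 1/2$ via $X\mapsto\neg X$, is exactly the paper's reduction.
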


\subsection{Proof that Lemma \ref{lem:interm} implies Theorem \ref{thm:main_1}} Here we show that Theorem \ref{thm:main_1} follows from Lemma \ref{lem:interm} as follows.

If $p_c \leq 1/2$ then it follows as the lower bound of \eqref{eq:final_00_lem} is stronger than \eqref{eq:final_0000} and \eqref{eq:final_01_lem} is stronger than \eqref{eq:final_001}. 

If $p_c \geq 1/2,$ we argue as follows. Notice that the circuit $\tilde{\mathcal{C}}(X):=\mathcal{C}(\neg X),$ that applies $\mathcal{C}$ on the negations of the input $X$, has exactly the same depth and size with $\mathcal{C}$. Moreover, by the obvious coupling for any $p>0,$ $$\E_{1-p} \tilde{\mathcal{C}}=\E_{p}\mathcal{C}.$$Hence if
\begin{align}\label{eq:diff_origin}
        |\E_{(1+\epsilon)p_c} \mathcal{C}-\E_{(1-\epsilon)p_c} \mathcal{C}| \geq \delta,
    \end{align} for $\epsilon':=\epsilon p_c/(1-p_c)$ we have,
\begin{align}\label{eq:diff_neg}
        |\E_{(1+\epsilon')(1-p_c)} \tilde{\mathcal{C}} -\E_{(1-\epsilon')(1-p_c)}\tilde{\mathcal{C}} | \geq \delta
    \end{align}since $1-(1+\epsilon)p_c=(1-\epsilon')p_c$ and $1-(1-\epsilon)p_c=(1+\epsilon')(1-p_c)$. Therefore, $\tilde{\mathcal{C}}$ has a sharp threshold at the threshold $1-p_c \leq 1/2$, with window size $\epsilon'$ and jump size $\delta.$ Note that by our assumption on $\epsilon=o((1-p_c)/\log(1/(1-p_c))$ and $p_c \geq 1/2$, $$\epsilon'=\epsilon p_c/(1-p_c)=o((1-(1-p_c))/\log 1/(\min\{p_c,1-p_c\})).$$Therefore by Lemma \ref{lem:interm} we conclude for $p_c \geq 1/2$ that it must hold (as $C, \tilde{C}$ have the same size and depth),
    \begin{align}\label{eq:final_00_lem_1}
      d=d \geq \frac{\log \frac{\delta(1-p_c)}{\epsilon p_c \log 1/(1-p_c)}}{2\log \log N} -3.
    \end{align} or
    \begin{align}\label{eq:final_01_lem_2}
     s \geq c_1\exp \left( c_2\left(\frac{\delta(1-p_c)}{\epsilon p_c \log 1/(1-p_c)}\right)^{1/(d+3)}\right).
    \end{align}Now Theorem \ref{thm:main_1} in the case $p_c \geq 1/2$ also follows as the lower bound of \eqref{eq:final_00_lem_1} is stronger than \eqref{eq:final_0000} and \eqref{eq:final_01_lem_2} is stronger than \eqref{eq:final_001}. 
    
\subsection{Proof of Lemma \ref{lem:interm}} We now focus on proving Lemma \ref{lem:interm}, and in particular operate under the assumption \begin{align} \label{eq:pc} p_c \leq 1/2.\end{align}

In order to employ tools from Boolean Fourier analysis it will be useful to work exclusively in the regime of constant $p$ where $p_c=\Theta(1)$. However,  this may not be the case in many interesting applications (see e.g., random 2-SAT \cite{bollobas2001scaling}). 
For this reason, we first employ the following lemma which allows us to always ``move'' the phase transition point $p_c$ to the $\Theta(1)$ regime.  Indeed this can be achieved
by adding only one extra layer to the circuit, of size \emph{linear in $N$ and $\log 1/p_c$}. The lemma below is an adopted and simplified version of a similar construction in~\cite{gamarnik2020low}, together with a necessary stability analysis of the result to small perturbations.

\begin{lemma}[``Debiasing'' layer]\label{lem:circuit}
   Let $N \in \mathbb{N}$ and arbitrary $p_0=(p_0)_N \leq 1/2.$ There exists a depth-1 and size $O(N\log 1/p_0)$ circuit $\Phi: \{0,1\}^{N \ceil{\log 1/p_0}}  \rightarrow \{0,1\}^{N}$ and some $p_1=(p_1)_N\in [1/2,1/\sqrt{2})$ such that 
   \begin{itemize}
       \item \begin{align}\label{eq:dist_1}
        \Phi\left(\mathrm{Bern}(p_1)^{\otimes N\ceil{\log 1/p_0}}\right)\overset{d}{=} \mathrm{Bern}(p_0)^{\otimes N},
    \end{align}and,

    \item for any $0<\gamma=\gamma_N=o(1/\log 1/p_0),$ there exists $0<r_N=\Theta(\gamma \log 1/p_0)=o(1)$ with
    \begin{align}\label{eq:dist_2}
        \Phi\left(\mathrm{Bern}(p_1-r_n)^{\otimes N \ceil{\log 1/p_0}}\right)\overset{d}{=}  \mathrm{Bern}(p_0(1-\gamma))^{\otimes N}.
    \end{align}
   \end{itemize}

\end{lemma}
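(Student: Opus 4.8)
The plan is to realise $\Phi$ as a single layer of conjunctions, using nothing more than the fact that an AND of $k$ i.i.d.\ $\mathrm{Bern}(q)$ bits is $\mathrm{Bern}(q^k)$. Set $k:=\ceil{\log 1/p_0}$, which is $\geq 1$ because $p_0\leq 1/2$, partition the $Nk$ input coordinates into $N$ consecutive blocks of size $k$, and let the $j$-th output coordinate of $\Phi$ be the AND of the $k$ input coordinates of the $j$-th block. Then $\Phi$ has depth $1$ and uses $N$ gates of fan-in $k$, so its size is $O(Nk)=O(N\log 1/p_0)$, as required. Finally put $p_1:=p_0^{1/k}$.

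The one slightly delicate point is verifying $p_1\in[1/2,1/\sqrt2)$; this is exactly what dictates the precise truncation $k=\ceil{\log 1/p_0}$. Writing $x:=\log 1/p_0\geq 1$ we have $\log p_1=-x/\ceil{x}$, and since $x\leq\ceil{x}$ always while $\ceil{x}<2x$ for every $x\geq 1$, we get $1/2<x/\ceil{x}\leq 1$, hence $-1\leq \log p_1<-1/2$, i.e.\ $p_1\in[1/2,1/\sqrt2)$ (the boundary case $p_0=1/2$ gives $k=1$ and $p_1=1/2$, which is covered). With this in hand \eqref{eq:dist_1} is immediate: if $\bm{U}\sim\mathrm{Bern}(p_1)^{\otimes Nk}$, the $N$ output coordinates of $\Phi(\bm{U})$ depend on disjoint blocks of $\bm{U}$ and are therefore independent, and each is an AND of $k$ i.i.d.\ $\mathrm{Bern}(p_1)$ bits, hence $\mathrm{Bern}(p_1^k)=\mathrm{Bern}(p_0)$; so $\Phi(\bm{U})\sim\mathrm{Bern}(p_0)^{\otimes N}$.

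For the stability statement \eqref{eq:dist_2} — the genuinely new ingredient beyond the construction in \cite{gamarnik2020low} — the same circuit does the job. Given $\gamma=\gamma_N=o(1/\log 1/p_0)$, let $r_N$ be the unique positive solution of $(p_1-r_N)^k=p_0(1-\gamma)$, i.e.\ $r_N=p_1\bigl(1-(1-\gamma)^{1/k}\bigr)$; this is well defined and satisfies $0<r_N<p_1$ for $\gamma\in(0,1)$, so $p_1-r_N$ is a valid bias. Feeding $\mathrm{Bern}(p_1-r_N)^{\otimes Nk}$ through $\Phi$ and repeating the block-wise computation gives the product law $\mathrm{Bern}\bigl((p_1-r_N)^k\bigr)^{\otimes N}=\mathrm{Bern}(p_0(1-\gamma))^{\otimes N}$, which is \eqref{eq:dist_2}. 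It then only remains to estimate $r_N$ quantitatively: a first-order Taylor expansion of $t\mapsto(1-t)^{1/k}$ at $t=0$, together with $k=\Theta(\log 1/p_0)$ and $p_1=\Theta(1)$, yields the stated order of magnitude of $r_N$, and the hypothesis $\gamma=o(1/\log 1/p_0)$ is precisely what forces $r_N=o(1)$. I do not expect any serious obstacle here: beyond the short range check for $p_1$ above, the only thing requiring care is controlling the higher-order terms in the expansion of $r_N$ uniformly in $N$, which is routine, while everything else is elementary arithmetic with independent Bernoulli variables.
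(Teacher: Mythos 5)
Your construction is identical to the paper's: the same single layer of block-ANDs with block size $k=\ceil{\log 1/p_0}$, the same choice $p_1=p_0^{1/k}$, and the same defining equation $r_N=p_1\bigl(1-(1-\gamma)^{1/k}\bigr)$ for the perturbation; your verification that $p_1\in[1/2,1/\sqrt{2})$ is in fact more explicit than the paper's. One caveat, which you share with the paper's own proof: the first-order Taylor expansion you invoke gives $1-(1-\gamma)^{1/k}=(\gamma/k)(1+o(1))$, hence $r_N=\Theta(\gamma/\log(1/p_0))$ rather than the stated $\Theta(\gamma\log(1/p_0))$, so your assertion that the expansion ``yields the stated order of magnitude'' is not literally correct (the two orders agree only when $p_0=\Theta(1)$). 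This discrepancy is inherited from the lemma statement itself and is harmless downstream, since the application of \eqref{eq:dist_2} in the main proof only needs $r_N=o(1)$ together with an upper bound $r_N=O(\epsilon\log(1/p_c))$, both of which your (smaller) value of $r_N$ satisfies.
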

The proof of this lemma is found in Section~\ref{sec:omitted}.
Now, given \eqref{eq:pc} and our assumptions on Theorem \ref{thm:main_1}, we choose $p_0:=(1+\epsilon)p_c$ and apply Lemma \ref{lem:circuit} to obtain the following. For some $p_1=(p_1)_N \in [1/2,1/\sqrt{2}),$ and some 1-depth circuit $\Phi_d: \{0,1\}^{N\ceil{\log 1/(1+\epsilon)p_c}} \rightarrow \{0,1\}^N$ of size $O(N \log 1/p_c)$ it holds via \eqref{eq:dist_1},

 \begin{align}\label{eq:dist_11}
        \E_{p_1}\mathcal{C}\circ \Phi =\E_{(1+\epsilon)p_c} \mathcal{C}. 
    \end{align}Also, applying \eqref{eq:dist_2} for $\gamma=1-\frac{1-\epsilon}{1+\epsilon}=\Theta(\epsilon)$ we obtain for some $0<r_N=\Theta(\epsilon \log 1/p_c)=o(1)$
    \begin{align}\label{eq:dist_22}
   \E_{p_1-r_N}\mathcal{C} \circ \Phi =\E_{(1-\epsilon)p_c} \mathcal{C}.
    \end{align}
In particular, given the sharp threshold for $\mathcal{C}$, the following sharp threshold holds for $\mathcal{C} \circ \Phi$,
     \begin{align}\label{eq:diff}
        |\E_{p_1}\mathcal{C} \circ \Phi -\E_{p_1-r_N} \mathcal{C} \circ \Phi| \geq \delta.
    \end{align}

    Since $r_N=o(1),$ for large enough $N$ it holds $(p_1-r_N,p_1) \subseteq [1/3,2/3].$
Hence, by mean value theorem, using \eqref{eq:diff} for large enough $N$ it must also hold,
\begin{align}\label{eq:deriv_old}
      \max_{p \in [1/3,2/3]}  |\frac{d}{dp}\E_{p} \mathcal{C} \circ \Phi |\geq \delta/r_N. 
    \end{align}

    Since $r_N=\Theta(\epsilon \log 1/p_c)$ we conclude that for some universal constant $c_1>0,$
    \begin{align}\label{eq:deriv_0}
      \max_{p \in [1/3,2/3]}  |\frac{d}{dp}\E_{p} \mathcal{C} \circ \Phi  | \geq c_1\delta/(\epsilon \log 1/p_c). 
    \end{align}
Note that $C \circ \Phi$ now has a sharp threshold with a ``constant'' threshold. Next, we use the following two standard lemmas (their proofs are in Section \ref{sec:omitted}). Recall the definition of total influence from Section \ref{sec:notation}.

\begin{lemma}[One-sided ``Russo-Margulis Lemma'']\label{lem:russo}
    For any Boolean function $f: \{0,1\}^N \rightarrow \{0,1\},$
    $$|\frac{d}{dp} \E_p f| \leq (p(1-p))^{-1}I_p(f).$$
\end{lemma}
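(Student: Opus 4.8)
The plan is to first establish the Margulis--Russo derivative identity in its form valid for \emph{arbitrary} (not necessarily monotone) Boolean functions,
$$\frac{d}{dp}\E_p f = \sum_{i=1}^N \E_{x^{-i}}\bigl[f(x^{-i},1) - f(x^{-i},0)\bigr],$$
where $x^{-i}\sim\P_p^{\otimes(N-1)}$, and then to bound the right-hand side in absolute value against the total influence.

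For the identity, I would expand $\E_p f = \sum_{S\subseteq[N]} f(\one_S)\,p^{|S|}(1-p)^{N-|S|}$ and differentiate term by term: $\frac{d}{dp}\bigl[p^{|S|}(1-p)^{N-|S|}\bigr] = |S|p^{|S|-1}(1-p)^{N-|S|} - (N-|S|)p^{|S|}(1-p)^{N-|S|-1}$. Writing $|S| = \sum_{i\in S}1$ and $N-|S| = \sum_{i\notin S}1$ lets one regroup the whole derivative as a sum over coordinates $i\in[N]$; isolating the coordinate-$i$ contribution and parametrizing $S = \{i\}\sqcup T$ (resp.\ $S=T$) with $T\subseteq[N]\setminus\{i\}$ collapses the two pieces to $\E_{x^{-i}}f(x^{-i},1)$ and $\E_{x^{-i}}f(x^{-i},0)$ respectively, giving the identity. (An alternative derivation couples all the measures $\P_p$ via i.i.d.\ uniform thresholds $U_i\in[0,1]$ with $x_i = \one\{U_i\le p\}$ and observes that in an infinitesimal window $[p,p+dp]$ at most one coordinate flips $0\to1$, each with probability $dp$; this yields the same formula.)

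Given the identity, the triangle inequality gives $\bigl|\frac{d}{dp}\E_p f\bigr| \le \sum_{i=1}^N \E_{x^{-i}}\bigl|f(x^{-i},1) - f(x^{-i},0)\bigr|$. Since $f$ is $\{0,1\}$-valued, each difference $f(x^{-i},1) - f(x^{-i},0)$ lies in $\{-1,0,1\}$, so $|f(x^{-i},1)-f(x^{-i},0)| = (f(x^{-i},1)-f(x^{-i},0))^2$. This squared difference depends only on $x^{-i}$, so its $\P_p^{\otimes(N-1)}$-expectation equals its $\E_p$-expectation, and by the definition in Section~\ref{sec:notation}, $(I_i)_p(f) = p(1-p)\,\E_p(f(x^{-i},1)-f(x^{-i},0))^2$. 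Hence $\sum_i \E_{x^{-i}}(f(x^{-i},1)-f(x^{-i},0))^2 = I_p(f)/(p(1-p))$, which is exactly the claimed inequality.

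There is no genuinely hard step here: the only point requiring care is the bookkeeping in the term-by-term differentiation — ensuring the shifted binomial exponents are matched to the correct conditional expectations when regrouping over coordinates. The step where $|\cdot|$ is replaced by $(\cdot)^2$ is where the ``one-sided'' nature enters: for monotone $f$ all summands are nonnegative and one recovers the equality $\frac{d}{dp}\E_p f = I_p(f)/(p(1-p))$, but for general $f$ the signed sum may exhibit cancellation, so only the upper bound on the magnitude survives — which is all that Lemma~\ref{lem:interm} requires.
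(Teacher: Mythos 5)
Your proposal is correct and follows essentially the same route as the paper: derive the signed Russo--Margulis identity $\frac{d}{dp}\E_p f=\sum_i \E_{x^{-i}}[f(x^{-i},1)-f(x^{-i},0)]$ (the paper gets it via the chain rule on the multi-parameter $(p_1,\ldots,p_N)$-biased measure rather than term-by-term differentiation of the polynomial expansion, but this is a cosmetic difference), then apply the triangle inequality and the observation that $|a|=a^2$ for $a\in\{-1,0,1\}$ to match the definition of $(I_i)_p(f)$. No gaps.
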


\begin{lemma}[Extension of the LMN Theorem to arbitrary $p$.]\label{lem:LMN}
   There exists a constant $c_0>0$ such that, if $\mathcal{C}: \{0,1\}^N \rightarrow \{0,1\}$ is a Boolean circuit of depth $D$ and size $S$ then $$I_p(\mathcal{C}) \leq c_0\left(\frac{10 \log (S )}{p(1-p)}\right)^{D+2}.$$
\end{lemma}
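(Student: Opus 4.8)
\emph{Proof plan for \textup{Lemma \ref{lem:LMN}}.} The plan is to obtain the bound from the classical (uniform-measure) Linial--Mansour--Nisan / H\aa stad switching-lemma machinery, after transporting the $p$-biased picture to a \emph{constant} bias using the very debiasing gadget already at our disposal. Since replacing $\mathcal{C}(x)$ by $\mathcal{C}(\neg x)$ changes neither depth nor size and sends $I_p$ to $I_{1-p}$, I may assume $p\le 1/2$. The conceptual backbone is the $p$-biased level-sum identity: with the $p(1-p)$-normalized influences of Section~\ref{sec:notation} one has $I_p(f)=\sum_{S}|S|\,\widehat f_p(S)^2$ (with $\widehat f_p$ the $p$-biased Fourier coefficients, and no prefactor), so it suffices to control the $p$-biased Fourier tail of $\mathcal{C}$.

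First I would apply Lemma~\ref{lem:circuit} with $p_0:=p$: this yields a depth-$1$ circuit $\Phi$ of size $O(N\log 1/p)$ on $M=N\lceil\log 1/p\rceil$ inputs, each output being the AND of one length-$\lceil\log 1/p\rceil$ block, together with some $p_1\in[1/2,1/\sqrt 2)$ such that $\Phi\big(\Bern(p_1)^{\otimes M}\big)\overset{d}{=}\Bern(p)^{\otimes N}$. Set $g:=\mathcal{C}\circ\Phi$. Pushing the NOT gates of $\mathcal{C}$ to the bottom through $\Phi$ (a block AND feeding a negated literal becomes an OR of negated block bits), $g$ is a circuit of depth at most $D+1$ and size $O(S+N\log 1/p)=S^{O(1)}$, using that a size-$S$ circuit has at most $S$ relevant inputs, so $N\le S$.

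The heart of the argument is that pivotality passes cleanly through the gadget. Fix an input $z$ in the $j$-th block of $\Phi$. Flipping $z$ changes $g$ iff it changes the block-AND $x_j$ — an event depending only on the \emph{other} bits of block $j$ (all of them must equal $1$), of probability $p_1^{\lceil\log 1/p\rceil-1}$ under $\P_{p_1}$ — and, conditionally on that, $j$ is pivotal for $\mathcal{C}$ — an event depending only on the other blocks, which under $\P_{p_1}$ are independent $\Bern(p_1^{\lceil\log 1/p\rceil})=\Bern(p)$ coordinates, hence of probability $\P_p[j\text{ pivotal for }\mathcal{C}]$. As these two events involve disjoint underlying bits they are independent, so $\P_{p_1}[z\text{ pivotal for }g]=p_1^{\lceil\log 1/p\rceil-1}\,\P_p[j\text{ pivotal for }\mathcal{C}]$. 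Summing over the $\lceil\log 1/p\rceil$ bits of block $j$ and over $j$, and using $p_1^{\lceil\log 1/p\rceil}=p$, one gets the clean relation $I_{p_1}(g)=\frac{\lceil\log 1/p\rceil\,(1-p_1)}{1-p}\,I_p(\mathcal{C})$, whence $I_p(\mathcal{C})\le 4\,I_{p_1}(g)$ (note the gadget arity $\lceil\log 1/p\rceil$ helpfully lands in the \emph{numerator}). Finally, since $p_1$ is bounded away from $0$ and $1$, the classical LMN theorem for $g$ — equivalently H\aa stad's switching lemma with the restriction weights set to the constant $p_1$-biased measure, the constants now universal — gives a superpolynomially decaying Fourier-tail bound $\sum_{|U|>t}\widehat g_{p_1}(U)^2\le \mathrm{size}(g)\cdot 2^{-t^{1/(D+1)}/O(1)}$; combined with the level-sum identity for $g$ and split at a level $t$ polylogarithmic in $\mathrm{size}(g)$, this yields $I_{p_1}(g)=O\big((\log S)^{D+O(1)}\big)$, hence $I_p(\mathcal{C})=O\big((\log S)^{D+O(1)}\big)$. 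This is already stronger than the asserted $c_0\big(10\log S/(p(1-p))\big)^{D+2}$: the extra powers and the factor $1/(p(1-p))\ge 4$ are deliberate slack, and no constants need be optimized.

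The step I expect to be the main obstacle is making the pivotality factorization rigorous while keeping every parameter honest — tracking the ``all other bits of block $j$ equal $1$'' conditioning, and ensuring the classical black box in the last step is invoked \emph{only} at the fixed bias $p_1$, since an unqualified $p$-biased LMN is precisely what is being proved. An essentially equivalent alternative would bypass Lemma~\ref{lem:circuit} altogether and invoke a $p$-biased switching lemma valid for all $p\in(0,1)$ directly on $\mathcal{C}$, which reproduces the $1/(p(1-p))$ dependence intrinsically rather than absorbing it into slack; I would still present the debiasing route, since it reuses the construction already in play and reduces matters to the textbook uniform-measure statement.
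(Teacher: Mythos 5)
Your proposal is correct in its essentials but takes a genuinely different route from the paper. The paper's proof is two lines of Fourier analysis: it writes $I_p(\mathcal{C})=\sum_S|S|\hat{\mathcal{C}}(S)^2$ and then directly cites a $p$-biased extension of LMN with explicit $p(1-p)$ dependence (\cite[Lemma 9]{furst1991improved}, see also \cite[Lemma 3.2]{bogdanov2015homomorphic}) to truncate the level sum. You instead reuse the debiasing gadget of Lemma~\ref{lem:circuit} to push the problem to constant bias $p_1\in[1/2,1/\sqrt 2)$, prove the exact influence-transfer identity $I_{p_1}(\mathcal{C}\circ\Phi)=\frac{\lceil\log 1/p\rceil(1-p_1)}{1-p}I_p(\mathcal{C})$ via the pivotality factorization (which I checked and is correct, including the independence of ``all other bits of block $j$ equal $1$'' from ``block $j$ is pivotal for $\mathcal{C}$''), and then invoke only the constant-bias LMN. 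What your route buys is that the black box is the textbook statement rather than a general-$p$ switching lemma; what it costs is the depth increment $D\mapsto D+1$ and the loss of the $1/(p(1-p))^{D+2}$ factor, both of which you absorb as slack. You correctly identify the paper's actual route as your ``essentially equivalent alternative.''

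Two details need to be nailed down for the slack accounting to be honest. First, your claim that the resulting $O((\log S)^{D+O(1)})$ is ``already stronger than the asserted bound'' is only true if the exponent you end up with is at most $D+2$: since $(10\log S/(p(1-p)))^{D+2}\geq(40\log S)^{D+2}$, an exponent of $D+3$ is \emph{not} dominated by it when $p$ is constant and $S$ is large. With the classical tail bound $2^{-t^{1/d}/O(1)}$ for depth-$d$ circuits (the form you display), $g$ has depth $D+1$ and you land at $(\log S)^{D+1}$, which is fine; but if you quote a weaker $t^{1/(d+2)}$-type statement (the form the paper cites), the exponent becomes $D+3$ and the implication fails, so you must commit to the stronger version. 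Second, when $p$ is extremely small the gadget has $N\lceil\log 1/p\rceil$ inputs, and the truncation level must beat the reciprocal of this count, which can exceed $\mathrm{poly}(S)$; this is handled by a one-line case split, since for $p\leq 1/N$ the trivial bound $I_p(\mathcal{C})\leq N/4$ already implies the lemma. (Your appeal to $N\leq S$ via ``relevant inputs'' shares the same implicit convention as the paper's own proof, which derives $\log(NS)$ and states $\log S$, so I do not count it against you.)
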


Note that the circuit $\mathcal{C} \circ \Phi$ has size $s+O(N\log 1/p_c)$ and depth $d+1$. Hence applying Lemmas \ref{lem:russo} and \ref{lem:LMN} for $f=\mathcal{C} \circ \Phi$ we conclude that for some universal constant $c_2>0,$
\begin{align}\label{eq:deriv_2}
       \max_{p \in [1/3,2/3]}|\frac{d}{dp}\E_{p} \mathcal{C} \circ \Phi |\leq \left(c_2 \log \left(s+c_2N\log 1/p_c\right))\right)^{d+3}. 
    \end{align}
Combining \eqref{eq:deriv_2} with \eqref{eq:deriv_0} we conclude for some universal constant $c_3,c_4>0,$

\begin{align}\label{eq:final_1}
      2c_4\exp \left( c_3(\frac{\delta}{\epsilon \log 1/p_c})^{1/(d+3)}\right) \leq  s+N\log 1/p_c.
    \end{align}

    Hence, it must hold
    \begin{align}\label{eq:final_2}
      \max\{s,N\log 1/p_c\} \geq c_4\exp \left( c_3(\frac{\delta}{\epsilon \log 1/p_c})^{1/(d+3)}\right).
    \end{align}

    Hence, either
\begin{align}\label{eq:final_case_1}
      N \log 1/p_c \geq c_4\exp \left( c_3(\frac{\delta}{\epsilon \log 1/p_c})^{1/(d+3)}\right).
    \end{align}which implies

    \begin{align*}
    \log(N \log 1/p_c/c_4) \geq c_3(\frac{\delta}{\epsilon \log 1/p_c})^{1/(d+3)}
    \end{align*}and therefore

     \begin{align*}
      \log (\log(N \log 1/p_c/c_4)/c_3) \geq  (d+3)^{-1} \log(\frac{\delta}{\epsilon \log 1/p_c}),
    \end{align*}from which we conclude for a sufficiently large $c_5>0,$
    \begin{align*}
      d\geq \frac{\log \frac{\delta}{\epsilon \log 1/p_c}}{\log \log (N\log 1/p_c) +c_5} -3.
    \end{align*} Observe that for large enough $N$, $\log \log (N\log 1/p_c) +c_5 \leq 3/2\log \log (N\log 1/p_c) \leq 2 \log \log N,$ where in the last step we use $p_c =N^{-\Omega(1)}.$ The inequality \eqref{eq:final_00_lem} thus follows.

If \eqref{eq:final_case_1} does not hold then \eqref{eq:final_2} implies,
    \begin{align}\label{eq:final_case_2}
     s\geq c_4\exp \left( c_3(\frac{\delta}{\epsilon \log 1/p_c})^{1/(d+3)}\right).
    \end{align}
    The inequality \eqref{eq:final_01_lem} then follows.

\section{Proof of the circuit lower bound for clique, Theorem \ref{thm:clique}}

The proof combines Theorem \ref{thm:main_1} with the following sharp threshold. We call $\mathcal{P}_k$ the set of $n$-vertex undirected graphs containing a $k$-clique. 
\begin{lemma}[Sharp transition -- linear cliques in random graphs]\label{lem:sharp_transition} Fix any $k=\Theta(n)$ and any constant $\xi \in (0,1/2).$ For $p=1-d/n$, let $ d=d_{n,k}>0$ such that $\PP_p \left(G \in \mathcal{P}_{k}\right)=1-\xi.$  Then,
    \begin{itemize}
        \item[(a)] $d=\Theta(1).$
        \item[(b)] Fix any $0<\gamma<1/2.$\\
        For $p=1-d/n-1/n^{3/2-\gamma}=(1-\Theta(1/n^{3/2-\gamma}))(1-d/n),$ and $n$ sufficiently large, $\PP_p \left(G \in \mathcal{P}_{k}\right)\leq \xi$.
    \end{itemize} 
\end{lemma}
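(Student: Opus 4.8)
\textbf{Proof strategy for Lemma \ref{lem:sharp_transition}.} The plan is to reduce the statement about $k$-cliques in $G(n,p)$ with $p$ close to $1$ to a statement about the independence number of a sparse Erd\H{o}s--R\'enyi graph $G(n,q)$ with $q=1-p=d/n$. Indeed, $G(n,p)$ contains a $k$-clique if and only if its complement, which is distributed as $G(n,1-p)=G(n,q)$, contains an independent set of size $k$. Thus part (a) amounts to showing that there is a constant $d=d_{n,k}=\Theta(1)$ such that the probability $G(n,d/n)$ has an independent set of size $k=\Theta(n)$ equals $1-\xi$; this is essentially the classical fact (see Frieze \cite{frieze1990independence}) that the independence number of $G(n,c/n)$ is $\Theta(n)$ with a sharp concentration, and more precisely $\alpha(G(n,c/n)) = \frac{2n}{c}(\log c - \log\log c - \log 2 + 1 + o(1))$ for large $c$; monotonicity of $\P_{d/n}(\alpha \ge k)$ in $d$ (adding edges can only decrease the independence number) together with the fact that it ranges from near $1$ (for $d$ small, since then $\alpha$ is large) to near $0$ (for $d$ large) forces the existence of the desired $d=\Theta(1)$ by an intermediate value / monotonicity argument.

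\textbf{Key steps.} First I would set up the complementation: let $q=1-p$, and note $\P_p(G\in\mathcal{P}_k)=\P_q(\alpha(G)\ge k)$ where $\alpha$ is the independence number and $G\sim G(n,q)$. Second, I would record the relevant quantitative result on $\alpha(G(n,c/n))$: for each constant $c>0$, $\alpha(G(n,c/n))/n$ converges in probability to an explicit constant $a(c)$, with $a(c)$ continuous and strictly decreasing in $c$, $a(c)\to 1$ as $c\to 0$ and $a(c)\to 0$ as $c\to\infty$; moreover one has strong concentration, e.g. $\alpha$ is within $o(n)$ (in fact within $O(\sqrt{n})$ or $O(\sqrt{n\log n})$) of its mean with probability $1-o(1)$, which follows from the bounded-differences (Azuma--Hoeffding) inequality applied to the vertex-exposure martingale, since changing the edges at one vertex changes $\alpha$ by at most $1$. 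Third, using $k=\Theta(n)$, say $k=\kappa n(1+o(1))$ with $\kappa\in(0,1)$ a constant, I would pick $c^\star$ with $a(c^\star)=\kappa$; then for $d$ slightly below $c^\star$ the probability $\P_{d/n}(\alpha\ge k)$ is $1-o(1)>1-\xi$ and for $d$ slightly above $c^\star$ it is $o(1)<1-\xi$. Since $d\mapsto \P_{d/n}(\alpha\ge k)$ is (weakly) monotone decreasing — this uses the natural monotone coupling of $G(n,d/n)$ in $d$, under which $\alpha$ is nonincreasing — there is a value $d=d_{n,k}=\Theta(1)$ with $\P_{d/n}(\alpha\ge k)=1-\xi$ exactly (or, if the function has jumps, one takes the crossing point; since the $p$-biased measure is continuous in $p$ the function is actually continuous, so an exact crossing exists). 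This proves part (a).

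\textbf{Part (b).} For the window estimate I would perturb $p$ downward to $p'=p-1/n^{3/2-\gamma}$, i.e. raise the complement density to $q'=q+1/n^{3/2-\gamma}=\frac{d}{n}+\frac{1}{n^{3/2-\gamma}}$. The heuristic is that $\frac{d}{dc}a(c)<0$, so increasing $c$ by $\delta$ decreases $\mathbb{E}\alpha(G(n,c/n))$ by $\Theta(n\delta)$; here $\delta=n^{-1/2+\gamma}$, so the mean drops by $\Theta(n^{1/2+\gamma})$, which dominates the $O(\sqrt{n\log n})$ fluctuations. Concretely, I would couple $G(n,q')$ as $G(n,q)$ plus an independent sprinkle of edges each present with probability $\frac{q'-q}{1-q}=\Theta(n^{-3/2+\gamma})$; the number of sprinkled edges is $\Theta(n^{1/2+\gamma})$ in expectation and concentrates, and each added edge between two vertices of a fixed maximum independent set destroys it. A cleaner route: the probability that $G(n,q')$ has \emph{some} independent set of size $k$ is at most the expected number of such sets, which I can bound via $\binom{n}{k}(1-q')^{\binom{k}{2}}$ and compare to the corresponding quantity at $q$; the ratio is $((1-q')/(1-q))^{\binom{k}{2}} = (1 - \Theta(n^{-3/2+\gamma}))^{\Theta(n^2)} = \exp(-\Theta(n^{1/2+\gamma}))$, which crushes the probability from $\Theta(1)$ down to $o(1)$, hence below $\xi$. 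I would also need a matching lower-bound input at density $q$ (that $\P_q(\alpha\ge k)$ is bounded away from $0$, which is part (a)) to make the first-moment comparison meaningful, but the decisive ingredient is this exponential suppression factor.

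\textbf{Main obstacle.} The genuinely delicate part is not part (b) — the first-moment/sprinkling bound there is robust — but part (a): locating $d$ so that $\P_{d/n}(\alpha\ge k)$ equals \emph{exactly} $1-\xi$ for a constant $d$, which requires (i) a sufficiently precise description of where $\alpha(G(n,c/n))$ concentrates as a function of the constant $c$ (continuity and strict monotonicity of the limiting value $a(c)$, plus the correct concentration window so that the transition in $c$ is sharp enough to pin down $d$ to $\Theta(1)$), and (ii) knowing that $k=\Theta(n)$ lands strictly inside the range $(0,n)$ of achievable values so that such a $c^\star$ exists. Both facts are available from the classical literature on independence numbers of sparse random graphs (Frieze \cite{frieze1990independence} and standard martingale concentration), so the obstacle is one of careful bookkeeping — tracking $o(1)$ versus $\Theta(\cdot)$ terms — rather than a conceptual one; I would be careful to state exactly which quantitative form of Frieze's theorem is invoked and to verify the concentration window $O(\sqrt{n\log n})$ is $o(n^{1/2+\gamma})$ so that parts (a) and (b) are compatible with the claimed window $n^{-3/2+\gamma}$ for the original parameter $p$.
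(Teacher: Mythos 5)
Your reduction to the independence number of the complement $G(n,1-p)$ and your treatment of part (a) (Frieze's asymptotics plus Azuma--Hoeffding on the vertex-exposure martingale, monotone coupling, continuity in $p$) match the paper, which indeed dispatches (a) by citing \cite{frieze1990independence} and uses exactly that concentration inequality. The problem is part (b), which you judge to be the ``robust'' half; it is in fact where all the work lies, and your primary route for it fails. The first-moment comparison $\P_{q'}(\alpha\ge k)\le \E_{q'}[Z_k]=\E_q[Z_k]\cdot((1-q')/(1-q))^{\binom{k}{2}}$ only helps if $\E_q[Z_k]=\exp(o(n^{1/2+\gamma}))$, but at the critical density $q=d/n$ with $k=\Theta(n)$ the expected number of independent $k$-sets is $\exp(\Theta(n))$ (this is the classical first-moment/second-moment gap for independent sets in sparse random graphs: the first-moment threshold for $\alpha/n$ strictly exceeds the true value, so at the true threshold the count has exponentially large mean while the probability is only $\Theta(1)$). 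You conflate the probability $\P_q(\alpha\ge k)=1-\xi=\Theta(1)$ with the expected count; multiplying $\exp(\Theta(n))$ by $\exp(-\Theta(n^{1/2+\gamma}))$ gives nothing.

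Your alternative sprinkling sketch is the right idea and is what the paper does, but as stated it has a gap: an added edge inside ``a fixed maximum independent set'' destroys that one set, which says nothing about $\alpha(G)$ dropping, since there may be exponentially many maximum independent sets. What you actually need is a deterministic lower bound $\alpha(G_1)-\alpha(G_1\cup G_2)=\Omega(n^{1/2+\gamma'})$ for some $\gamma'$ with $1/2+\gamma'$ exceeding the exponent in the Azuma window. The paper gets this by restricting attention to the set $\mathcal{R}(G_1)$ of isolated vertices of $G_1$ (of size $\Theta(n)$ w.h.p., by a second-moment computation), which is contained in every maximum independent set of $G_1$, and then showing the sprinkle $G_2$ places $\Omega(n^{1/2+2\epsilon/3})$ pairwise vertex-disjoint edges inside $\mathcal{R}(G_1)$: each such edge forces any independent set of $G$ to omit one of two vertices that could be freely added back in $G_1$, so $\alpha$ drops by at least the number of such edges. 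To repair your proof you would need to supply this (or an equivalent) quantitative mechanism; without it, neither of your two proposed arguments for (b) closes.
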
The proof of the Lemma is in Section \ref{sec:omitted}.

Given Lemma \ref{lem:sharp_transition} the proof proceeds in a straightforward manner as follows.
Assume that for some $k=\Theta(n),$ a circuit $\mathcal{C}$ computes the $k-$clique Boolean function $f$ on average. Then for some constant $\xi \in (0,1/2),$ and $p_1,p_2 \in (0,1)$ such that $\E_{p_1}f=\xi, \E_{p_2}f=1-\xi,$ it holds also \begin{align*}\E_{p_2}\mathcal{C}-\E_{p_1}\mathcal{C} \geq \E_{p_2}f-\E_{p_1}f -o(1) \geq 1-2\xi-o(1).
\end{align*}By Lemma \ref{lem:sharp_transition} for this value of $\xi$, we have $p_2=1-\Theta(1/n)$ and $|p_1-p_2| \leq n^{-3/2+\gamma}$ for any small constant $\gamma>0.$ Hence $\mathcal{C}$ exhibits a sharp threshold at the location $p_c=1-\Theta(1/n),$ window size $\epsilon_N=o(n^{-3/2+\gamma})$ and jump size $\delta=\Theta(1).$ 

Notice $$\frac{\delta(1-p_c)}{\epsilon \log 1/\beta}=\Omega(n^{1/2-\gamma}/\log n).$$ 
Hence, using Theorem \ref{thm:main_1} for $N=\binom{n}{2}$, we conclude that for any arbitrarily small constant $\gamma>0$ either 

\begin{align}\label{eq:final_cl}
      d(\mathcal{C}) \geq (1/4-\gamma)(\log \log n)^{-1}  \log n.
    \end{align} or 
    \begin{align}\label{eq:final_cl1}
     s(\mathcal{C}) \geq \exp \left( (n^{(1/2-2\gamma)/(d(\mathcal{C})+3)}\right).
    \end{align}This concludes the proof.

\section{Proof of the circuit lower bound for random 2-SAT, Theorem \ref{thm:sat}}
The proof combines Theorem \ref{thm:main_1} with following sharp threshold from \cite{bollobas2001scaling}. Let $\mathcal{P}_{\mathrm{2-SAT}}$ be the set of all 2-SAT formulas which are satisfiable and $f$ the Boolean function representing it.

Note that in our notation, $\bX \sim \P_p$ corresponds to a random 2-SAT formula where every clause is included into the formula 
independently with probability $p$. The following sharp threshold holds.
\begin{lemma}[Sharp transition -- scaling window of 2-SAT \cite{bollobas2001scaling}]\label{lem:sharp_transition_SAT} 
For any constant $\xi \in (0,1/2)$ there exists a large enough constant $c>0$, 
    \begin{itemize}
        \item[(a)] For $p=1/(2n)-c/n^{4/3}$, $$\PP_{p} \left(\bX \in \mathcal{P}_{\mathrm{2-SAT}}\right)\geq 1-\xi.$$

        \item[(b)] For $p=1/(2n)+c/n^{4/3}$, $$\PP_{p} \left(\bX \in \mathcal{P}_{\mathrm{2-SAT}}\right)\leq \xi.$$
    \end{itemize} 
\end{lemma}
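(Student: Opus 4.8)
The plan is to derive this directly from the scaling-window theorem for random $2$-SAT of Bollob\'as, Borgs, Chayes, Kim and Wilson \cite{bollobas2001scaling}, after translating from the fixed-number-of-clauses ensemble used there to the $p$-biased ensemble $\P_p$ used here. The precise input I would take from \cite{bollobas2001scaling} is the following: writing $F_m$ for a formula consisting of $m$ clauses chosen uniformly at random among the $N=2n(n-1)$ possible $2$-clauses (the precise convention --- with or without replacement, ordered or unordered variable pairs --- affects only $O(1)$ lower-order terms and is immaterial at the scales below), for every constant $\xi\in(0,1/2)$ there is a constant $\Lambda=\Lambda(\xi)$ such that $\P(F_m\ \text{satisfiable})\ge 1-\xi/2$ whenever $m\le n-\Lambda n^{2/3}$, and $\P(F_m\ \text{satisfiable})\le \xi/2$ whenever $m\ge n+\Lambda n^{2/3}$. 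This is exactly the statement that the scaling window of the $2$-SAT transition is $m=n+\Theta(n^{2/3})$.

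Next I would set up the translation between ensembles. Under $\P_p$ the number $M$ of included clauses is $\mathrm{Binomial}(N,p)$, and conditionally on $M=m$ the set of included clauses is a uniformly random $m$-subset of all $N$ clauses, i.e.\ distributed exactly as the clause set of $F_m$. Since $2$-satisfiability is downward closed in the clause set (deleting clauses cannot break satisfiability), the prefix coupling --- fix a uniformly random ordering of all $N$ clauses and let $F_m$ be its first $m$ clauses --- shows that $q(m):=\P(F_m\ \text{satisfiable})$ is non-increasing in $m$. Hence for any threshold $m^\ast$,
\[
\PP_p(\bX\in\mathcal{P}_{\mathrm{2-SAT}})=\sum_m \P(M=m)\,q(m)\geq q(m^\ast)\,\P(M\le m^\ast),
\]
and symmetrically $\PP_p(\bX\notin\mathcal{P}_{\mathrm{2-SAT}})\geq\bigl(1-q(m_\ast)\bigr)\,\P(M\ge m_\ast)$ for any $m_\ast$.

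The final step is to feed in the concentration of $M$. For $p=\tfrac{1}{2n}\mp c\,n^{-4/3}$ one computes $\E M=Np=(n-1)\mp 2c(n-1)n^{-1/3}=n\mp 2c\,n^{2/3}+O(1)$ and $\mathrm{Var}(M)=Np(1-p)=\Theta(n)$, so a Chernoff bound gives $\P\bigl(|M-\E M|\ge n^{3/5}\bigr)\le\xi/2$ for $n$ large, using $\sqrt n\ll n^{3/5}\ll n^{2/3}$. For part (a), where $p=\tfrac1{2n}-c\,n^{-4/3}$, I would fix $c:=\Lambda+1$ and $m^\ast:=\lfloor n-\Lambda n^{2/3}\rfloor$; then $\E M+n^{3/5}=n-2c\,n^{2/3}+o(n^{2/3})\le m^\ast$ for $n$ large, so $\P(M\le m^\ast)\ge 1-\xi/2$, while $q(m^\ast)\ge 1-\xi/2$ by the scaling-window bound, giving $\PP_p(\bX\in\mathcal{P}_{\mathrm{2-SAT}})\ge(1-\xi/2)^2\ge 1-\xi$. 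Part (b) is the mirror image: with $c:=\Lambda+1$ and $m_\ast:=\lceil n+\Lambda n^{2/3}\rceil$ one gets $\P(M\ge m_\ast)\ge 1-\xi/2$ and $1-q(m_\ast)\ge 1-\xi/2$, so $\PP_p(\bX\in\mathcal{P}_{\mathrm{2-SAT}})\le 1-(1-\xi/2)^2\le\xi$.

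The only genuinely non-routine ingredient is the scaling-window theorem of \cite{bollobas2001scaling}, which I would invoke as a black box; everything else is bookkeeping. The single point requiring (mild) care is verifying that the $\Theta(\sqrt n)$ typical fluctuations of the clause count $M$ are of strictly smaller order than the window width $n^{2/3}$, so that passing between the $p$-biased and fixed-$m$ ensembles is lossless at the relevant scale --- which is precisely why the exponent $4/3$ in $p=\tfrac1{2n}\pm c\,n^{-4/3}$ (equivalently, a $\pm\Theta(n^{2/3})$ shift in the expected number of clauses) is the right one.
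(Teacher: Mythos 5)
Your proposal is correct; the paper gives no proof of this lemma at all, simply attributing it to \cite{bollobas2001scaling}, and your argument supplies exactly the routine translation --- monotonicity of satisfiability in the clause set plus concentration of the $\mathrm{Binomial}(N,p)$ clause count at scale $\sqrt{n}\ll n^{2/3}$ --- needed to pass from the fixed-$m$ scaling-window theorem of that reference to the $\P_p$ ensemble used here. The only point requiring care in a full write-up is matching the clause-ensemble convention of \cite{bollobas2001scaling} (with/without replacement, unit clauses excluded), but as you correctly observe this affects only lower-order terms relative to the $n^{2/3}$ window.
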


Given Lemma \ref{lem:sharp_transition_SAT} the proof proceeds as follows.
Assume that a circuit $\mathcal{C}$ computes the Boolean random 2-SAT $f$ on average. Then for some constant $\xi \in (0,1/2),$ and $p_1,p_2 \in (0,1)$ such that $\E_{p_1}f=\xi, \E_{p_2}f=1-\xi,$ it holds also \begin{align*}\E_{p_2}\mathcal{C}-\E_{p_1}\mathcal{C} \geq \E_{p_2}f-\E_{p_1}f -o(1) \geq 1-2\xi-o(1).
\end{align*}
Then, Lemma \ref{lem:sharp_transition_SAT}, implies $|p_1-p_2| \leq n^{-4/3}$ and $p_1,p_2=\Theta(1/n)$. Hence $\mathcal{C}$ must exhibit a sharp threshold at the location $p_c=\Theta(1/n)$, window size $\epsilon_n=\Theta(n^{-1/3})$ and jump size $\delta=\Theta(1).$ 

Notice $$\frac{\delta(1-p_c)}{\epsilon \log 1/\beta}=\Theta(n^{1/3}/\log n)=n^{1/3-o(1)}.$$ Hence, using Theorem \ref{thm:main_1} for $N=2n(n-1)$, we conclude that either for all small constant $\eta>0$ and large enough $n$,

\begin{align}\label{eq:final_SAT}
      d \geq (1/6-\eta)(\log \log n)^{-1}  \log n.
    \end{align} or for all small constant $\eta>0$ and large enough $n$,
    \begin{align}\label{eq:final_SAT1}
     s \geq \exp \left( (n^{1/(3d+9)-\eta}\right).
    \end{align}This concludes the proof.

 \section{Proof of the circuit lower bound for Planted Clique, Theorem \ref{thm:planted-clique}}

The proof combines Theorem \ref{thm:circuit_aon} with the following new sharp threshold phenomenon on the exact recovery of the planted clique.
\begin{lemma}[Sharp transition -- planted clique]\label{lem:sharp_transition_pc} Suppose for some constant $\delta>0,$ $(\log n)^{\omega(1)}=k \leq n^{1/3-\delta}.$ Then the $k$-planted clique problem exhibits the All-or-nothing phenomenon for exact recovery at some $p_{\mathrm{IT}} =1-\Theta(\log n/k) $ with window size $\epsilon=\Theta(1/k^2)$. 
\end{lemma}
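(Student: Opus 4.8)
The plan is to reduce the All-or-Nothing statement for planted $k$-clique to a first- and second-moment analysis of the number of $k$-cliques in $G(n,p)$, via the ``planting trick'' (size-biasing). Throughout write $N_k(Y)$ for the number of $k$-cliques of the observed graph $Y=(\text{clique on }S)\cup\bX$ with $\bX\sim\P_p$, and $\mu(p):=\E_{G(n,p)}[N_k]=\binom{n}{k}p^{\binom{k}{2}}$.

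\emph{Reduction to clique counting.} A direct likelihood computation gives $\P(Y=y\mid S=T)=\Ind{\binom{T}{2}\subseteq y}\,\pi(y)\,p^{-\binom{k}{2}}$ with $\pi(y)=\P_{G(n,p)}(G=y)$, so conditionally on $Y$ the hidden set $S$ is \emph{uniformly distributed over the $k$-cliques of $Y$}. Hence the Bayes-optimal exact-recovery probability, over all estimators and in particular all circuits, equals $q_n(p):=\E[1/N_k(Y)]$, and a monotone coupling in $p$ shows $q_n$ is continuous and non-increasing in $p$. Since, by definition of $p_{\mathrm{IT}}$, one automatically has $q_n(p)\ge 0.9$ for all $p<p_{\mathrm{IT}}$ (this is the ``all'' side), the entire content of AoN is the single inequality $q_n((1+\epsilon)p_{\mathrm{IT}})\le 0.1$.

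\emph{Planting trick.} The same likelihood computation yields $\P_{\mathrm{planted}}(Y\in\cdot)=\frac{N_k(Y)}{\mu(p)}\,\mathrm{d}\P_{G(n,p)}$, i.e.\ the planted model is $G(n,p)$ size-biased by its $k$-clique count; evaluating this identity against the test function $1/N_k$ gives the clean formula
\[
q_n(p)=\frac{\P_{G(n,p)}(N_k\ge 1)}{\mu(p)}.
\]
Everything is now governed by the moments of $N_k$ under $G(n,p)$: trivially $q_n(p)\le 1/\mu(p)$, and by Bonferroni $q_n(p)\ge 1-\tfrac12\,\E[N_k(N_k-1)]/\mu(p)$.

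\emph{Locating $p_{\mathrm{IT}}$ and the window.} The quantitative input is the second-moment estimate $\E[N_k(N_k-1)]=(1+o(1))\mu(p)^2$ whenever $\mu(p)=\Theta(1)$; granted this, $q_n(p)\ge 1-\tfrac12(1+o(1))\mu(p)$, so $q_n\ge 0.9$ once $\mu(p)$ is a small enough constant while $q_n\le 1/\mu(p)<0.9$ once $\mu(p)$ is a large enough constant. As $p$ sweeps $(0,1)$, $\mu(p)$ increases continuously from $0$ to $\infty$, so there is $p_{\mathrm{IT}}$ with $q_n(p_{\mathrm{IT}})=0.9$ and $\mu(p_{\mathrm{IT}})=\Theta(1)$; solving $\log\binom{n}{k}+\binom{k}{2}\log p_{\mathrm{IT}}=\Theta(1)$ and using $\log\binom{n}{k}=\Theta(k\log n)$ (valid since $k\le n^{1/3-\Omega(1)}$ makes $\log(n/k)=\Theta(\log n)$) gives $\log(1/p_{\mathrm{IT}})=\Theta(\log n/k)$, hence $p_{\mathrm{IT}}=1-\Theta(\log n/k)$ because $k=(\log n)^{\omega(1)}$ forces $\log n/k=o(1)$. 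Now take $\epsilon=C/k^2$ with $C$ a large absolute constant: then $\log\mu((1+\epsilon)p_{\mathrm{IT}})-\log\mu(p_{\mathrm{IT}})=\binom{k}{2}\log(1+\epsilon)=\Theta(C)$, so $\mu((1+\epsilon)p_{\mathrm{IT}})$ is as large as we like and $q_n((1+\epsilon)p_{\mathrm{IT}})\le 1/\mu((1+\epsilon)p_{\mathrm{IT}})\le 0.1$; moreover $\epsilon p_{\mathrm{IT}}=\Theta(1/k^2)\ll 1-p_{\mathrm{IT}}$, so $[(1-\epsilon)p_{\mathrm{IT}},(1+\epsilon)p_{\mathrm{IT}}]\subseteq(0,1)$. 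Conversely $\mu$ changes by only a $1+o(1)$ factor over an interval of multiplicative width $o(1/k^2)$ about $p_{\mathrm{IT}}$, so no window of smaller order works, confirming $\epsilon=\Theta(1/k^2)$.

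\emph{Main obstacle.} The real work is the second-moment estimate, i.e.\ showing $\sum_{i=1}^{k-1}\binom{k}{i}\binom{n-k}{k-i}p^{-\binom{i}{2}}=o\big(\binom{n}{k}\big)$ at $p=1-\Theta(\log n/k)$ (and, at a large constant multiple of $\log n/k$ above, the corresponding growth used above). Since $\binom{k}{i}\binom{n-k}{k-i}/\binom{n}{k}\asymp (k^2/n)^i/i!$ while $p^{-\binom{i}{2}}=\exp(\Theta(i^2\log n/k))$, the $i$-th term is essentially $\exp\!\big(i\,[\ln(k^2/n)+\Theta(i\log n/k)]\big)/i!$, and one must check that the bracket stays negative across the whole range $1\le i\le k-1$ (the cases $i$ near $0$ and $i$ near $k$ needing slightly separate but still routine estimates). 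This is exactly the classical clique second-moment computation, adapted to $p$ close to $1$, and the exponent $1/3$ in the hypothesis $k\le n^{1/3-\Omega(1)}$ is precisely the threshold that makes it succeed; the same bookkeeping also controls the lower-order ``overlap with $S$'' corrections if one prefers to work directly in the planted model rather than through the size-biasing identity.
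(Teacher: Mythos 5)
Your proposal takes essentially the same route as the paper: the planting (size-biasing) trick reduces everything to the first and second moments of the $k$-clique count, the threshold is located where $\binom{n}{k}p^{\binom{k}{2}}=\Theta(1)$, and the window $\Theta(1/k^2)$ comes from $(1+\epsilon)^{\binom{k}{2}}=e^{\Theta(\epsilon k^2)}$. Your packaging is actually a bit cleaner than the paper's: the single identity $q_n(p)=\P_{G(n,p)}(N_k\ge 1)/\mu(p)$ gives both directions at once (the paper instead does the ``all'' side by a first-moment count of non-planted cliques directly in the planted model, and the ``nothing'' side by the change-of-measure argument showing $Z(Y)$ is large w.h.p.; these are equivalent to your null-model second moment via the same size-biasing identity). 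The one caveat is that the step you label the ``main obstacle'' — verifying that $\sum_{i=1}^{k-1}\binom{k}{i}\binom{n-k}{k-i}\binom{n}{k}^{-1}p^{-\binom{i}{2}}=o(1)$ at $p=1-\Theta(\log n/k)$ — is precisely where the bulk of the paper's proof lives (a split of the overlap range at $(1-\delta')k$, a log-convexity argument for the middle range, and the use of $k\le n^{1/3-\delta}$ for overlaps near $k$), and you assert rather than carry out this computation; your heuristic for it (the per-$i$ exponent $\ln(k^2/n)+\Theta(i\log n/k)$ staying negative, with the $1/3$ exponent as the breaking point) is the correct one, so this is a matter of unfinished bookkeeping rather than a wrong idea.
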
The proof of the Lemma is in Section \ref{sec:omitted}.

Given Lemma \ref{lem:sharp_transition_pc} the proof proceeds as follows.
Let $(\log n)^{\omega(1)}=k \leq n^{1/3-\Omega(1)},$ and consider the $k$-planted clique problem. Then, given Lemma \ref{lem:sharp_transition_pc}, it must exhibit the All-or-Nothing phenomenon for exact recovery with window size $\epsilon_N=\Theta(1/k^2)$ at some $p_{\mathrm{IT}} =1-\Theta(\log n/k).$ Moreover, $\beta=\min\{1-p_{\mathrm{IT}},p_{\mathrm{IT}}\}=\Theta(\log n/k).$ It holds $(1-p_{\mathrm{IT}})/\log 1/\beta=\Omega(1/k)$. Hence $$\epsilon=o((1-p_{\mathrm{IT}})/\log 1/\beta)$$and in particular, $$\frac{1-p_{\mathrm{IT}}}{\epsilon \log 1/\beta}=\Omega(k).$$ Hence, using Theorem \ref{thm:main_1} for $N=\binom{n}{2}$, and that $(\log n)^{\omega(1)}=k \leq n^{1/3-\Omega(1)},$  we conclude that either

\begin{align}\label{eq:final_00_pc}
      d(\mathcal{C}) \geq (1/2-o(1))(\log \log n)^{-1}  \log k.
    \end{align} or
    \begin{align}\label{eq:final_01_pc}
     s(\mathcal{C}) \geq c_1\exp \left( c_2k^{1/(d(\mathcal(C)+6)}\right)
    \end{align}  This concludes the proof.

\section{Omitted Proofs}\label{sec:omitted}

  \begin{proof}[Proof of Lemma \ref{lem:circuit}]

For input $X \in \{0,1\}^{N\ceil{\log 1/p_0}}$, we split the coordinates of $X$ into the $N$ disjoint blocks of $\ceil{\log 1/p_0}$ consecutive bits which we denote by $X_1,\ldots,X_{N} \in \{0,1\}^{\ceil{\log 1/p_0}}$, i.e., $X=(X_1,\ldots,X_N)$. Then, for $i=1,2,\ldots, N$, we set $$\Phi(X)_i=\land_{j=1}^{\ceil{\log 1/p_0}}  (X_i)_j.$$ Clearly $\Phi$ can be modeled by a depth 1 and $O(N\log 1/p_0)$-size Boolean circuit is obvious.

Now, notice that $\Phi(X)_i=1$ if and only if $(X_i)_j=1$ for all $ 1 \leq  j \leq \ceil{\log 1/p_0}.$ Hence, for any $p \in (0,1)$ if $X_i \sim \mathrm{Bern}(p)^{\otimes \ceil{\log 1/p_0}}$ then $\Phi(X)_i=1$ with probability $p^{\ceil{\log 1/p_0}}$, i.e., it holds
\begin{align}\label{eq:dist}
        \Phi\left(\mathrm{Bern}(p)^{\otimes N\ceil{\log 1/p_0}}\right)\overset{d}{=} \mathrm{Bern}(p^{\ceil{\log 1/p_0}})^{\otimes N}.
    \end{align} To prove now \eqref{eq:dist_1}, \eqref{eq:dist_2} we choose $p_1=p_0^{1/\ceil{\log 1/p_0}}$. Since $p_0 \leq 1/2,$ it holds $$p_1=2^{- \log 1/p_0/\ceil{\log 1/p_0}}\in [1/2,1/\sqrt{2})=\Theta(1).$$ Also, of course, $p_1^{\ceil{\log 1/p_0}}=p_0$ and therefore \eqref{eq:dist_1} holds. 
    
    For \eqref{eq:dist_2} we set $p=p_1-r_n,$ to obtain from \eqref{eq:dist} that it suffices the solution $r_n>0$ of
\begin{align}\label{eq:inv}
 (p_1-r_n)^{\ceil{\log 1/p_0}}=p_0(1-\gamma)   
\end{align} to satisfy $r_n=\Theta(\gamma \log 1/p_0).$ But \eqref{eq:inv} is equivalent with $$p_1-r_n=p_1(1-\gamma)^{1/\ceil{\log 1/p_0}}.$$Since $\gamma \log 1/p_0=o(1)$ and $p_1=\Theta(1)$, the last equation implies $$r_n=p_1-p_1(1-\Theta(\gamma \log 1/p_0))=\Theta(\gamma \log 1/p_0).$$This completes the proof.
 
    \end{proof}

\begin{proof}[Proof of Lemma \ref{lem:russo}]
    Viewing the $p$-biased measure on $\{0,1\}^N$ as the diagonal case of the $(p_1,p_2,\ldots,p_N)$-biased measure on $\{0,1\}^N$ we have by chain rule,
    $$\frac{d}{dp} \E_p f=\sum_{i=1}^N \frac{d}{dp_i} \E_{(p_1,\ldots,p_N)} f |_{(p_1,\ldots,p_N)=(p,\ldots,p)}.$$Now, by direct calculations, \begin{align*}|\frac{d}{dp_i} \E_{(p_1,\ldots,p_N)} f |_{(p_1,\ldots,p_N)=(p,\ldots,p)}|& =|\E_{p_{-i}}[f(x_{-i},1)-f(x_{-i},0)]|_{(p_1,\ldots,p_N)=(p,\ldots,p)} | \\
    & \leq \E_{p_{-i}}[|f(x_{-i},1)-f(x_{-i},0)|]|_{(p_1,\ldots,p_N)=(p,\ldots,p)}.\end{align*}

    Now notice that for any $x,$ $f(x_{-i},1)-f(x_{-i},0) \in \{-1,0,1\}.$ Hence, it must hold 
    $$|f(x_{-i},1)-f(x_{-i},0)| = (f(x_{-i},1)-f(x_{-i},0))^2.$$Therefore,

    \begin{align*}|\frac{d}{dp_i} \E_{(p_1,\ldots,p_N)} f |_{(p_1,\ldots,p_N)=(p,\ldots,p)}| & \leq \E_{p_{-i}}[(f(x_{-i},1)-f(x_{-i},0))^2]|_{(p_1,\ldots,p_N)=(p,\ldots,p)}    \\ 
    &= (p(1-p))^{-1} (I_i)_p(f).\end{align*}
    The result follows.
\end{proof}

We also need the following variant of the LMN theorem.

\begin{proof}[Proof of Lemma \ref{lem:LMN}]
    Given the standard orthonormal basis under the $p$-biased measure $\{\chi_{S}, S \subseteq [N]\}$, we have for any $f: \{0,1\}^N \rightarrow \{0,1\}$, if $\{\hat{f}(S), S \subseteq [N]\}$ is it's Fourier coefficients, it must hold $$I_p(f)=\sum_{S \subseteq [N]}|S|\hat{f}(S)^2.$$

    Now, by an extension of the LMN theorem \cite[Lemma 9]{furst1991improved} (see also \cite[Lemma 3.2]{bogdanov2015homomorphic}) to $p$-biased measures, we have that for the Boolean circuit $\mathcal{C},$ it holds that for any integer $k \geq 1$,
    $$ \sum_{S \subseteq [N], |S|>k} \hat{C}(S)^2 \leq S 2^{-p(1-p)k^{1/(D+2)}/5}.$$
Hence for some $k_0=O\left(\left(5 \frac{\log (NS)}{p(1-p)}\right)^{D+2}\right),$
$$ \sum_{S \subseteq [N], |S|>k_0} \hat{C}(S)^2 \leq 1/N.$$
Hence,

\begin{align*}
  \sum_{S \subseteq [N]} |S|\hat{C}(S)^2 &=   \sum_{S \subseteq [N], |S| \leq k_0} |S|\hat{C}(S)^2+\sum_{S \subseteq [N], |S|>k_0} |S|\hat{C}(S)^2 \\
  & \leq k_0\sum_{S \subseteq [N]} \hat{C}(S)^2+N \sum_{S \subseteq [N], |S|>k_0} \hat{C}(S)^2\\
  & \leq k_0+1 =O\left(\left(5 \frac{\log (NS)}{p(1-p)}\right)^{D+2}\right).
\end{align*}
The proof is complete.
\end{proof}

\begin{proof}[Proof of Lemma \ref{lem:sharp_transition}]
We equivalently focus on the graph property $\mathcal{I}_{k}$ of a graph including an independent set of size $k$; we can do this  since for any $p$, $\PP_p \left(G \in \mathcal{P}_{k}\right)=\PP_{1-p} \left(G \in \mathcal{I}_{k}\right).$ 

We define $\alpha(G)$ to be the independence number of an $n$-vertex graph $G$. Recall that $d>0$ is now a solution to 
 \begin{align}\label{eq:sol}
     \PP_{d/n} \left(G \in \mathcal{I}_{k}\right)=\PP_{d/n} \left(\alpha(G) \geq k\right)=1-\xi.
 \end{align} 
 
 The fact that for $k=\Theta(n),$ $d=\Theta(1)$ follows by standard results \cite{frieze1990independence}. 
 
 A folklore application of Azuma-Hoeffding concentration inequality for the vertex exposure martingale (see e.g., \cite[Section 7]{alon2016probabilistic}) implies that for any $\epsilon>0,$
 $$\PP_{d/n} \left(|\alpha(G)/n-\E_{G \sim G(n,d/n)}\alpha(G)/n| \geq n^{-1/2+\epsilon/2} \right)\leq 2\exp\left(-n^{\epsilon}/2\right). $$Combining with \eqref{eq:sol} we conclude $|k/n-\E_{G \sim G(n,d/n)}\alpha(G)/n| \leq n^{-1/2+\epsilon/2}$ and therefore
  \begin{align}
  \label{eq:conc}
    \PP_{d/n} \left(\alpha(G)/n \leq k/n+2n^{-1/2+\epsilon/2} \right)\geq 1-2\exp\left(-n^{\epsilon}/2\right).  
  \end{align}

    Notice that by taking the union of an instance of a $G(n,p)$ and an independent instance of a $G(n,(q-p)/(1-p))$ we arrive at an instance of a $G(n,q).$ Hence, for $G_1 \sim G(n,d/n)$ and independent $G_2 \sim G(n,n^{-3/2+\epsilon}/(1-d/n))$ we have $G=G_1\cup G_2$ is a sample of $G(n,d/n+n^{-3/2+\epsilon})$.

    Hence, combining with \eqref{eq:conc},

\begin{align}
    \PP_{d/n+n^{-3/2+\epsilon}} \left(G \in \mathcal{I}_k\right)&=\PP_{d/n+n^{-3/2+\epsilon}} \left(\alpha(G) \geq k\right)\nonumber\\
    &\leq \PP_{d/n+n^{-3/2+\epsilon}} \left(\alpha(G) \geq k, \alpha(G_1)/n \leq k/n+2n^{-1/2+\epsilon/2} \right)+2\exp\left(-n^{\epsilon}/2\right)\nonumber\\
    & \leq \PP_{d/n+n^{-3/2+\epsilon}} \left( \alpha(G_1)-\alpha(G) \leq 2n^{1/2+\epsilon/2} \right)+2\exp\left(-n^{\epsilon}/2\right) \label{eq:ineq}
\end{align}

 Now, observe that any maximal independent set of an instance of $G_1$ almost surely contains all the isolated nodes of $G_1$, a set we denote by $\mathcal{R}(G_1)$. Recall that based on our coupling $G$ is obtained from $G_1$ by adding the edges of $G_2$. Since $G_1 \sim G(n,d/n)$ and $d=\Theta(1)$, it is a standard application of the second-moment method that $|\mathcal{R}(G_1)|=\Theta(n)$ w.h.p. as $n \rightarrow +\infty$ (see Lemma \ref{lem:isol_nodes}). Using now that $G_2$ is independent with $G_1,$ $G_2$ on $\mathcal{R}(G_1)$ is an instance of $G(|\mathcal{R}(G_1)|,n^{-3/2+\epsilon}/(1-d/n))$. By Lemma \ref{lem:isol_edges} we conclude that there are $\Omega(n^{1/2+2\epsilon/3})$ edges of $G_2$ between vertices of $\mathcal{R}(G_1)$ that do not share a vertex with any other edge of $G_2$, w.h.p. as $n \rightarrow +\infty.$  As an outcome, there are $\Omega(n^{1/2+2\epsilon/3})$ vertices of $\mathcal{R}(G_1)$ that, while they are all used by all maximal independent sets of $G_1,$ they are now impossible to be used by any maximal independent set of $G=G_1 \cup G_2.$ Therefore, w.h.p. as $n \rightarrow +\infty,$
 $$\alpha(G_1)-\alpha(G) \geq \Omega( n^{1/2+2\epsilon/3})$$ and in particular,
 $$\PP_{d/n+n^{-3/2+\epsilon}} \left( \alpha(G_1)-\alpha(G) \leq 2n^{1/2+\epsilon/2} \right)=o(1).$$Combining it now with \eqref{eq:ineq} we conclude that for large enough $n$,
 $$\PP_{d/n+n^{-3/2+\epsilon}} \left(G \in \mathcal{I}_k\right) \leq \xi,$$as we wanted.

\end{proof}

\begin{proof}[Proof of Theorem \ref{thm:circuit_aon}]

 We start with a construction. Given the successful circuit $\mathcal{C}: \{0,1\}^{N} \rightarrow \{0,1\}^{N}$, we augment it to a Boolean circuit $\mathcal{C}':\{0,1\}^{2N} \rightarrow \{0,1\}$ of depth $d'=d+3$ and size $s' \leq c_0 (s+N),$ for some universal constant $c_0>0,$ as follows.
 
 We add $N$ fresh bits to the input $X,$ which will correspond to the Hidden Subset vector $S \in \{0,1\}^N$. Hence, the input of $\mathcal{C}'$ is $(X,S).$ Now we add three extra layers on top of the layers of $\mathcal{C}$ with total $O(s+N)$ new gates. The first two added layers are used to compute the bits $(\mathcal{C}(X)_i \land S_i) \lor (\neg \mathcal{C}(X)_i \land \neg S_i)$ (i.e., the indicator of $\mathcal{C}(X)_i = S_i$) for $i=1,\ldots,N.$ Note that these two layers can be clearly implemented with $O(s+N)$ gates. The third added layer computes the bit $\land_{i=1}^N \left((\mathcal{C}(X) \land S_i)\cup (\neg \mathcal{C}(X)_i \land \neg S_i)\right)$ (i.e., the indicator of $\mathcal{C}(X) = S$). This final gate can be implemented with one extra gate.

It is immediate to see that $\mathcal{C}'(X,S)=1$ if and only if $\mathcal{C}(X) = S$. Since $\mathcal{C}$ solves the Hidden Subset Problem indexed by $\mathcal{P}$ we conclude that for any $p < p_{\mathrm{IT}},$
\begin{align}\label{eq:large}
    \P_{S \sim \mathcal{P}, \bX \sim \P_p} ( \mathcal{C}'(\bX,S)=1) \geq 0.9.
\end{align} Moreover, as the Hidden Subset Problem indexed by $\mathcal{P}$ exhibits the All-or-Nothing (AoN) Phenomenon for window size $\epsilon=\epsilon(\mathcal{P}_N) \in (0,1)$, then for any $p>(1+\epsilon)p_{\mathrm{IT}}$ the circuit $\mathcal{C}: \{0,1\}^N \rightarrow \{0,1\}^N$ must satisfy
\begin{align*}
    \P_{S \sim \mathcal{P}, \bX \sim \P_p} ( \mathcal{C}(\bX)=S) \leq 0.1.
\end{align*} Using again $\mathcal{C}'(X,S)=1$ if and only if $\mathcal{C}(X) = S$, we conclude that for any $p>(1+\epsilon)p_{\mathrm{IT}}$,
\begin{align}\label{eq:small}
    \P_{S \sim \mathcal{P}, \bX \sim \P_p} ( \mathcal{C}'(\bX,S)=1) \leq 0.1.
\end{align}

Combining \eqref{eq:large} and \eqref{eq:small} we conclude 

\begin{align}\label{eq:difference_0}
   &\E_{S \sim \mathcal{P}} \left[\E_{p_{\mathrm{IT}}} \mathcal{C}'(\bX,S)-\E_{(1+\epsilon)p_{\mathrm{IT}}} \mathcal{C}'(\bX,S) \right] \\
   =&\E_{S \sim \mathcal{P}} \left[\P_{p_{\mathrm{IT}}} (\mathcal{C}'(\bX,S)=1)-\P_{(1+\epsilon)p_{\mathrm{IT}}} (\mathcal{C}'(\bX,S)=1) \right] \geq 0.8,
\end{align}where $\P_p$ is the $p$-biased measure on $\bX.$

This implies that for some specific $S$ in the support of $\mathcal{P},$

\begin{align}\label{eq:difference}
  \E_{p_{\mathrm{IT}}}\mathcal{C}'(\bX,S)-\E_{(1+\epsilon)p_{\mathrm{IT}}} \mathcal{C}'(\bX,S) \geq 0.8,
\end{align} 

Therefore, for this fixed $S$, the circuit $\mathcal{C}_S(X)=\mathcal{C}'(X,S)$ has a sharp threshold at $p_c=p_{\mathrm{IT}},$ of window $\epsilon,$ jump size $\delta=\Theta(1)$. Now, the depth of $\mathcal{C}_S$ equals
$$d_S=d'=d+3$$and its size is $$s_S=s' \leq c_0 s(\mathcal{C})+c_0N.$$Using then the sharp threshold of $\mathcal{C}_S$ and Theorem \ref{thm:main_1} we conclude for $\beta:=\min\{p_{\mathrm{IT}},1-p_{\mathrm{IT}}\}$ that for some constants $c_1,c_2,c_3>0,$ for large enough $N,$

 \begin{align}\label{eq:final_000}
      d(\mathcal{C}) \geq \frac{\log \frac{1-p_{\mathrm{IT}}}{\epsilon\log 1/\beta}}{2\log \log N \log 1/\beta}-7.
    \end{align} or
    \begin{align}\label{eq:final_01}
     s(\mathcal{C}) \geq c_2\exp \left( c_3(\frac{1-p_{\mathrm{IT}}}{\epsilon \log 1/\beta})^{1/(d(\mathcal(C)+6)}\right)
    \end{align}
\end{proof}
    
\begin{proof}[Proof of Lemma \ref{lem:sharp_transition_pc}]

To prove the AoN phenomenon, we show that for some $\epsilon=\Theta(1/k^2),$ if $p=(1-\epsilon)\binom{n}{k}^{-1/\binom{k}{2}},$ exact recovery is possible (the ``all'' part), but if $p=(1+\epsilon)\binom{n}{k}^{-1/\binom{k}{2}},$ exact recovery is impossible (the ``nothing'' part). Since for our range of $k$ clearly $\binom{n}{k}^{-1/\binom{k}{2}}=\exp(-\Theta(\log n/k))=1-\Theta(\log n/k)$ the proof follows.

  For notation purposes, we call $\mathcal{PC}$ the ``planted clique'' which naturally corresponds to the edges of the hidden subset $S$ in the (for now, interperatable as $n$-vertex undirected graph), $Y=S \lor \bX \in \{0,1\}^{\binom{n}{2}}$ where $\bX \sim \P_p$. We also definte $Z(Y)$ to be the random variable that counts the number of $k$-cliques in the graph $Y \in \{0,1\}^{\binom{n}{2}}$.  Also, for $\ell=0,1,\ldots,k-1$ call $Z_{\ell}(Y)$ the number of cliques of size $k$ in an instance of $Y$ which use exactly $\ell$ vertices of $\mathcal{PC}.$

    Notice that to establish first the ``all'' part, it suffices to show that for $p=(1-\epsilon)\binom{n}{k}^{-1/\binom{k}{2}},$ it holds 
    \begin{align*}
        \P_{S, \bX \sim \P_p}( \sum_{\ell=0}^{k-1} Z_{\ell}(Y) \geq 1) \leq 0.01.
    \end{align*}Indeed, this  implies that with probability at least $0.99,$ $\mathcal{PC}$ is the only $k$-clique in $Y$ and therefore exact recovery is possible by brute-force search for a $k$-clique in $Y$. Moreover, notice that by Markov's inequality and linearity of expectation, it suffices to prove,
    \begin{align*}
         \sum_{\ell=0}^{k-1} \E_{S, X \sim \P_p} Z_{\ell}(Y) \leq 0.01.
    \end{align*}

    Now for each $\ell>0,$ there are $\binom{k}{\ell}\binom{n-k}{k-\ell}$ 
    subsets of $[n]$ which correspond in $Y$ to $k$-vertex subgraphs with exactly $\ell$ vertices of $\mathcal{PC}.$ Moreover each of these subgraphs form a $k$-clique in $Y$ with probability $p^{\binom{k}{2}-\binom{\ell}{2}} \leq (1-\epsilon)^{\binom{k}{2}-\binom{\ell}{2}}\binom{n-k}{k}^{-1+\ell^2/k^2}.$ Hence, by linearity of expectation,
    $$\E_{S, X \sim \P_p} Z_{\ell}(Y)  \leq (1-\epsilon)^{\binom{k}{2}-\binom{\ell}{2}}\binom{k}{\ell}\binom{n}{k-\ell}\binom{n}{k}^{-1+\ell^2/k^2}.$$Therefore, it suffices to show \begin{align}\label{eq:goal}
         \sum_{\ell=0}^{k-1} (1-\epsilon)^{\binom{k}{2}-\binom{\ell}{2}}\binom{k}{\ell}\binom{n-k}{k-\ell}\binom{n}{k}^{-1+\ell^2/k^2} \leq 0.01.
    \end{align}

    For $\ell=0,$ notice
    \begin{align}\label{eq:goal_1}
        (1-\epsilon)^{\binom{k}{2}-\binom{\ell}{2}}\binom{k}{\ell}\binom{n-k}{k-\ell}\binom{n}{k}^{-1+\ell^2/k^2}=(1-\epsilon)^{\binom{k}{2}}\binom{n-k}{k}\binom{n}{k}^{-1}\leq (1-\epsilon)^{\binom{k}{2}}.
    \end{align}

We know that for some constant $\delta>0$ it holds $k \leq n^{1/3-\delta}.$ Let $\delta'>0$ such that 
\begin{align}\label{eq:delta'}
    (1-\delta)(1-\delta'/3)<1-2\delta'.
\end{align} For $1 \leq \ell \leq (1-\delta')k,$ using standard inequalities and that $k=o(n),$ we have for large enough $n$, 
\begin{align*}
    \binom{k}{\ell}\binom{n-k}{k-\ell}\binom{n}{k}^{-1}& \leq (\frac{ke}{\ell})^{\ell}\binom{n-k}{k-\ell}\binom{n}{k}^{-1} \\
     &\leq (\frac{ke}{\ell})^{\ell}\binom{n}{k-\ell}\binom{n}{k}^{-1} \\
    &= (\frac{ke}{\ell})^{\ell} \frac{k!(n-k)!}{(k-\ell)!(n-k+\ell)!}\\
    &\leq (e\frac{k^2}{\ell(n-2k)})^{\ell}.
\end{align*} Moreover,
\begin{align}\label{eq:binom}
    \binom{n}{k}^{\ell^2/k^2} \leq (\frac{ne}{k})^{\ell^2/k}.
\end{align} Hence for each $1 \leq \ell \leq (1-\delta')k,$

    $$\E_{S, X \sim \P_p} Z_{\ell}(Y)\leq (e^2\frac{k^2}{\ell(n-2k)} (\frac{n}{k})^{\ell/k})^{\ell}.$$But the function $F(\ell):=e\frac{k^2}{\ell(n-2k)} (\frac{n}{k})^{\ell/k}, \ell \in [1,(1-\delta')k]$ is log-convex (i.e., $\log F$ is convex), so for some constant $C>0,$ for large enough $n$ (using $(\log n)^{\omega(1)}=k=o(n^{1/3}),$ $$\max F(\ell) \leq \max\{F(1),F((1-\delta')k)\}=C\max\{ \frac{k^2}{n-2k}, (\frac{k}{n})^{-\delta'}\}.$$ Since $k =o(n^{1/3})=o(\sqrt{n}),$ we conclude for some constant $c_0>0$, for large enough $n$ that for each $1 \leq \ell \leq (1-\delta')k,$, $F(\ell) \leq n^{-c_0}.$ Hence,
    \begin{align}\label{eq:goal_2}
         \sum_{\ell=1}^{\floor{(1-\delta')k}} (1-\epsilon)^{\binom{k}{2}-\binom{\ell}{2}}\binom{k}{\ell}\binom{n-k}{k-\ell}\binom{n}{k}^{-1+\ell^2/k^2} \leq \sum_{\ell=1}^{\floor{(1-\delta')k}} n^{-c_0 \ell} \leq 2n^{-c_0}.
    \end{align}
    
    Finally, for $(1-\delta')k \leq \ell \leq k-1$ by standard inequalities, 
    \begin{align*}
        \binom{k}{\ell}\binom{n}{k-\ell}\binom{n}{k}^{-1+\ell^2/k^2}&=\binom{k}{k-\ell}\binom{n}{k-\ell}\binom{n}{k}^{-1+\ell^2/k^2}\\
        &\leq  (kn)^{k-\ell} (\frac{n}{k})^{-k+\ell^2/k} \\
        &\leq (kn (\frac{k}{n})^{1+\ell/k})^{k-\ell}\\
        &\leq (kn (\frac{k}{n})^{2-\delta'})^{k-\ell}=(k^{3-\delta'}/n^{1-\delta'})^{k-\ell}.
    \end{align*} By assumption on $k \leq n^{1/3-\delta}$ and the definition of $\delta'>0$ \eqref{eq:delta'} we have $$k^{3-\delta'} \leq n^{(1-\delta)(1-\delta'/3)} \leq n^{1-2\delta'.} $$ Hence,
    \begin{align}\label{eq:goal_3}
         \sum_{\ell=\floor{(1-\delta)k}}^{k-1} (1-\epsilon)^{\binom{k}{2}-\binom{\ell}{2}}\binom{k}{\ell}\binom{n-k}{k-\ell}\binom{n}{k}^{-1+\ell^2/k^2} &\leq \sum_{\ell=\floor{(1-\delta)k}}^{k-1} (k^3/n)^{k-\ell} \\
         &\leq \sum_{\ell=\floor{(1-\delta)k}}^{k-1} n^{-\delta'(k-\ell)} \\
         &\leq n^{-\delta'}.
    \end{align}Combining \eqref{eq:goal_1}, \eqref{eq:goal_2}, \eqref{eq:goal_3} we conclude
    \begin{align}\label{eq:goal_almost}
         \sum_{\ell=0}^{k-1} (1-\epsilon)^{\binom{k}{2}-\binom{\ell}{2}}\binom{k}{\ell}\binom{n-k}{k-\ell}\binom{n}{k}^{-1+\ell^2/k^2} \leq  (1-\epsilon)^{\binom{k}{2}}+o(1).
    \end{align}For some $\epsilon=\Theta(1/k^2)$ we conclude for large enough $n$,  \eqref{eq:goal} and the ``all'' part is proven.

    For part (b), we employ a variant of the planting trick \cite{achlioptas2008algorithmic} as used in statistical contexts \cite{coja2022statistical, mossel2023sharp}. First note that the posterior of $S$ given $Y$ is simply the uniform measure over the $k$-cliques in $Y$. Hence, for any specific instance of $Y$, exact recovery is not possible if $Y$ contains more than one $k$-clique. Therefore, since $Z(Y)$ counts the number of $k$-cliques in $Y$, it suffices to prove that for $p=(1+\epsilon)\binom{n}{k}^{-1/\binom{k}{2}},$ and large enough $n$, 
    \begin{align}\label{eq:goal_b}
        \P_{S, \bX \sim \P_p}(Z(Y=S \lor \bX)=1) \leq 0.01.
    \end{align}Now recall that $\P_p$ is the product measure on $\{0,1\}^{\binom{n}{2}},$ where all bits appear independently with probability $p.$ We compare the likelihood $Y$ is generated under our original measure versus the ``null'' case $Y=\bX \sim \P_p$ (i.e., where no Hidden Subset is considered). This gives for any $Y$, 
    \begin{align*}
        \frac{\P_{S, \bX \sim \P_p}(Y)}{\P_p(Y)}
        &=\sum_{S' \text{ k-subset}} \mathcal{P}(S') \frac{\P_p(Y|S' \text{ planted clique in }Y)}{\P_p(Y)}\\
        &=\sum_{S'} \binom{n}{k}^{-1} \frac{1(S \text{ k- clique in } Y)p^{|E(Y)|-\binom{k}{2}}(1-p)^{\binom{n}{2}-|E(Y)|}}{p^{|E(Y)|}(1-p)^{\binom{n}{2}-|E(Y)|}}\\
        &=\frac{Z(Y)}{\binom{n}{k}p^{\binom{k}{2}}}.
    \end{align*}
   
    But by our assumptions on $\epsilon>0$ we can take $\epsilon=\Theta(1/k^2)$ so that, $$\binom{n}{k}p^{\binom{k}{2}}=(1+\epsilon)^{\binom{k}{2}} \geq 100.$$Hence for this choice of $\epsilon>0$ and large enough $n$,
    \begin{align*}
        \P_{S, \bX \sim \P_p}(Z(Y) =1)& \leq \P_{S, \bX \sim \P_p}(Z(Y) \leq 0.01\binom{n}{k}p^{\binom{k}{2}})\\
        &=\E_{Y \sim P_p}\left[1(Z(Y) \leq 0.01\binom{n}{k}p^{\binom{k}{2}})\frac{P_{S, \bX \sim \P_p}(Y)}{P_p(Y)}\right]\\
        &=\E_{Y \sim P_p}\left[1(Z(Y) \leq 0.01\binom{n}{k}p^{\binom{k}{2}})\frac{Z(Y)}{\binom{n}{k}p^{\binom{k}{2}}}\right]\\
        &\leq 0.01\P_p(Z(Y) \leq 0.01 \binom{n}{k}p^{\binom{k}{2}})\leq 0.01.
    \end{align*}This completes the proof of the ``nothing'' part.

\end{proof}

\section{Auxilary lemmas}\label{sec:aux}
\begin{lemma}\label{lem:isol_edges}
    Suppose that for some $0<\epsilon<1/2,$ $G \sim G(n,p)$ for $p=\Theta(n^{-3/2+\epsilon}).$ Then there are $\Omega(n^{1/2+2\epsilon/3})$ edges of $G$ that do not share a vertex with any other edge of $G$, w.h.p. as $n \rightarrow +\infty.$ 
\end{lemma}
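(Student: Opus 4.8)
This is a standard second moment computation on the count of \emph{isolated edges}. Call an edge $e=\{u,v\}\in E(G)$ isolated if $\deg_G(u)=\deg_G(v)=1$, i.e.\ if $e$ shares no vertex with any other edge of $G$ (equivalently, $e$ is a connected component of $G$ isomorphic to $K_2$). Let $X$ be the number of isolated edges, so $X=\sum_{e}\one\{e\ \text{isolated}\}$ where the sum ranges over the $\binom{n}{2}$ potential edges. I will actually show $X=\Omega(n^{1/2+\epsilon})$ w.h.p., which is stronger than the claimed bound since $n^{1/2+\epsilon}\ge n^{1/2+2\epsilon/3}$.

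\noindent\textbf{First moment.} An edge $e=\{u,v\}$ is isolated iff $e\in E(G)$ and none of the $2(n-2)$ remaining potential edges incident to $u$ or $v$ are present, so $\E[X]=\binom{n}{2}\,p\,(1-p)^{2(n-2)}$. The key input is that $\epsilon<1/2$ forces $np=\Theta(n^{-1/2+\epsilon})=o(1)$, hence $(1-p)^{2(n-2)}\ge 1-2np=1-o(1)$ and of course $(1-p)^{2(n-2)}\le 1$; therefore $\E[X]=\Theta(n^2 p)=\Theta(n^{1/2+\epsilon})\to\infty$.

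\noindent\textbf{Second moment.} Write $\Var(X)=\sum_e\Var(\one_e)+\sum_{e\ne f}\Cov(\one_e,\one_f)$. The diagonal contributes at most $\E[X]$. For the off-diagonal terms I split on the intersection pattern. If $e$ and $f$ are distinct edges sharing a common vertex, then they cannot both be isolated (an isolated edge has no incident edges), so $\Cov(\one_e,\one_f)=-\P(e\ \text{isol})\,\P(f\ \text{isol})\le 0$ and these terms may be discarded. If $e$ and $f$ are vertex-disjoint, then counting the potential edges meeting the four endpoints shows $\P(e,f\ \text{both isolated})=p^2(1-p)^{4n-12}$, while $\P(e\ \text{isol})\,\P(f\ \text{isol})=p^2(1-p)^{4n-8}$, so $\Cov(\one_e,\one_f)=p^2(1-p)^{4n-12}\bigl(1-(1-p)^4\bigr)\le 4p^3$; summing over the at most $n^4$ ordered vertex-disjoint pairs gives a contribution $O(p^3 n^4)=O(n^{-1/2+3\epsilon})$, which is $o(\E[X])$ precisely because $\epsilon<1/2$. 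Hence $\Var(X)=O(\E[X])=O(n^{1/2+\epsilon})$, and Chebyshev's inequality yields $\P\!\left(X<\tfrac12\E[X]\right)\le 4\Var(X)/\E[X]^2=O(1/\E[X])=o(1)$. Thus w.h.p.\ $X\ge\tfrac12\E[X]=\Omega(n^{1/2+\epsilon})\ge\Omega(n^{1/2+2\epsilon/3})$, as required.

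\noindent\textbf{Main obstacle.} There is no serious difficulty here; the only point needing care is the bookkeeping in the variance, specifically the observation that pairs of edges sharing a vertex contribute non-positively (so only vertex-disjoint pairs need to be estimated) and the verification that each error term is $o(\E[X])$, which in every case reduces to the hypothesis $\epsilon<1/2$.
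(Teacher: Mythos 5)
Your proof is correct and follows essentially the same route as the paper's: a first and second moment computation on the count of isolated edges followed by Chebyshev's inequality (the paper even targets the slightly weaker threshold $n^{-\epsilon/3}\E Z$, whereas you keep the full $\tfrac12\E[X]=\Omega(n^{1/2+\epsilon})$). Your bookkeeping of the covariance terms — discarding the shared-vertex pairs as non-positive and bounding the vertex-disjoint pairs by $4p^3$ each — is in fact more careful than the paper's one-line computation of $\E Z^2$.
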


\begin{proof}
 Let $Z$ be the number of  edges of $G$ that do not share a vertex with any other edge of $G$. It holds $\E Z= n(n-1) p (1-p)^{2n-2}=\Theta(n^{1/2+\epsilon})$ and $\E Z^2=n(n-1) p (1-p)^{2n-2}+n(n-1)(n-2)(n-3)p^2(1-p)^{4n-4}=(1+o(1))(\E Z)^2.$ Hence by Chebychev's inequality we conclude that $Z \geq n^{-\epsilon/3}\E Z=\Omega(n^{1/2+2\epsilon/3}) $ w.h.p. as $n \rightarrow +\infty.$
\end{proof}
    
\begin{lemma}\label{lem:isol_nodes}
    Suppose that for some $d=\Theta(1),$ $G \sim G(n,d/n)$. Then there are $\Theta(n)$ isolated vertices, w.h.p. as $n \rightarrow +\infty.$ 
\end{lemma}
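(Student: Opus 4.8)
The plan is to run a standard second moment argument on the number $X$ of isolated vertices of $G \sim G(n,d/n)$, in the same spirit as the proof of Lemma~\ref{lem:isol_edges}. First I would compute the first moment: for a fixed vertex $v$, the event that $v$ is isolated has probability $(1-d/n)^{n-1}$, which is $(1+o(1))e^{-d}$ since $d=\Theta(1)$; hence by linearity $\E X = n(1-d/n)^{n-1} = \Theta(n)$.

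Next I would bound the variance. Writing $X = \sum_v \mathbf{1}\{v \text{ isolated}\}$, for two distinct vertices $u,v$ the events ``$u$ isolated'' and ``$v$ isolated'' together forbid exactly $2(n-1)-1 = 2n-3$ edges, since the edge $uv$ is common to both constraints; thus $\P(u,v \text{ both isolated}) = (1-d/n)^{2n-3}$. A short computation then gives
\[
\Var(X) = \E X + n(n-1)(1-d/n)^{2n-3} - n^2(1-d/n)^{2n-2} = \E X + n(d-1)(1-d/n)^{2n-3} = \Theta(n),
\]
where the cross term collapses because $n(n-1) - n^2(1-d/n) = n(d-1)$, and $(1-d/n)^{2n-3} \to e^{-2d} = \Theta(1)$.

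Finally, Chebyshev's inequality yields $\P\!\left(|X - \E X| \ge \tfrac12 \E X\right) \le 4\Var(X)/(\E X)^2 = \Theta(1/n) = o(1)$, so with high probability $\tfrac12 \E X \le X \le \tfrac32 \E X$, i.e.\ $X = \Theta(n)$, as claimed.

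I do not expect a genuine obstacle here. The only point requiring care is the bookkeeping of the shared edge $uv$ in the pairwise probability, which produces the exponent $2n-3$ rather than $2n-2$ and is precisely what makes the covariance contribution $O(n)$ rather than $\Theta(n^2)$, hence what makes the second moment method succeed. An alternative route would be the edge-exposure martingale together with Azuma--Hoeffding, but the second moment computation is cleaner and matches the style of the surrounding auxiliary lemmas.
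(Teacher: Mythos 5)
Your proposal is correct and matches the paper's proof essentially verbatim: both compute $\E X = n(1-d/n)^{n-1}$ and the pairwise probability $(1-d/n)^{2n-3}$ (accounting for the shared edge $uv$), deduce $\E X^2 = (1+o(1))(\E X)^2$, and conclude concentration. The only cosmetic difference is that the paper invokes Paley--Zygmund where you use Chebyshev; your choice is if anything the cleaner one, since Chebyshev directly gives the two-sided bound needed for $X=\Theta(n)$ w.h.p.
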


\begin{proof}
    Let $Z$ be the number of isolated vertices of $G$. Then by linearity of expectation $\E Z=n(1-d/n)^{n-1}=n(e^{-d}+o(1))$ and $\E Z^2=n(1-d/n)^{n-1}+n(n-1)(1-d/n)^{2n-3}=n^2(e^{-2d}+o(1))=(1+o(1))(\E Z)^2.$ Hence by Paley-Zygmund inequality we conclude $Z=\Theta(\E Z)=\Theta(n),$ w.h.p. as $n \rightarrow +\infty.$ 
\end{proof}

\section{Proof of the converse result Theorem \ref{thm:conv}}


\subsection{Getting started}

Here we start with defining some notation that will be used throughout the proof. Let $\alpha>0$. For the proof, we choose $\zeta=\zeta(\alpha,C) \in (0,\alpha)$ to be a sufficiently small constant such that $$M:=(1-2\alpha)/\zeta \in \mathbb{Z}_+.$$ Importantly this choice allows us to partition $[p_{\alpha},p_{1-\alpha}]$ by choosing the points for $i=0,1,\ldots,M$
\begin{align}\label{eq:q_i}
q_i:=p_{\alpha+i\zeta}
\end{align}where recall that for any $\alpha' \in (0,1)$, we define $p_{\alpha'} \in (0,1)$ to be the unique value such that $\E_{p_{\alpha'}}f=\alpha'.$ 

Moreover, we highlight  a fact we are going to use throughout the proof  that all points in the ``threshold interval'' $[p_{\alpha},p_{1-\alpha}]$ are of the same order as $p_{1/2}$ which just for the proof of Theorem \ref{thm:conv} we abuse notation compared to the rest of the paper, and denote it as $p_c=p_{1/2}$. Specifically, according to the Bollobas-Thomason theorem \cite{bollobas1987threshold}, there exists $c_1=c_1(\alpha),C_1=C_1(\alpha)>0$ constants such that \begin{align}\label{eq:bol_thom}
   c_1 p_c \leq p_{\alpha} \leq p_{1-\alpha}\leq C_1p_c. 
\end{align}
\subsection{Proof sketch and two key lemmas}\label{sec:sketch}
The first step towards constructing the circuit for Theorem \ref{thm:conv} is to apply a celebrated result on coarse thresholds by Friedgut \cite[Theorem 1.1.]{Friedgut} which says that if a monotone Boolean function corresponding to a graph property has a coarse threshold (in the sense of \eqref{eq:deriv}) at any specific point of the threshold interval $p \in [p_{\alpha},p_{1-\alpha}]$, then it can be approximated with respect to the $p$-biased product measure by another monotone Boolean function whose minimal elements are all $O(1)$-size balanced graphs.

As for us $p \in [p_{\alpha},p_{1-\alpha}]$ is arbitrary and, importantly, we need to construct a circuit that is close to the Boolean function of interest for all values of $p$ in the threshold interval, we apply Friedgut's theorem on all the endpoints $q_i, i=0,1,\ldots,M$. Specifically, for any $\delta>0$ Friedgut's theorem \cite[Theorem 1.1.]{Friedgut} applied for $p=q_{i}$ for all $i=0,\ldots,M$ implies that for some $k(\gamma,C,\alpha)>0$ there exists \emph{a sequence} of monotone graph-inclusion properties (expressed as Boolean functions) $g_i: \{0,1\}^N \rightarrow \{0,1\}, i=0,\ldots,M$ such that the minimal elements of $g_i$ are balanced graphs of size at most $k(\gamma,C,\alpha)$ and
        \begin{align}\label{eq:fri}
        \max_{ i=0,1,\ldots,M}| \E_{q_i} f-\E_{q_i} g_i| \leq \gamma.
        \end{align}

        The key connection between \eqref{eq:fri} and Theorem \ref{thm:conv} is that exactly because all of the minimal elements of $g_i$ are of $k(\gamma,C,\alpha)=O(1)$-size, each of the $g_i, i=0,\ldots,M$ can be computed via an $\mathcal{AC}_0$ circuit of depth at most 2 and size at most $O(N^{k(\gamma,C,\alpha)})$ as one just needs to check whether at least one of the minimal elements of $g_i$ appears in the input $X.$

Now the sole fact that each of the $g_i$ is in $\mathcal{AC}_0$ is not sufficient for us to construct a successful circuit for Theorem \ref{thm:conv} as the input bit probability $p \in [p_{\alpha},p_{1-\alpha}]$ can be arbitrary (hence potentially not equal to any of the $q_i$'s) and on top of that $p$ is not given as input to the circuit and we have to have a rule to choose between the $q_i$'s. The strategy we follow to tackle these issues is statistical: the circuit we construct first produces a ``good'' estimate $\hat{p}$ for the bit bias $p$ based on the bits of the input graph $X$, and then outputs the corresponding $g_{i}$ for some $i$ such that $|\hat{p}-q_i|$ is small enough.

Notice that, albeit quite natural, implementing this idea is non-trivial because computing $\hat{p}$ and comparing it with the $q_i$'s should be computable via an $\mathcal{AC}_0$ circuit and standard statistical estimates such as averaging all bits of $X$ and thresholding the value are folklore known not to be in $\mathcal{AC}_0$ (e.g., see \cite{Smolensky} for a lower bound against computing the majority Boolean function, which is also an easy corollary of our main Theorem \ref{thm:main_1}). 

Before we describe how we accomplish this statistical estimation approach, also note that for this to be successful in proving Theorem \ref{thm:conv} some robustness is required in our guarantee \eqref{eq:fri}. More specifically, we first need to show that the guarantee \eqref{eq:fri} is robust to a small amount of error (both with respect to estimating the bias $p$, but also with respect to choosing the $g_i$ based on some $q_i$ close to $\hat{p}$). The following perturbation argument takes care of these latter concerns, which can also be seen as a slight strengthening of \eqref{eq:fri}. The proof is based on elementary but delicate random graph theory calculations and can be found in Section \ref{sec:pfs}.

\begin{lemma}\label{lem:max} Under the assumptions of Theorem \ref{thm:conv} and the above notation, for all $\gamma>0$, there exists $\zeta=\zeta(C,\alpha,\gamma)>0$ small enough, such that
   \begin{align}\label{eq:q_i_final}
\max_{i=0,1,\ldots,M}\sup_{q \in [q_i,q_{i+1}]} \max_{m \in \{i-1,i,i+1\}} | \E_{q }f-\E_{q} g_m| \leq \gamma. 
\end{align}

\end{lemma}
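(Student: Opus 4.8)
The plan is to derive \eqref{eq:q_i_final} from the Friedgut approximation \eqref{eq:fri} by two perturbation estimates: (i) the partition points $q_i$ are mutually so close, multiplicatively, that $q$, $q_i$ and $q_{i\pm1}$ all lie in one tiny window; and (ii) the function $p\mapsto \E_p f-\E_p g_i$ has bounded logarithmic derivative, so a bound at $q_i$ propagates across that window. Fix $\gamma>0$; we may assume $\gamma<\alpha$. Apply \eqref{eq:fri} with approximation parameter $\gamma':=\gamma/2$, obtaining $k=k(\gamma',C,\alpha)$ and monotone graph‑inclusion properties $g_0,\dots,g_M$ whose minimal elements are balanced graphs on at most $k$ vertices, with $\max_i|\E_{q_i}f-\E_{q_i}g_i|\le\gamma'$.

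\emph{Step 1 (``$f$ is nowhere flat'', via Poincar\'e).} For monotone $f$ the Margulis--Russo identity (cf.\ the proof of Lemma~\ref{lem:russo}) gives $p\frac{d}{dp}\E_p f = I_p(f)/(1-p)\ge I_p(f)$, while the Efron--Stein/Poincar\'e inequality for the product measure $\P_p$ gives $I_p(f)\ge \Var_p(f)=\E_p f\,(1-\E_p f)$. Hence on $[p_\alpha,p_{1-\alpha}]$, where $\E_p f\in[\alpha,1-\alpha]$, we have $p\frac{d}{dp}\E_p f\ge \alpha(1-\alpha)$. Integrating over $[q_i,q_{i+1}]$ and using $\E_{q_{i+1}}f-\E_{q_i}f=\zeta$ yields $\ln(q_{i+1}/q_i)\le \zeta/(\alpha(1-\alpha))$; thus consecutive (hence, by summing, nearby) partition points are within a multiplicative factor that tends to $1$ with $\zeta$, uniformly in $n$.

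\emph{Step 2 (influence of the local witnesses --- the delicate part).} Since each $g_i$ is ``the input graph contains a copy of one of finitely many balanced graphs on $\le k$ vertices'', its total influence obeys $I_p(g_i)\le C_\star(k)$ for every $p\in[p_\alpha,p_{1-\alpha}]$. I would prove this by the expected elementary‑but‑delicate random graph estimates: for an edge $e$ to be pivotal there must be a copy of some minimal graph through $e$ \emph{and} the graph with $e$ deleted must contain no copy of any minimal graph; a first‑moment count bounds $\sum_e\P_p(e\text{ pivotal})$ by $O_k(1)$ times the expected number of ``near‑copies'', the probability of the second event contributes a compensating exponential factor by a Janson‑type bound, and balancedness is exactly what keeps the Janson correction term negligible in the threshold window (far from the threshold both factors are controlled directly). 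Combined with $I_p(f)\le C$ on $[p_\alpha,p_{1-\alpha}]$ from \eqref{eq:deriv}, and with $1-p\ge \delta_0>0$ there (a consequence of $p_c=1-\Omega(1)$ and \eqref{eq:bol_thom}), Margulis--Russo again gives $\bigl|\,p\frac{d}{dp}(\E_p f-\E_p g_i)\,\bigr|\le \bigl(I_p(f)+I_p(g_i)\bigr)/(1-p)\le B$, where $B:=(C+C_\star(k))/\delta_0$, for all $i$.

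\emph{Step 3 (assembly).} Set $\tau:=\gamma'/(4B)$ and then $\zeta:=\tfrac12\alpha(1-\alpha)\tau$ (shrinking $\zeta$ further if needed so that the ``$N$ large enough'' hypothesis of \eqref{eq:fri} applies). Fix $i$, $q\in[q_i,q_{i+1}]$ and $m\in\{i-1,i,i+1\}\cap\{0,\dots,M\}$. By Step 1, $\ln(q_{i+1}/q_{i-1})\le 2\zeta/(\alpha(1-\alpha))=\tau$, so $q$ and $q_m$ both lie in the interval $[q_{i-1},q_{i+1}]\subseteq[p_\alpha,p_{1-\alpha}]$, which has logarithmic length at most $\tau$. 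Integrating the bound of Step 2 from $q_m$ to $q$ then gives $|\E_q f-\E_q g_m|\le |\E_{q_m}f-\E_{q_m}g_m|+B\tau\le \gamma'+\tfrac14\gamma'<\gamma$, which is \eqref{eq:q_i_final}. The main obstacle is Step 2, the uniform‑in‑$n$ influence bound for the explicit local properties $g_i$ --- this is the ``elementary but delicate random graph theory'' --- whereas Step 1 is a clean consequence of the Poincar\'e inequality and Step 3 is bookkeeping; the one subsidiary point requiring care is that $p$ stays bounded away from $1$, which is where the hypothesis $p_c=1-\Omega(1)$ enters.
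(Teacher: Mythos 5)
Your proposal is correct and follows the same architecture as the paper: apply Friedgut's theorem at the grid points $q_i$, show the grid spacing is controlled from above (both proofs use Russo--Margulis plus Parseval/Poincar\'e to get $p\frac{d}{dp}\E_p f\ge \alpha(1-\alpha)$ on the threshold interval), show each $g_i$ is stable under a small perturbation of $p$, and assemble by triangle inequality. The one substantive difference is in the stability step for $g_i$. The paper proves a finite-difference bound $|\E_{q_{i+2}}g_i-\E_{q_i}g_i|=O(\zeta)$ via the monotone coupling $G(n,q_{i+2})=G(n,q_i)\cup G(n,\Delta p)$ and a union bound over partial copies of the minimal witnesses completed by the increment; you instead prove the uniform influence bound $I_p(g_i)\le C_\star(k)$ and integrate the Russo--Margulis derivative over a window of logarithmic length $O(\zeta)$. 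These are the same estimate in infinitesimal versus finite-difference form (the dominant term in the paper's union bound, $e(T)=e(H_j)-1$, is exactly your pivotal-edge count), so neither buys more generality; yours is arguably cleaner bookkeeping, the paper's avoids having to state an influence bound for $g_i$.

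One soft spot in your Step 2 deserves attention. The first-moment pivotal count gives $I_p(g_i)\lesssim\sum_j e(H^{(i)}_j)\,n^{v(H^{(i)}_j)}p^{e(H^{(i)}_j)}$, and this is $O_k(1)$ on $[p_\alpha,p_{1-\alpha}]$ only for those balanced witnesses whose individual appearance threshold is of order $p_c$; if some $H^{(i)}_j$ has $n^{v}p^{e}\to\infty$ on the interval, the naive count diverges and you would indeed need the ``no other copy'' correction you allude to. The paper sidesteps this entirely by first discarding from the list of minimal elements any $H^{(i)}_j$ with $p_c(H^{(i)}_j)$ not of order $p_c$ (see \eqref{eq:crit}), which changes $\E_p g_i$ only by $o(1)$ in the relevant window; you should make this preprocessing explicit rather than relying on a Janson-type argument, which is both unnecessary for an upper bound on influence and harder to make uniform. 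With that adjustment your Step 2 is rigorous and the rest of the argument goes through as written.
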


Now we explain how the circuit can also compute a good estimate for the value of $p$ based on $X$ in a way that can be implemented by an $\mathcal{AC}_0$ circuit. While one cannot naively apply majority on the whole input \cite{Smolensky} we show that a carefully constructed \emph{approximate} majority estimate using only the first $o(\log n)$ bits bypasses this difficulty and produces a sufficiently good estimate of $p$.

Motivated by \eqref{eq:bol_thom} we choose \begin{align}\label{eq:K}
    K:=\floor{1/p_c}=o(N),
\end{align}where that $K=o(N)=o(n^2)$ follows by the assumptions of the Theorem \ref{thm:conv}. We also choose an arbitrary $S=\omega(1)$ with $S=o(\log N)$ and $SK=o(N)$. Now consider the $S\times K$ first bits out of the $N$ total bits of $X$, and then consider the OR of each of the $S$ consecutive $K$ blocks of bits. Since $X \in \{0,1\}^N$ consists of independent bits with bit probability $p$, this creates $S$ independent bits of bias $b(p)=1-(1-p)^K$. Importantly for us notice that for any $p \in [p_{\alpha}, p_{1-\alpha}]$ and our choice of $K$, \eqref{eq:bol_thom} implies that the bias always satisfies for the $p$'s of interest that $b(p)=\Theta(1)$.

Moreover, we prove that since $\omega(1)=S=o(\log N)$ a majority estimate of $b(p)$ using the average of these $S$ independent bits can be computed with an $\mathcal{AC}_0$ circuit with error $1/\mathrm{poly}(S)=o(1)$. Then since $b(p)=\Theta(1)$ and for all $i$, $b(q_i)=\Theta(1)$, using \eqref{eq:bol_thom} we can also establish that for all $q_i$ ``slightly away'' from $p$ it must hold $|b(q_i)-b(p)|=\Omega(1)$; hence the error from our estimation allows to approximately identify the $i$ such that $|p-q_i|$ is minimized, completing the result. The above ideas are described formally in the following second key lemma, also proven in Section \ref{sec:pfs}.

\begin{lemma}\label{lem:counting}
 Under the assumptions of Theorem \ref{thm:conv} and the above notation, there exists a sufficiently small enough $\zeta=\zeta(\alpha)>0$ and a constant $C_0(\alpha,\zeta)>0$ such that for any $S=\omega(1)$ and $K$ given in \eqref{eq:K} the following hold with high probability. 
 
 For $m=0,1,\ldots,M-1$ consider the interval $$I_m:=(b(q_m)-C_0(\alpha,\zeta),b(q_{m+1})+C_0(\alpha,\zeta)).$$Then, for all $i=0,1,\ldots,M-1,$ and all $q \in [q_i,q_{i+1}]$, any $S$ independent bits with bias $b(q)$ satisfy that their sum denote by $Q$ satisfies $Q/S \in I_i$. Moreover, any $m=0,1,\ldots,M-1$ satisfying $Q/S \in I_m$ satisfy $m \in \{i-1,i,i+1\}.$
\end{lemma}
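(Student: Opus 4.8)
The plan is to reduce Lemma~\ref{lem:counting} to two deterministic facts about the map $b(p)=1-(1-p)^K$ evaluated on the grid $q_0<q_1<\dots<q_M$, together with one Hoeffding bound.

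The first and main step is to show that this grid is \emph{coarse in the $b$-coordinate}: there is a constant $c_2=c_2(\alpha,C)>0$ with $b(q_{j+1})-b(q_j)\ge c_2\,\zeta$ for every $j$ and all large $N$ (and, as a byproduct of the same estimates, $b(q_i)$ stays bounded away from $0$ and $1$, the qualitative fact quoted in the proof sketch). This is exactly where the coarse-threshold hypothesis \eqref{eq:deriv} is used. On one hand $\E_{q_{j+1}}f-\E_{q_j}f=\zeta$, while by \eqref{eq:bol_thom} every $p\in[q_j,q_{j+1}]\subseteq[p_\alpha,p_{1-\alpha}]$ satisfies $p\ge c_1 p_c$, so $\frac{d}{dp}\E_p f\le C/p\le C/(c_1 p_c)$ and therefore $q_{j+1}-q_j\ge (c_1 p_c/C)\,\zeta$. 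On the other hand $b'(p)=K(1-p)^{K-1}$, and for $p\le p_{1-\alpha}\le C_1 p_c$ one has $(1-p)^{K-1}\ge (1-C_1 p_c)^{K}\ge e^{-2C_1}$, using $K=\floor{1/p_c}$ (so $K p_c\le 1$) and $p_c=o(1)$. Multiplying and using $K p_c\ge 1-p_c\ge 1/2$ for large $N$ yields $b(q_{j+1})-b(q_j)\ge (q_{j+1}-q_j)\,K e^{-2C_1}\ge c_2\zeta$ with $c_2:=c_1 e^{-2C_1}/(2C)$.

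Given this separation I would set $C_0:=c_2\zeta/3$; this is the constant $C_0$ in the statement (it also depends on $C$, but so does the running choice of $\zeta$). Since $b$ is increasing and consecutive values $b(q_j)$ are at least $3C_0$ apart, the intervals $I_m$ and $I_{m'}$ are disjoint whenever $|m-m'|\ge 2$: if $m'\ge m+2$ then the left endpoint of $I_{m'}$ is $b(q_{m'})-C_0\ge b(q_{m+2})-C_0$, which exceeds the right endpoint $b(q_{m+1})+C_0$ of $I_m$ because $b(q_{m+2})-b(q_{m+1})\ge c_2\zeta>2C_0$. Finally, let $q$ be the true bias and let $i$ be the index with $q\in[q_i,q_{i+1}]$. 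By construction the $S$ scanned values are OR's of $K$ disjoint blocks of i.i.d.\ $\mathrm{Bern}(q)$ bits, so $Q$ is a sum of $S$ i.i.d.\ $\mathrm{Bern}(b(q))$ variables, and Hoeffding gives $\P(|Q/S-b(q)|\ge C_0)\le 2e^{-2SC_0^2}=o(1)$ since $S=\omega(1)$ and $C_0$ is a fixed positive constant. On the complementary high-probability event, $b(q)\in[b(q_i),b(q_{i+1})]$ forces $Q/S\in(b(q_i)-C_0,b(q_{i+1})+C_0)=I_i$, which is the first assertion; and if additionally $Q/S\in I_m$ then $I_m\cap I_i\ne\emptyset$, so by the disjointness above $|m-i|\le 1$, i.e.\ $m\in\{i-1,i,i+1\}$, which is the second.

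The bottleneck is the first step: turning the \emph{qualitative} coarse-threshold hypothesis \eqref{eq:deriv} into a \emph{quantitative}, $N$-uniform lower bound $b(q_{j+1})-b(q_j)=\Omega(\zeta)$. The crux is that the choice $K=\floor{1/p_c}$ makes the derivative $b'(p)=K(1-p)^{K-1}$ of order $\Theta(1/p_c)=\Theta(K)$ uniformly over the whole threshold window $[p_\alpha,p_{1-\alpha}]$ — here the Bollobas--Thomason bounds \eqref{eq:bol_thom} are what confine that window to within a constant factor of $p_c$ — so the lower bound $q_{j+1}-q_j=\Omega(p_c\zeta)$ extracted from \eqref{eq:deriv} upgrades to a constant-order lower bound on the $b$-increments. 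Everything after that — choosing $C_0$, noting that non-adjacent $I_m$'s are disjoint, and one Hoeffding estimate — is routine.
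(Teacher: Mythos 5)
Your proof is correct and follows essentially the same route as the paper's: both use \eqref{eq:deriv} plus the mean value theorem and Bollobas--Thomason to get $q_{j+1}-q_j=\Omega(\zeta p_c)$, upgrade this to $b(q_{j+1})-b(q_j)=\Omega(\zeta)$ via the choice $K=\floor{1/p_c}$, and finish with concentration of $Q/S$ around $b(q)$. The only differences are cosmetic: you work with the exact $b(p)=1-(1-p)^K$ and its derivative rather than the approximation $1-e^{-Kp}+o(1)$, and you spell out the Hoeffding bound and the disjointness of non-adjacent intervals, which the paper leaves as "directly follow."
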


\subsection{Description of the circuit}\label{sec:circ}

Using the definitions and identical notation with the previous section, we describe the successful circuit $\mathcal{C}$ in $\mathcal{AC}_0$ for Theorem \ref{thm:conv} with input a graph $X \in \{0,1\}^N.$

We choose $S=\omega(1)$ but $S=o(\log n)$
so that $S/p_c=o(N)$, and any $\gamma>0$ and $\zeta=\zeta(\gamma)>0$ small enough such that both Lemmas \ref{lem:max} and \ref{lem:counting} hold. The circuit requires only four layers, and next to each step, we identify the layers of the circuit required to be involved.
\begin{enumerate}
 \item[(1$^{\text{st}}$)] In the \emph{first layer}, the circuit computes the OR of the first $S$ blocks of $K$ entries, i.e., the bits $\lor_{i=mK+1}^{(m+1)K} X_{i}, m=0,1,\ldots,S$.

    \item[(2$^{\text{nd}}$ \& 3$^{\text{rd}}$)] Let us denote the number of ones among the $S$ bits in the first layer by $Q$. In the \emph{second and third layers}, the circuit computes the value $Q$, in the sense that for all $k=0,1,\ldots,S$ in the third layer there is a gate which is $1$ if and only if $Q=k$. This can be done by constructing a second layer with a unique gate corresponding to any possible configuration among the $S$ bits and then the third layer is an appropriate choice of OR between the bits from the second layer. Both layers can be computed with $O(2^S)=N^{o(1)}$ gates.

    \item[(1$^{\text{st}}$ \& 2$^{\text{nd}}$)] Let $g_i$ defined in \eqref{eq:fri} for a given $\alpha, \gamma>0$. In parallel to the above, using only the \emph{first and second layer} the circuit computes in parallel for all $i=0,1,\ldots,M,$ the $\mathcal{AC}_0$ circuit for $g_i$ on the whole input $X.$ The additional gates used for this part are $O(N^{k(\gamma,C,\alpha)}),$ where $k(\gamma,C,\alpha)$ is also introduced in \eqref{eq:fri}.

\item[(4$^{\text{th}}$)] In the \emph{fourth layer}, the circuit outputs the OR of the output of the circuits $g_i$'s computed in the first two layers, but only for the $i=0,1,\ldots,M$ such that $Q/S \in (b_i-C_0(\epsilon,\zeta),b_{i+1}+C_0(\epsilon,\zeta))$ where $Q$ is identified in the third layer and $C_0(\epsilon,\zeta)$ is defined in Lemma \ref{lem:counting}. The additional gates used for this part are $O(S M)=O(S)$.
\end{enumerate}

Clearly the circuit described above is in $\mathcal{AC}_0$. 

\subsection{Putting it all together} Now, using the two key lemmas described in the section \ref{sec:sketch} (and proven in Section \ref{sec:pfs}), and the construction of the circuit in the Section \ref{sec:circ}, Theorem \ref{thm:conv} follows in the following straightforward manner.

\begin{proof}[Proof of Theorem \ref{thm:conv}]

Let $\gamma>0$ and $\zeta(\gamma,C,\alpha)>0$ sufficiently small such that both Lemmas \ref{lem:max} and \ref{lem:counting} holds.
Now let any $p \in [p_{\alpha},p_{1-\alpha}]$, and let $i=0,1,\ldots,M-1$ be unique index such that $p \in [q_i,q_{i+1}]$. Using Lemma \ref{lem:counting} we know that with high probability the circuit outputs 1 if and only if $g_m(X)=1$ for some $m \in \{i-1,i,i+1\}$. In particular, by conditioning on a with high probability event for $f(X) \not = \mathcal{C}(X)$ it must be that for some $m \in \{i-1,i,i+1\}$ it holds $f(X) \not = q_m(X)$. Hence, $$\max_{q \in [q_i,q_{i+1}]} |\E_q f-\E_q \mathcal{C}| \leq 3 \max_{q \in [q_i,q_{i+1}]} \max_{m\in \{i-1,i,i+1\}}|\E_q f-\E_q g_m|+o(1) \leq 3\gamma+o(1)$$where in the last inequality we used Lemma \ref{lem:max}. As $\gamma>0$ is arbitrary, the theorem follows. 

\end{proof}

\subsection{Proof of Key Lemmas}\label{sec:pfs}

\begin{proof}[Proof of Lemma \ref{lem:max}]

   Recall that as we explained in the notation section, in what follows we treat $\zeta=\zeta(\delta,C,\alpha) \in (0,\alpha)$ to be a sufficiently small constant for the purposes that follow, such that $M:=(1-2\alpha)/\zeta \in \mathbb{Z}$. 
   
   Now we need the following intermediate lemma.
\begin{lemma}\label{lem:interm} For all $\gamma>0$, there exists $\zeta=\zeta(\gamma)>0$ small enough, such that
    \begin{align*}\max_{i=0,1,\ldots,M-1}| \E_{q_{\min\{i+2,M\}}} g_i-\E_{q_{i}} g_i| \leq \gamma. \end{align*}
\end{lemma}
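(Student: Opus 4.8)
The plan is to bound the (nonnegative, by monotonicity of $g_i$) quantity $\E_{q'}g_i-\E_{q_i}g_i$, where $q':=q_{\min\{i+2,M\}}$, by factoring it through the logarithmic derivative of $p\mapsto\E_pg_i$. By the Margulis--Russo formula,
\[
\E_{q'}g_i-\E_{q_i}g_i=\int_{q_i}^{q'}\Big(p\,\frac{d}{dp}\E_p g_i\Big)\frac{dp}{p}\;\le\;\Big(\sup_{p\in[q_i,q']}p\,\frac{d}{dp}\E_p g_i\Big)\log\frac{q'}{q_i},
\]
so it suffices to show (i) $\log(q'/q_i)\le 4\zeta/\alpha$ uniformly in $i$ and $n$, and (ii) $\sup_{p\in[q_i,q']}p\,\frac{d}{dp}\E_p g_i\le D$ for a constant $D=D(\gamma,C,\alpha)$ independent of $\zeta$, $i$, and $n$. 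Granting (i) and (ii), the lemma follows by choosing $\zeta\le \gamma\alpha/(4D)$.

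For (i) I would use the $p$-biased Poincar\'e inequality. Since $[q_i,q']\subseteq[p_\alpha,p_{1-\alpha}]$ we have $\E_pf\in[\alpha,1-\alpha]$ on this interval, so $\Var_p(f)=\E_pf(1-\E_pf)\ge\alpha/2$ (using $\alpha<1/2$). Because $\Var_p(f)=\sum_{\emptyset\ne S}\hat f(S)^2\le\sum_S|S|\hat f(S)^2=I_p(f)$ (the $p$-biased Fourier identity for $I_p$ recalled in the proof of Lemma \ref{lem:LMN}), we get $I_p(f)\ge\alpha/2$, and the Margulis--Russo formula for the monotone $f$ gives $\frac{d}{dp}\E_pf=I_p(f)/(p(1-p))\ge(\alpha/2)/p$. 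Integrating over $[q_i,q']$ and using that $\E_{q'}f-\E_{q_i}f$ equals $2\zeta$ (or $\zeta$ when $q'=q_M$) by definition of the $q_j$ yields $(\alpha/2)\log(q'/q_i)\le 2\zeta$, i.e.\ $\log(q'/q_i)\le 4\zeta/\alpha$.

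For (ii) I would exploit the structure of $g_i$ coming from \eqref{eq:fri}: $g_i(X)=1$ iff $X$ contains a copy of one of finitely many \emph{balanced} graphs $H_1,\dots,H_t$, namely the minimal elements of $g_i$, which satisfy $e(H_j)\le k$ and hence $v(H_j)\le 2k$ for $k=k(\gamma,C,\alpha)$ (both $t$ and $k$ are constants). Write $A_H=\{G\text{ contains a copy of }H\}$, a monotone event. Any edge pivotal for $g_i$ at an input is pivotal there for some $A_{H_j}$, so $\frac{d}{dp}\E_pg_i\le\sum_j\frac{d}{dp}\P_p[A_{H_j}]$ by Margulis--Russo. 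For a fixed graph $H$, bounding the probability that a given edge slot is pivotal for $A_H$ by a union bound over the $O_k(1)$ choices of an edge of $H$ and the $O(n^{v(H)-2})$ embeddings of $H$ anchored at that slot, and then summing over the $\binom n2$ slots, gives $p\,\frac{d}{dp}\P_p[A_H]\le e(H)\,n^{v(H)}p^{e(H)}\le C'(k)\,\E_p[\#H]$, where $\#H$ is the number of copies of $H$. On $[q_i,q']$ we have $p/q_i\le p_{1-\alpha}/p_\alpha\le C_1/c_1$ by Bollob\'as--Thomason \eqref{eq:bol_thom}, so $\E_p[\#H]\le (C_1/c_1)^{e(H)}\,\E_{q_i}[\#H]$; and since $\P_{q_i}[A_H]\le\E_{q_i}g_i\le\E_{q_i}f+\gamma$, which is bounded away from $1$ (as $\E_{q_i}f\le1-\alpha$ and we may take the accuracy in \eqref{eq:fri} at most $\alpha/2$), with $H$ balanced of bounded size the classical appearance-of-a-subgraph threshold forces $\E_{q_i}[\#H]\le\Lambda(k,\alpha)$ for all large $n$. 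Summing over $j\le t$ then gives $\sup_{p\in[q_i,q']}p\,\frac{d}{dp}\E_pg_i\le D(\gamma,C,\alpha)$, which is (ii).

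The main obstacle is the last step of (ii): passing from ``$\P_{q_i}[A_H]$ bounded away from $1$'' to ``$\E_{q_i}[\#H]=O(1)$''. This is precisely where balancedness of the minimal elements is essential (the implication is false for, e.g., a triangle plus a disjoint edge), and I would justify it either by quoting the classical threshold for containing a fixed graph --- for balanced $H$, $\E_p[\#H]\to\infty$ iff $\P_p[A_H]\to 1$ --- or by a direct second-moment estimate using that every subgraph of a balanced $H$ has density at most that of $H$. Everything else is routine: the uniformity over $n$ of the constants $C_1,c_1,\Lambda$ and of the pivotality count follows from the standard first-moment bounds, and the reduction of a general target $\gamma$ to the case $\gamma\le\alpha/2$ is harmless since the conclusion is monotone in $\gamma$.
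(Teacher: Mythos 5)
Your proof is correct, but it follows a genuinely different route from the paper's. The paper proves the lemma by sprinkling: it writes $G(n,q_{\min\{i+2,M\}})$ as the union of $G(n,q_i)$ with an independent $G(n,\Delta p)$, $\Delta p = (q_{\min\{i+2,M\}}-q_i)/(1-q_i)$, and bounds $\E_{q_{\min\{i+2,M\}}}g_i - \E_{q_i}g_i$ by a first-moment union bound over copies of the minimal elements that use at least one sprinkled edge, obtaining $O(\Delta p/p_c)=O(\zeta)$; to make that count work it first normalizes $g_i$ (a ``without loss of generality'' modification) so that every minimal element $H$ has appearance threshold $n^{-v(H)/e(H)}=\Theta(p_c)$. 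Your argument is, in effect, the infinitesimal version of the same first-moment computation: you integrate the logarithmic derivative of $\E_p g_i$, control $\log(q'/q_i)\le 4\zeta/\alpha$ exactly as the paper controls $|q_{i+1}-q_i|\lesssim \zeta p_c$ (via Russo--Margulis applied to $f$ and Bollob\'as--Thomason), and bound $p\,\tfrac{d}{dp}\E_p g_i$ by a pivotal-edge count that reduces to $\E_p[\#H]=O(1)$. Your way of getting $\E_{q_i}[\#H]=O(1)$ --- from $\P_{q_i}[A_H]\le \E_{q_i}g_i\le 1-\alpha+\gamma<1$ together with balancedness, so that an unbounded expected count would force $\P[A_H]\to 1$ --- is a cleaner substitute for the paper's WLOG step, since it automatically disposes of minimal elements whose appearance threshold is far from $p_c$ (those with threshold $\ll p_c$ would contradict $\E_{q_i}g_i$ being bounded away from $1$, and those with threshold $\gg p_c$ contribute $o(1)$ to the expected count). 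The one place you rightly flag as needing care, the implication ``$\P_p[A_H]$ bounded away from $1$ implies $\E_p[\#H]=O(1)$ for balanced $H$ of bounded size,'' is indeed exactly the classical subgraph-appearance threshold ($m(H)=e(H)/v(H)$ for balanced $H$), applied along subsequences, and is legitimate; the paper sidesteps it only by assuming the normalization up front.
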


\begin{proof}
      First, note that for all $0 <p \leq p_{1-\alpha},$ by Russo-Margulis and Parseval identities, $$p \frac{d}{dp}\E_p f =\sum_H  |H|\hat{f}(H)^2 \geq \sum_{H: H \not = \emptyset}  \hat{f}(H)^2 =1-\E_p f \geq \alpha.$$ 
     
      Hence, by mean value theorem it must hold for all $0 \leq k \leq M,$ and all $\zeta>0$ small enough,
    \begin{align*}\zeta p_{1-\epsilon} \geq \alpha(p_{\alpha+k \zeta}-p_{\alpha+(k-1)\zeta}).
    \end{align*} 
    Using \eqref{eq:bol_thom} we conclude that for some $c''=c''(\alpha)>0$ it holds \begin{align}\label{eq:q_i_upp}\max_{0 \leq i \leq M-1} |q_{i+1}-q_i| = \max_{0 \leq k \leq (1-2\alpha)/\zeta} p_{\alpha+k \zeta}-p_{\alpha+(k-1)\zeta}\leq c''\zeta p_c.\end{align}


    We fix one $i \in \{0,1,\ldots,M-1\}$ in what follows. Now recall by Friedgut's theorem $g_i$ corresponds to the property of inclusion of one out of at most $k(\gamma,C,\alpha)$ constant-sized balanced subgraphs
    in $G(n,p)$. Call them $H^{(i)}_j, j=1,\ldots,C_i$ for some $C_i \leq k(\gamma,C,\alpha)$. We can assume without loss of generality, that all balanced subgraphs $H^{(i)}_j,j=1,\ldots,C_i$ satisfy \begin{align}\label{eq:crit}
p_c=\Theta(p_c(H^{(i)}_j))=\Theta(n^{-v(H^{(i)}_j)/e(H^{(i)}_j)}), \end{align} where the last equality holds by standard random graph theory since $H^{(i)}_j$ is of constant order \cite{JansonBook}.
For the first equality, observe that if for some $H^{(i)}_j$ we have $p_c=o(p_c(H^{(i)}_j))$ or  $p_c=\omega(p_c(H^{(i)}_j))$ then one can slightly modify $g_i$ to be the Boolean function whose corresponding property has minimal elements the minimal elements of $g_i$ but without $H^{(i)}_j$. Notice that since $p_c=o(p_c(H^{(i)}_j))$ or  $p_c=\omega(p_c(H^{(i)}_j))$ for any $p=\Theta(p_c)$, $\E_p g_i$ remains identical after this modification up to a $o(1)$ additive error, and in particular \eqref{eq:fri} is still satisfied.

    Recall the monotone coupling where an instance $G(n,q_{\min\{i+2,M\}})$ can be seen as the union of an instance of $G(n,q_i)$ and an instance $G(n,p)$ for $$\Delta p:=(q_{\min\{i+2,M\}}-q_i)/(1-q_i).$$ 
    Notice that by monotone coupling the quantity $\E_{q_{\min\{i+2,M\}}} g_i-\E_{q_{i}} g_i$ is at most the probability one out of the $H^{(i)}_j, j=1,\ldots,C_i$ appears in $G(n,q_{\min\{i+2,M\}})$ but not in $G(n,q_i).$ This can be upper bounded by the probability that for some $j=1,\ldots,C_i$ and some proper edge-subgraph $T$ of an $H^{(i)}_j$ (i.e., $e(H^{(i)}_j)>e(T)$) there exists a copy of $H^{(i)}_j$ in the complete graph such as first the instance of $G(n,q_i)$ contains a copy of $T$ and second the independent instance of $G(n,\Delta p)$ contains a copy of $H^{(i)}_j \setminus T$. By a direct union bound this probability is at most $$\sum_{j=1,\ldots,C_i, T \subset   H^{(i)}_j} n^{v(H^{(i)}_j)}q_i^{e(T)}(\Delta p)^{e(H^{(i)}_j)-e(T)}.$$

    Combining that $\max_{i=0,1\ldots,M} q_i \leq p_{1-\epsilon}=1-\Omega(1)$ and \eqref{eq:q_i_upp}, we can choose $\zeta>0$ small enough such that $\Delta p/p_c<1$. Moreover by \eqref{eq:crit}, $q_i=c'_3p_c=c_3n^{-v(H^{(i)}_j)/e(H^{(i)}_j)}$ for some constants $c_3,c'_3>0.$ Combining the above for some constants $c_4,c_5,c_6>0,$ the probability of interest equals,
    \begin{align}
    & c_4\sum_{j=1,\ldots,C_i, T \subset   H^{(i)}_j} n^{v(H^{(i)}_j)}(p_c)^{e(H^{(i)}_j)}(\Delta p/p_c)^{e(H^{(i)}_j)-e(T)} \\
    & \leq c_5 \Delta p/p_c\sum_{j=1,\ldots,C_i, T \subset   H^{(i)}_j} n^{v(H^{(i)}_j)}(p_c)^{e(H^{(i)}_j)}\\
    & \leq c_6 (q_{\min\{i+2,M\}}-q_i)/(p_c(1-q_i))=O(\zeta),
    \end{align}

    where in the last inequality we used that $H^{(i)}_j$ is of constant size and in the last equality \eqref{eq:q_i_upp}. The lemma follows by choosing $\zeta$ sufficiently small.
\end{proof}

Notice that given the Lemma \ref{lem:max} and \eqref{eq:fri}, for sufficiently small $\zeta>0$ we have that for all $i,$

\begin{align}\label{eq:q_i_final}
\sup_{q \in [q_i,q_{i+1}]} \max_{m \in \{i-1,i,i+1\}} | \E_{q }f-\E_{q} g_m| \leq 7\gamma. 
\end{align}Indeed, first observe that by choosing $\zeta>0$ small enough, by Lemma \ref{lem:max} for any $i-1 \leq j_1 \leq j_2 \leq i+1,$
\begin{align}\label{eq:cor_lem}
    |\E_{q_{j_2} }g_m-\E_{q_{j_1}} g_m| \leq \gamma.
\end{align}and since $q_i=p_{\epsilon+i\zeta}$ by definition, 
\begin{align}\label{eq:cor_lem_2}
    |\E_{q_{j_2} }f-\E_{q_{j_1}} f|=(j_2-j_1)\zeta \leq 2\zeta \leq \gamma.
\end{align}
Combining \eqref{eq:cor_lem}, \eqref{eq:cor_lem_2} with \eqref{eq:fri}, for any $i=0,1,\ldots,M-1$, $q \in [q_i,q_{i+1}]$ and $i-1 \leq m \leq i+1$ it holds
\begin{align}\label{eq:approx}
 | \E_{q }f-\E_{q} g_m| \leq 7\gamma.
    \end{align}
    Indeed, by a series of triangle inequalities
\begin{align*}
    | \E_{q }f-\E_{q} g_m|
    &\leq | \E_{q }f-\E_{q_i} f|+| \E_{q_i }f-\E_{q_i} g_i|+|\E_{q_i }g_i-\E_{q_i} g_m|+|\E_{q_i }g_m-\E_{q} g_m|\\
    & \leq | \E_{q_{i+1} }f-\E_{q_i} f|+2| \E_{q_i }f-\E_{q_i} g_i|+|\E_{q_i }f-\E_{q_i} g_m|+|\E_{q_{i+1} }g_m-\E_{q_i} g_m|\\
    & \leq 4\gamma+|\E_{q_i }f-\E_{q_i} g_m|\\
    & \leq 4\gamma+|\E_{q_i }f-\E_{q_m} f|+| \E_{q_m }f-\E_{q_m} g_m|+|\E_{q_{m} }g_m-\E_{q_{i}} g_m|\\
    & \leq 7\gamma.
\end{align*}As $\gamma>0$ is arbitrary the above concludes the proof.

\end{proof}

\begin{proof}[Proof of Lemma \ref{lem:counting}]
    Notice that for any $ a \in [\alpha, 1-\alpha]$ and $0 \leq \zeta \leq 1-\alpha-a$, \eqref{eq:deriv} \ and a direct application of the mean value theorem implies
   
   \begin{align*}
      C(p_{a+\zeta}-p_{a}) \geq p_{a}\zeta \geq p_{\alpha} \zeta.
   \end{align*}By \eqref{eq:bol_thom} we conclude that for some constant $1>c'=c'(\alpha)>0,$ it must hold for any $ a \in [\alpha, 1-\alpha]$ and $0 \leq \zeta \leq 1-\alpha-a$,
   \begin{align}\label{eq:gap}
      p_{a+\zeta}-p_{a} \geq c' p_c \zeta.
   \end{align}Since by definition $q_i:=p_{\alpha+i\zeta}$,  by \eqref{eq:gap}, for all $0 \leq i \leq M-1,$ 

    \begin{align}\label{eq:q_i}
        |q_{i+1}-q_i| \geq  \min_{0 \leq k \leq (1-2\alpha)/\zeta-1} p_{\alpha+k \zeta}-p_{\alpha+(k-1)\zeta} \geq c' \zeta p_c.
    \end{align}

    Since we focus on $p \in [q_0,q_M]$, by \eqref{eq:bol_thom} all $q$ of interest satisfy $q=\Theta(p_c).$ Hence, the bias of the $S$ independent bits satisfies $$b(p)=1-e^{-Kp}+o(1)=1-e^{-p/p_c}+o(1)=\Theta(1).$$Moreover, notice that $b(p)$ is increasing with $p$.

By \eqref{eq:q_i} we know that for all $i$ $q_{i+1}-q_i \geq c' \zeta p_c$ and therefore for all $i=0,\ldots,M-1$, using \eqref{eq:bol_thom} again, there exists a $0<C_0(\epsilon,\zeta)<b_0/2$ such that \begin{align*}
    b(q_{i+1})-b(q_i)&=e^{-q_i/p_c}(1-e^{-(q_{i+1}-q_i)/p_c})+o(1) \\
    &\geq e^{-p_{1-\epsilon}/p_c}(1-e^{-c' \zeta})+o(1) \\
    &\geq 2C_0(\epsilon,\zeta)=\Theta(1).
\end{align*}In particular, by the law of large numbers, for any $q \in [q_i,q_{i+1}]$ the probability that the $S=\omega(1)$ independent bits with bias  $b(q)$ have average value outside $(b(q_i)-C_0(\epsilon,\zeta),b(q_{i+1})+C_0(\epsilon,\zeta))$ is $o(1).$  From this, both conclusions of the lemma directly follow.
    
\end{proof}

\section{Acknowledgments}

The authors are thankful to Nike Sun, Manolis Zampetakis, Fotis Iliopoulos, Rahul Santhanam and Moshe Vardi for helpful discussions. D.G. was supported during this project by NSF grants CISE-2233897 and DMS-2015517. E.M. and I.Z. were supported during this project by the Vannevar Bush Faculty Fellowship ONR-N00014-20-1-2826, Theoretical Foundations of Deep Learning (NSF DMS-2031883.) and the Simons Investigator award (622132).

   \newpage
\bibliographystyle{alpha}
\bibliography{bibliography-06.2023,pc}

\newcommand{\etalchar}[1]{$^{#1}$}
\begin{thebibliography}{COGHK{\etalchar{+}}22}

\bibitem[Aar10]{aaronson2010bqp}
Scott Aaronson.
\newblock Bqp and the polynomial hierarchy.
\newblock In {\em Proceedings of the forty-second ACM symposium on Theory of
  computing}, pages 141--150, 2010.

\bibitem[AB87]{alon1987monotone}
Noga Alon and Ravi~B Boppana.
\newblock The monotone circuit complexity of boolean functions.
\newblock {\em Combinatorica}, 7:1--22, 1987.

\bibitem[AB09]{arora2009computational}
Sanjeev Arora and Boaz Barak.
\newblock {\em Computational complexity: a modern approach}.
\newblock Cambridge University Press, 2009.

\bibitem[ACO08]{achlioptas2008algorithmic}
Dimitris Achlioptas and Amin Coja-Oghlan.
\newblock Algorithmic barriers from phase transitions.
\newblock In {\em Foundations of Computer Science, 2008. FOCS'08. IEEE 49th
  Annual IEEE Symposium on}, pages 793--802. IEEE, 2008.

\bibitem[AKS98]{alon1998finding}
Noga Alon, Michael Krivelevich, and Benny Sudakov.
\newblock Finding a large hidden clique in a random graph.
\newblock {\em Random Structures and Algorithms}, 13(3-4):457--466, 1998.

\bibitem[Ama09]{amano2009bounds}
Kazuyuki Amano.
\newblock Bounds on the size of small depth circuits for approximating
  majority.
\newblock In {\em International Colloquium on Automata, Languages, and
  Programming}, pages 59--70. Springer, 2009.

\bibitem[Ama10]{amano2010k}
Kazuyuki Amano.
\newblock k-subgraph isomorphism on ac0 circuits.
\newblock {\em Computational Complexity}, 19(2):183--210, 2010.

\bibitem[AS16]{alon2016probabilistic}
Noga Alon and Joel~H Spencer.
\newblock {\em The probabilistic method}.
\newblock John Wiley \& Sons, 2016.

\bibitem[BAGJar]{arous2018algorithmic}
Gerard Ben~Arous, Reza Gheissari, and Aukosh Jagannath.
\newblock Algorithmic thresholds for tensor pca.
\newblock {\em The Annals of Probability}, to appear.

\bibitem[BBC{\etalchar{+}}01]{bollobas2001scaling}
B{\'e}la Bollob{\'a}s, Christian Borgs, Jennifer~T Chayes, Jeong~Han Kim, and
  David~B Wilson.
\newblock The scaling window of the 2-sat transition.
\newblock {\em Random Structures \& Algorithms}, 18(3):201--256, 2001.

\bibitem[BL15]{bogdanov2015homomorphic}
Andrej Bogdanov and Chin~Ho Lee.
\newblock Homomorphic evaluation requires depth.
\newblock In {\em Theory of Cryptography Conference}, pages 365--371. Springer,
  2015.

\bibitem[BT87]{bollobas1987threshold}
B{\'e}la Bollob{\'a}s and Arthur~G Thomason.
\newblock Threshold functions.
\newblock {\em Combinatorica}, 7(1):35--38, 1987.

\bibitem[BV10]{brody2010coin}
Joshua Brody and Elad Verbin.
\newblock The coin problem and pseudorandomness for branching programs.
\newblock In {\em 2010 IEEE 51st Annual Symposium on Foundations of Computer
  Science}, pages 30--39. IEEE, 2010.

\bibitem[CDA{\etalchar{+}}00]{coarfa2000random}
Cristian Coarfa, Demetrios~D Demopoulos, Alfonso San~Miguel Aguirre, Devika
  Subramanian, and Moshe~Y Vardi.
\newblock Random 3-sat: The plot thickens.
\newblock In {\em International Conference on Principles and Practice of
  Constraint Programming}, pages 143--159. Springer, 2000.

\bibitem[CMZ23]{chen2023almost}
Zongchen Chen, Elchanan Mossel, and Ilias Zadik.
\newblock Almost-linear planted cliques elude the metropolis process.
\newblock In {\em Proceedings of the 2023 Annual ACM-SIAM Symposium on Discrete
  Algorithms (SODA)}, pages 4504--4539. SIAM, 2023.

\bibitem[COGHK{\etalchar{+}}22]{coja2022statistical}
Amin Coja-Oghlan, Oliver Gebhard, Max Hahn-Klimroth, Alexander~S Wein, and
  Ilias Zadik.
\newblock Statistical and computational phase transitions in group testing.
\newblock In {\em Conference on Learning Theory}, pages 4764--4781. PMLR, 2022.

\bibitem[Coo71]{Cook71}
Stephen~A. Cook.
\newblock The complexity of theorem-proving procedures.
\newblock In {\em Proceedings of the Third Annual ACM Symposium on Theory of
  Computing}, STOC '71, page 151–158, New York, NY, USA, 1971. Association
  for Computing Machinery.

\bibitem[CR92]{Chvatal92}
V.~Chvatal and B.~Reed.
\newblock Mick gets some (the odds are on his side) (satisfiability).
\newblock In {\em Proceedings., 33rd Annual Symposium on Foundations of
  Computer Science}, pages 620--627, 1992.

\bibitem[DSS15]{ding2015proof}
Jian Ding, Allan Sly, and Nike Sun.
\newblock Proof of the satisfiability conjecture for large k.
\newblock In {\em Proceedings of the forty-seventh annual ACM symposium on
  Theory of computing}, pages 59--68, 2015.

\bibitem[FGR{\etalchar{+}}17]{feldman2017statistical}
Vitaly Feldman, Elena Grigorescu, Lev Reyzin, Santosh~S Vempala, and Ying Xiao.
\newblock Statistical algorithms and a lower bound for detecting planted
  cliques.
\newblock {\em Journal of the ACM (JACM)}, 64(2):1--37, 2017.

\bibitem[FJS91]{furst1991improved}
Merrick~L Furst, Jeffrey~C Jackson, and Sean~W Smith.
\newblock Improved learning of ac 0 functions.
\newblock In {\em COLT}, volume~91, pages 317--325, 1991.

\bibitem[Fri90]{frieze1990independence}
Alan~M Frieze.
\newblock On the independence number of random graphs.
\newblock {\em Discrete Mathematics}, 81(2):171--175, 1990.

\bibitem[Fri99]{Friedgut}
E.~Friedgut.
\newblock Sharp thresholds of graph proprties, and the $k$-{SAT} problem.
\newblock {\em J. Amer. Math. Soc.}, 4:1017--1054, 1999.

\bibitem[GJW20]{gamarnik2020low}
David Gamarnik, Aukosh Jagannath, and Alexander~S Wein.
\newblock Low-degree hardness of random optimization problems.
\newblock {\em arXiv preprint arXiv:2004.12063}, 2020.

\bibitem[Goe96]{goerdt1996threshold}
Andreas Goerdt.
\newblock A threshold for unsatisfiability.
\newblock {\em Journal of Computer and System Sciences}, 53(3):469--486, 1996.

\bibitem[GZ22]{gamarnik2022sparse}
David Gamarnik and Ilias Zadik.
\newblock Sparse high-dimensional linear regression. estimating squared error
  and a phase transition.
\newblock {\em The Annals of Statistics}, 50(2):880--903, 2022.

\bibitem[Jer92]{jerrum1992large}
Mark Jerrum.
\newblock Large cliques elude the metropolis process.
\newblock {\em Random Structures \& Algorithms}, 3(4):347--359, 1992.

\bibitem[J{\L}R00]{JansonBook}
S.~Janson, T.~{\L}uczak, and A.~Rucinski.
\newblock {\em Random Graphs}.
\newblock John Wiley and Sons, Inc., 2000.

\bibitem[Kel12]{keller2012simple}
Nathan Keller.
\newblock A simple reduction from a biased measure on the discrete cube to the
  uniform measure.
\newblock {\em European Journal of Combinatorics}, 33(8):1943--1957, 2012.

\bibitem[KS05]{KalaiSafra}
Gil Kalai and Shmuel Safra.
\newblock {Threshold Phenomena and Influence: Perspectives from Mathematics,
  Computer Science, and Economics}.
\newblock In {\em {Computational Complexity and Statistical Physics}}. Oxford
  University Press, 12 2005.

\bibitem[LMN93]{linial1993constant}
Nathan Linial, Yishay Mansour, and Noam Nisan.
\newblock Constant depth circuits, fourier transform, and learnability.
\newblock {\em Journal of the ACM (JACM)}, 40(3):607--620, 1993.

\bibitem[LRR17]{li2017ac}
Yuan Li, Alexander Razborov, and Benjamin Rossman.
\newblock On the ac\^{}0 complexity of subgraph isomorphism.
\newblock {\em SIAM Journal on Computing}, 46(3):936--971, 2017.

\bibitem[Lyn86]{lynch1986depth}
James~F Lynch.
\newblock A depth-size tradeoff for boolean circuits with unbounded fan-in.
\newblock In {\em Structure in Complexity Theory: Proceedings of the Conference
  held at the University of California, Berkeley, California, June 2--5, 1986},
  pages 234--248. Springer, 1986.

\bibitem[ML96]{mitchell1996some}
David~G Mitchell and Hector~J Levesque.
\newblock Some pitfalls for experimenters with random sat.
\newblock {\em Artificial Intelligence}, 81(1-2):111--125, 1996.

\bibitem[MMS20]{moitra2020parallels}
Ankur Moitra, Elchanan Mossel, and Colin Sandon.
\newblock Parallels between phase transitions and circuit complexity?
\newblock In {\em Conference on Learning Theory}, pages 2910--2946. PMLR, 2020.

\bibitem[MNWS{\etalchar{+}}23]{mossel2023sharp}
Elchanan Mossel, Jonathan Niles-Weed, Youngtak Sohn, Nike Sun, and Ilias Zadik.
\newblock Sharp thresholds in inference of planted subgraphs.
\newblock In {\em The Thirty Sixth Annual Conference on Learning Theory}, pages
  5573--5577. PMLR, 2023.

\bibitem[MZK{\etalchar{+}}99]{monasson1999determining}
R{\'e}mi Monasson, Riccardo Zecchina, Scott Kirkpatrick, Bart Selman, and
  Lidror Troyansky.
\newblock Determining computational complexity from characteristic ?phase
  transitions?
\newblock {\em Nature}, 400(6740):133--137, 1999.

\bibitem[NWZ20]{niles2020all}
Jonathan Niles-Weed and Ilias Zadik.
\newblock The all-or-nothing phenomenon in sparse tensor pca.
\newblock {\em Advances in Neural Information Processing Systems},
  33:17674--17684, 2020.

\bibitem[Raz85]{Razborov-circuit}
A.~Razborov.
\newblock Lower bounds on the monotone complexity of some boolean functions.
\newblock {\em Doklady Akademii Nauk, SSSR}, 31:354--357, 1985.

\bibitem[RM14]{richard2014statistical}
Emile Richard and Andrea Montanari.
\newblock A statistical model for tensor pca.
\newblock In {\em Advances in Neural Information Processing Systems}, pages
  2897--2905, 2014.

\bibitem[Ros08]{rossman2008constant}
Benjamin Rossman.
\newblock On the constant-depth complexity of k-clique.
\newblock In {\em Proceedings of the fortieth annual ACM symposium on Theory of
  computing}, pages 721--730, 2008.

\bibitem[RST15]{rossman2015average}
Benjamin Rossman, Rocco~A Servedio, and Li-Yang Tan.
\newblock An average-case depth hierarchy theorem for boolean circuits.
\newblock In {\em 2015 IEEE 56th Annual Symposium on Foundations of Computer
  Science}, pages 1030--1048. IEEE, 2015.

\bibitem[RXZ21]{reeves2021all}
Galen Reeves, Jiaming Xu, and Ilias Zadik.
\newblock The all-or-nothing phenomenon in sparse linear regression.
\newblock {\em Mathematical Statistics and Learning}, 3(3):259--313, 2021.

\bibitem[SK96]{selman1996critical}
Bart Selman and Scott Kirkpatrick.
\newblock Critical behavior in the computational cost of satisfiability
  testing.
\newblock {\em Artificial Intelligence}, 81(1-2):273--295, 1996.

\bibitem[SML96]{selman1996generating}
Bart Selman, David~G Mitchell, and Hector~J Levesque.
\newblock Generating hard satisfiability problems.
\newblock {\em Artificial intelligence}, 81(1-2):17--29, 1996.

\bibitem[Smo93]{Smolensky}
R.~Smolensky.
\newblock On representations by low-degree polynomials.
\newblock In {\em Proceedings of 1993 IEEE 34th Annual Foundations of Computer
  Science}, pages 130--138, 1993.

\bibitem[SW22]{schramm2022computational}
Tselil Schramm and Alexander~S Wein.
\newblock Computational barriers to estimation from low-degree polynomials.
\newblock {\em The Annals of Statistics}, 50(3):1833--1858, 2022.

\bibitem[Vol99]{vollmer1999introduction}
Heribert Vollmer.
\newblock {\em Introduction to circuit complexity: a uniform approach}.
\newblock Springer Science \& Business Media, 1999.

\bibitem[ZSWB22]{zadik2022lattice}
Ilias Zadik, Min~Jae Song, Alexander~S Wein, and Joan Bruna.
\newblock Lattice-based methods surpass sum-of-squares in clustering.
\newblock In {\em Conference on Learning Theory}, pages 1247--1248. PMLR, 2022.

\end{thebibliography}

\end{document}